\documentclass[superscriptaddress, prx, longbibliography,twocolumn, nofootinbib, aps]{revtex4-2}

\usepackage{graphicx}
\usepackage{dcolumn}
\usepackage{bm}

\makeatletter
\newsavebox{\@brx}
\newcommand{\llangle}[1][]{\savebox{\@brx}{\(\m@th{#1\langle}\)}%
  \mathopen{\copy\@brx\kern-0.5\wd\@brx\usebox{\@brx}}}
\newcommand{\rrangle}[1][]{\savebox{\@brx}{\(\m@th{#1\rangle}\)}%
  \mathclose{\copy\@brx\kern-0.5\wd\@brx\usebox{\@brx}}}
\makeatother

\makeatletter
\def\l@subsubsection#1#2{}
\makeatother

\newcommand{\sket}[1]{|#1\rangle}
\newcommand{\sbra}[1]{\langle#1|}
\newcommand{\sbraket}[2]{\langle#1|#2\rangle}

\newcommand{\opket}[1]{|#1\rrangle}
\newcommand{\opbra}[1]{\llangle #1|}
\newcommand{\opbraket}[2]{\llangle #1 | #2 \rrangle}
\newcommand{\stkout}[1]{\ifmmode\text{\sout{\ensuremath{#1}}}\else\sout{#1}\fi}

\usepackage{dsfont}
\usepackage{wrapfig}
\usepackage{url}
\usepackage{physics}
\usepackage{amsmath,amssymb,amstext}
\usepackage{amsthm}
\usepackage{xcolor}
\usepackage[flushleft]{threeparttable}
\usepackage[bb=boondox]{mathalfa}
\usepackage{enumitem} 
\usepackage{comment}
\usepackage{float}

\usepackage[makeroom]{cancel}
\usepackage{hyperref}
\hypersetup{colorlinks,breaklinks,
            urlcolor=[rgb]{0.25,0.41,0.88},
            linkcolor=[rgb]{0.25,0.41,0.88},
            citecolor=[rgb]{0.25,0.41,0.88}}
\newtheorem{definition}{Definition}
\newtheorem{theorem}{Theorem}

\newtheorem{lemma}{Lemma}

\newtheorem{corollary}{Corollary}




\begin{document}

\title{Strong-to-Weak Symmetry Breaking Phases in Steady States of Quantum Operations}

\newcommand{\TUM}{\affiliation{Technical University of Munich, TUM School of Natural Sciences, Physics Department, 85748 Garching, Germany}}
\newcommand{\MCQST}{\affiliation{Munich Center for Quantum Science and Technology (MCQST), Schellingstr. 4, 80799 M{\"u}nchen, Germany}}
\newcommand{\FU}{\affiliation{Dahlem Center for Complex Quantum Systems, Freie Universität Berlin, 14195 Berlin, Germany}}

\author{Niklas Ziereis} \TUM \MCQST \FU
\author{Sanjay Moudgalya} \TUM \MCQST 
\author{Michael Knap} \TUM \MCQST 

\date{\today}
             
\begin{abstract}
Mixed states can exhibit two distinct kinds of symmetries, either on the level of the individual states (strong symmetry), or only on the level of the ensemble (weak symmetry).
Strong symmetries can be spontaneously broken down to weak ones, a mechanism referred to as Strong-to-Weak Spontaneous Symmetry Breaking (SW-SSB).
In this work, we first show that maximally mixed symmetric density matrices, which appear, for example, as steady states of symmetric random quantum circuits have SW-SSB when the symmetry is an on-site representation of a compact Lie or finite group. 
We then show that this can be regarded as an isolated point within an entire SW-SSB phase that is stable to more general quantum operations such as measurements followed by weak postselection.
With sufficiently strong postselection, a second-order transition can be driven to a phase where the steady state is strongly symmetric. 
We provide analytical and numerical results for such SW-SSB phases and their transitions for both abelian $\mathbb{Z}_2$ and non-abelian $S_3$ symmetries in the steady state of Brownian random quantum circuits with measurements.
We also show that such continuous SW-SSB transitions are absent in the steady-state of general strongly symmetric, trace-preserving quantum channels (including unital, Brownian, or Lindbladian dynamics) by analyzing the degeneracies of the steady states in the presence of symmetries.
Our results demonstrate robust SW-SSB phases and their transitions in the steady states of noisy quantum operations, and provide a framework for realizing various kinds of mixed-state quantum phases based on their symmetries.
\end{abstract}

\maketitle

\tableofcontents

\section{Introduction}
Gapped quantum phases in thermal equilibrium are defined by their stability under finite-depth local unitary circuits. 
Broadly speaking, such phases are characterized as either trivial or topologically ordered.
Including symmetries can further give rise to symmetry-broken or symmetry-protected phases of matter.
However, quantum systems are generally never truly isolated from their environment.
Instead, they can be thought of as being continuously measured by it.
The information obtained from such measurements is typically inaccessible to the observer, and the system is described by a mixed quantum state.
Yet, the way measurements affect complex quantum states challenges our understanding of quantum phases of matter out of equilibrium and their universal aspects.
It is therefore pertinent to develop a framework for mixed-state quantum phases of matter, to understand their symmetries, and to identify simple examples that illustrate their essential features.
A key aspect in determining the potential of a quantum state for information processing is therefore understanding its symmetries.
Recent research has focused on sharpening and further contrasting the differences between pure- and mixed-state symmetries~\cite{buca_note_2012, albert2014symmetries,  bao_symmetry_2021, ogunnaike_unifying_2023, moudgalya_symmetries_2023, chen_strong--weak_2024, huang_hydrodynamics_2024, kuno_strong--weak_2024, lee_quantum_2023, lessa_strong--weak_2024, ma_average_2023, sala_spontaneous_2024, moharramipour_symmetry_2024, li_highly-entangled_2024, lee_symmetry_2025}.
The latter come in two different forms~\cite{buca_note_2012, albert2014symmetries}.
An ensemble is said to be \textit{strongly symmetric}, when each state in the ensemble is symmetric on its own and carries the same symmetry charge.
By contrast, the ensemble is said to be \textit{weakly symmetric}, when the whole ensemble is symmetric on average, while the individual states are allowed to break the symmetry.
Recent works have illustrated that certain quantum systems can spontaneously break strong symmetries to weak ones, yielding a phase transition between a strongly symmetric and weakly symmetric phase. 
This phenomenon, dubbed as Strong-to-Weak Spontaneous Symmetry Breaking (SW-SSB) is measured by correlation functions $C$ that are non-linear in the density matrix. 
SW-SSB transitions have been explored in the following two different settings.
First, quantum systems with charge conservation in which the charge is continuously measured can undergo a transitions from a fuzzy to a sharp phase~\cite{agrawal_entanglement_2022, barratt_field_2022} in their \textit{steady state}.
Such information-theoretic transitions can be understood as SW-SSB transitions~\cite{Singh2025, Zerba2025} in the following way.
Denoting by $\mathbb{E}$ the ensemble average over the measurement outcomes and circuit realizations, this transition is measured by $\mathbb{E}[C(\ket{\Psi}\bra{\Psi}_{t \to \infty})]$, i.e., it requires one to resolve single trajectories~\cite{potter2022entanglement, fisher_random_2023}.
Second, density matrices that are strongly symmetric can exhibit SW-SSB under the application of certain \textit{finite-depth} quantum channels~\cite{lessa_strong--weak_2024,sala_spontaneous_2024,gu_spontaneous_2024,chen_strong--weak_2024, huang_hydrodynamics_2024,kuno_strong--weak_2024, lee_quantum_2023, ogunnaike_unifying_2023}.
In two spatial dimensions or higher, tuning parameters of these channels can also lead to a transition to a different phase where the density matrix exhibits strong symmetry \cite{lessa_strong--weak_2024, sala_spontaneous_2024}.
This is a transition in the averaged density matrix 
$\mathbb{E}[\rho]$ and is measured by $C(\mathbb{E}[\rho])$.
These two types of transition therefore differ in the objects showing the transition (single trajectory vs. density matrices) and in the way the final state is reached (steady state vs. single application of a channel).

\begin{figure}
    \centering
    \includegraphics[width=1\linewidth]{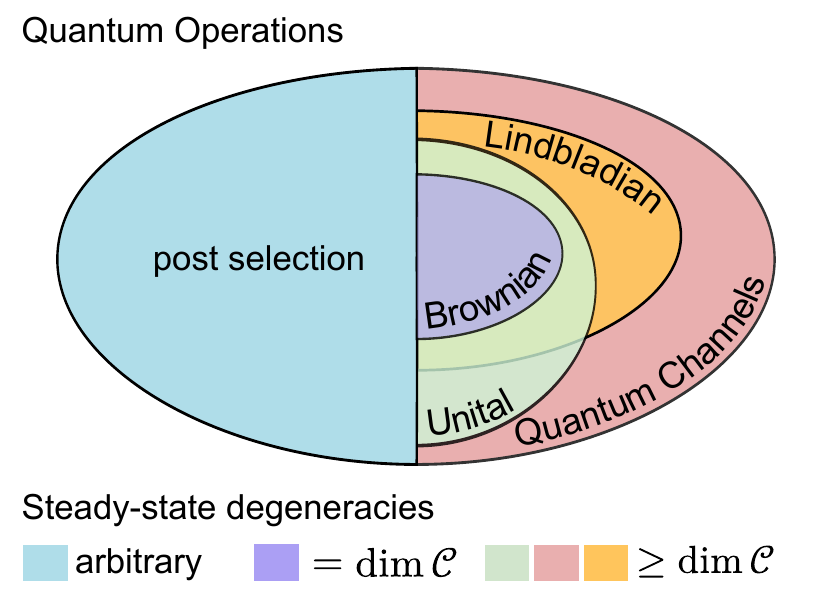}
    \caption{\textbf{Landscape of SW-SSB in the steady state of quantum operations:} Quantum operations obtained from physical, non-trace increasing maps  either consist of quantum channels (i.e., completely positive trace preserving maps), which include Lindbladian, unital, and Brownian circuits, or of non-trace preserving quantum operations obtained by postselection (see Sec.~\ref{sec:DynamicsOfMixedStates} for a review of these operations).
    We show that for strongly symmetric quantum channels the steady-state degeneracy is lower bounded by the dimension of the commutant $\dim \mathcal{C}$ and that their steady states are weakly symmetric. To obtain a strongly symmetric steady-state, the degeneracy has to be lowered, which can only be achieved by postselection.}
    \label{fig: where transition}
\end{figure}

A natural question then arises: When starting from a strongly symmetric initial state, can there be a \textit{transition} between a strongly symmetric phase and one exhibiting SW-SSB in the \textit{steady state} of \textit{quantum operations}?
In this work, we address this challenge and obtain the following key findings (see Fig.~\ref{fig: where transition} for an illustration):
\begin{enumerate}[label=(\roman*)]
    \item When one allows for postselection (i.e., breaking of the trace preservation property of a quantum channel), a continuous phase transition can occur at a finite postselection rate.
    For low postselection rates, the steady state is in the SW-SSB phase, while it is strongly symmetric for large rates.
    \item Postselection is necessary to drive a transition.
    This comes from the fact that SW-SSB of a group $G$ reduces to ordinary symmetry breaking of the form $G\times G\to G_{\mathrm{diag}}$ in the doubled Hilbert space.
    For a Landau-type transition this requires a change in the steady-state degeneracy, where the degeneracy in the symmetric phase is lower than in the symmetry broken phase.
    However, the steady-state degeneracy of a symmetric quantum channel is always lower bounded by $\dim \mathcal{C}$, where $\mathcal{C}$ is the associative algebra of the symmetry operators, also referred to as the \textit{commutant} of the symmetric operators that realize the channel.
    Therefore, it can only be reduced by non-trace preserving operations.
    Since any non-trace preserving quantum operation can be interpreted as postselection on a generalized measurement \cite{nielsen_quantum_2010}, we conclude that such a transition can only be achieved by postselection, see Fig.~\ref{fig: where transition}.
\end{enumerate}
Our work is structured as follows:
Sec.~\ref{sec: 2 SW-SSB} reviews the notions of strong and weak symmetries of mixed states. 
In Sec.~\ref{sec: 3 infinite temperature} we confirm previous expectations that a class of density matrices, which have been referred to as \textit{Maximally Mixed Invariant States (MMIS)}, exhibit SW-SSB \cite{sala_spontaneous_2024, lessa_strong--weak_2024, moharramipour_symmetry_2024} by providing a ready-to-use formula for the Rényi-2 correlator (which is a measure of SW-SSB) of these states for compact Lie group symmetries with exponentially growing symmetry sector, which includes finite groups.
By relating the Rényi-2 correlator to other measures of SW-SSB we conclude that this result is independent of the correlator used to define SW-SSB, thereby establishing MMIS as ideal examples of SW-SSB.
Our results confirm that symmetric random quantum circuits are a natural class of systems that exhibit strong-to-weak symmetry breaking in their steady states, as they thermalize to MMIS on average when initiated with a strongly symmetric state \cite{ogunnaike_unifying_2023, moudgalya_symmetries_2023, Mitsuhashi_2025}. 
In Sec.~\ref{sec:DynamicsOfMixedStates} we recall certain notions of maps between mixed quantum states and  review that the MMIS are the steady states of a class of Lindbladians known as Brownian circuits~\cite{ogunnaike_unifying_2023, moudgalya_symmetries_2023}, which serve as continuous-time versions of random quantum circuits.
In Sec.~\ref{sec: phase transition} we introduce a simple model that combines the dynamics of Brownian circuits with a measurement and postselection protocol.
We find that SW-SSB of the MMIS is stable to on-site measurements and a moderate amount of postselection towards a strongly symmetric target state. 
Further, for strong postselection, we can drive a phase transition in the averaged steady state of random quantum circuits between a weakly symmetric phase (which exhibits SW-SSB) and a strongly symmetric phase.
We demonstrate this mechanism analytically and numerically for both abelian $\mathbb{Z}_2$ and non-abelian $S_3$ symmetries in Brownian circuits. Furthermore, we show that the location of the SW-SSB transition in the steady-state density matrices does not depend on the correlator used.
In Sec.~\ref{sec: Lindbladian}, we determine that SW-SSB phase transitions cannot occur in the steady states of quantum channels by showing that the steady-state degeneracy of quantum channels can never be lower than that of a Brownian circuit with the same symmetry. We also construct an explicit example of Lindbladian dynamics where steering towards the same target state is performed by feedback to illustrate this general argument.
Thereby, the only way of driving a SW-SSB phase transition in steady states of quantum operations is via quantum operations that fail to be trace preserving. 
We provide and outlook and discussion in Sec.~\ref{sec: discussion} and technical details are relegated to the appendices.
\section{Strong-to-Weak Spontaneous Symmetry Breaking (SW-SSB)}
\label{sec: 2 SW-SSB}
In this section, we will review the notions of strong and weak symmetries for mixed quantum states~\cite{buca_note_2012, albert2014symmetries} and motivate the definitions of the Rényi-2 correlation function and the fidelity, which characterize SW-SSB.
\subsection{Pure state symmetries and order parameters}\label{subsec:pureSSB}
To set the stage, we first discuss Spontaneous Symmetry Breaking (SSB) in pure states.
A pure state $\ket{\Psi}$ is said to be \textit{symmetric} under the action of a group $\{U_g\}_{g\in G}$ if 
\begin{equation}    U_g\ket{\Psi}=e^{i\theta_g}\ket{\Psi}\label{eq: pure statate symmetry}
\end{equation}
for all $g\in G$. 
To give a definition of spontaneous symmetry breaking, let us first define an \textit{order parameter}, which is a strictly local operator $O_i$ (which could also have support over multiple sites in the vicinity of $i$) 
that transforms nontrivially under the symmetry group, i.e., $U_g O_i U_g^\dagger \neq O_i$. 
For example, a $\mathbb{Z}_2$ symmetry generated by $U_Z = \prod_i Z_i$, transforms the order parameter $X_i$ as $U_Z X_i U_Z^\dagger = -X_i$.
We then say that a state $\ket{\Psi}$ spontaneously breaks the symmetry (i.e., exhibits  SSB) if there exists an order parameter $O_i$ such that $\ket{\Psi}$ satisfies
\begin{equation}
     \lim_{|i-j| \to\infty}\frac{\bra{\Psi}O_i^\dagger O_j\ket{\Psi}}{\braket{\Psi}{\Psi}}\neq 0. \label{eq: pure state LRO}
\end{equation}
It should be emphasized that long-range correlations characterized by Eq.~\eqref{eq: pure state LRO} does not rule out the fact that $\ket{\Psi}$ obeys Eq.~\eqref{eq: pure statate symmetry}.
%
%
For example, the cat state $\frac{1}{\sqrt{2}}\left(\ket{+ \ldots + }+\ket{-\ldots -}\right)$, where $X_i\ket{\pm}_i = \pm \ket{\pm}_i$, is symmetric under the action of the $\mathbb{Z}_2$ symmetry $U_Z$, but it has SSB according to Eq.~\eqref{eq: pure statate symmetry} when taking $X_i$ as the local order parameter.
More abstractly, we can interpret the correlation function $\bra{\Psi}O_i^\dagger O_j\ket{\Psi}$ as a way of ``comparing" the states $O_i\ket{\Psi}$ and $O_j\ket{\Psi}$ via their overlap defined as
\begin{equation}
    \frac{\bra{\Psi}O_i^\dagger O_j\ket{\Psi}}{\bra{\Psi}\ket{\Psi}} =\frac{\bra{O_i\Psi}\ket{O_j \Psi}}{\bra{\Psi}\ket{\Psi}}. \label{eq: correlation function inner prod}
\end{equation}
This viewpoint of ``comparing" states will prove useful when the notion of SSB is generalized to mixed states discussed now.
\subsection{Symmetries of mixed states}
Symmetries of mixed states come in two forms \cite{buca_note_2012, albert2014symmetries, gu_spontaneous_2024, sala_spontaneous_2024, lessa_strong--weak_2024}.
A density matrix $\rho$ is said to be \textit{strongly symmetric} under a group $G$ generated by $\{U_g\}$ if 
\begin{equation}
    U_g \rho =e^{i\theta_g}\rho \label{eq: strong symmetry} \qquad \textit{(strong symmetry)}
\end{equation}
for all $g\in G$.
In this case $\rho$ can be diagonalized in terms of symmetric pure states that are eigenstates which share the same eigenvalues under the symmetry, i.e., obey Eq.~\eqref{eq: pure statate symmetry} with the same function $\theta_g$.
On the other hand, a density matrix $\rho$ is called \textit{weakly symmetric} if 
\begin{equation}
    U_g\rho U_g^\dagger =\rho \label{eq: weak symmetry} \qquad \textit{(weak symmetry)}
\end{equation}
for all $g\in G$.
Weakly symmetric states are density matrices of ensembles that are symmetric only on average, and the individual states within the ensemble do not need to be symmetric.
Note that the distinction between strong and weak symmetries is inherent to mixed states, as the density matrix of any symmetric pure state is necessarily strongly symmetric.
Mixed state symmetries are better understood in terms of doubled Hilbert spaces.
\begin{figure}
    \centering
    \includegraphics[width=1.0\linewidth]{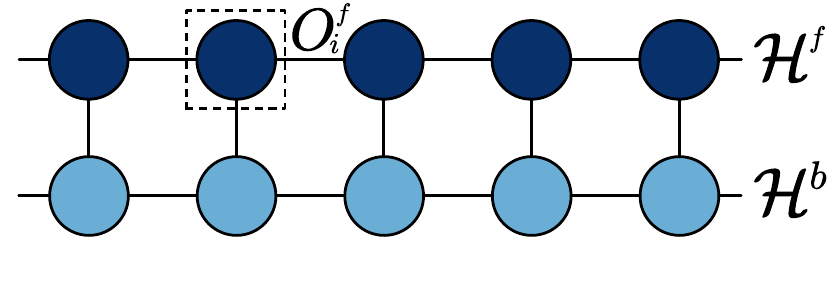}
    \caption{\textbf{The doubled Hilbert space:} The density matrix $\rho$ is described in terms of a pure state $\opket{\rho}$ living on the doubled Hilbert space $\mathcal{H}^f\otimes\mathcal{H}^b$. The new Hilbert space can be imagined as two copies of the original Hilbert space stacked stacked on top of each other. Operators of the form $O_i^{f/b}$ act nontrivially only on site $i$ of the forward/backward copy.}
    \label{fig: doubled Hspace}
\end{figure}
Given an orthonormal basis $\{\ket{b_n}\}$ of $\mathcal{H}$ and an operator $A = \sum_{nm}A_{nm}\ket{b_n}\bra{b_m}$ on $\mathcal{H}$ we write $\opket{A} = \sum_{nm}A_{nm}\ket{b_n}\otimes\ket{b_m} \in \mathcal{H}_f\otimes \mathcal{H}_b$.
The two copies of $\mathcal{H}$ are sometimes referred to as forward and backward copies.
In this way, we can understand the density matrix $\rho$ as an unnormalized pure state on double degrees of freedom $\mathcal{H}_f\otimes \mathcal{H}_b$ (see Fig.~\ref{fig: doubled Hspace}).
On the doubled Hilbert space, left/right multiplication of the density matrix $\rho$ by an operator $O_i$ corresponds to $O_i$ only acting on the forward/backward copy of the vectorized density matrix: 
\begin{align}
    O_i\rho \iff O_i\otimes \mathds{1}\opket{\rho},\quad \rho O_i^\dagger \iff \mathds{1}\otimes O_i^*\opket{\rho}.
\end{align}
We will often abbreviate $O_i^f:=O_i\otimes \mathds{1}$ and $O_i^{*b}:=\mathds{1}\otimes O_i^*$.
As an inner product we take the Hilbert-Schmidt inner product
\begin{equation} \opbraket{A}{B}= \Tr[A^\dagger B].
\end{equation}
By acting with $\dagger$ on both sides of Eq.~\eqref{eq: strong symmetry} we see that the vectorized density matrix $\opket{\rho}$ is separately symmetric under the action of $U_g\otimes \mathbb{1} =U_g^f$ and $\mathbb{1}\otimes U_g^{*} =U_g^{*b}$; and Eq.~\eqref{eq: strong symmetry} translates to $U_g^f\opket{\rho}=U_g^b\opket{\rho}=e^{i\theta_g}\opket{\rho}$ in analogy with Eq.~\eqref{eq: pure statate symmetry}.
Hence, the strong symmetry corresponds to a $G\times G$ symmetry in the doubled Hilbert space. 
By contrast, if $\rho$ is weakly symmetric, $\opket{\rho}$ is symmetric under the action of $U_g\otimes U_g^{*}$, which corresponds to being symmetric with respect to the diagonal subgroup $G_{\mathrm{diag}}$ of the strong symmetry.
We can then say that any density matrix, which in the doubled Hilbert space exhibits spontaneous symmetry breaking of the form $G \times G \to G_{\mathrm{diag}}$ (in a manner analogous to pure states, as we discuss in the next subsection), exhibits Strong-to-Weak SSB, or SW-SSB in short.
This should be contrasted with conventional SSB, where for a thermal transition in the Gibbs ensemble $G_{diag} \to 1$, and for quantum phase transitions in pure states $G\times G \to 1$.

\subsection{Correlation functions for SW-SSB}
Given the symmetry breaking pattern of SW-SSB in the doubled Hilbert space, we can define a correlation function for SW-SSB, generalizing Eq.~\eqref{eq: correlation function inner prod} to mixed states.
For a local order parameter $O_i$, instead of comparing $\ket{O_i\Psi}$ and $ \ket{O_j\Psi}$, we now compare the states $\opket{O_i\rho O_i^\dagger}$ and $\opket{O_j\rho O_j^\dagger}$ on the doubled Hilbert space.
In complete analogy to Eq.~\eqref{eq: correlation function inner prod}, we can define the correlator that measures SW-SSB as
\begin{equation}
    C^{sw}(i,j)[\rho] := \frac{\opbraket{O_i\rho O_i^\dagger}{O_j\rho O_j^\dagger}}{\opbraket{\rho}{\rho}}=\frac{\Tr[O_i\rho O_i^\dagger O_j\rho O_j^\dagger]}{\Tr[\rho^2]}. \nonumber
\end{equation}
This is referred to as the Rényi-2 correlator $C^{sw}$ in the literature \cite{lessa_strong--weak_2024, sala_spontaneous_2024, kuno_strong--weak_2024,sun_scheme_2025}.
We can use this correlator to arrive at a definition of SW-SSB:
\begin{subequations}
\begin{align}
&\lim_{|i-j|\to \infty}C^{sw}(i,j)[\rho]=\lim_{|i-j|\to \infty}\frac{\Tr[O_i\rho O_i^\dagger O_j\rho O_j^\dagger]}{\Tr[\rho^2]} \neq 0 \label{eq: R2 LRO}\\
&\lim_{|i-j|\to \infty}C^{w\emptyset}(i,j)[\rho]=\lim_{|i-j|\to \infty}\Tr[\rho O_i^\dagger O_j]=0. \label{eq: no R1 LRO}
\end{align}
\end{subequations}
Here, Eq.~\eqref{eq: no R1 LRO} ensures the absence of SSB of the weak symmetry, i.e., ensures the absence of conventional SSB. 
Similar to the case of pure states, Eq.~\eqref{eq: R2 LRO} does not imply that $\rho$ no longer obeys Eq.~\eqref{eq: strong symmetry}.
As an example, consider again the $\mathbb{Z}_2$ symmetry generated by $\prod_i Z_i$, the mixed state $\rho \propto \mathbb{1}+\prod_iZ_i$, and take $X_i$ as a local order parameter.
Then $\rho$ has SW-SSB according to Eq.~\eqref{eq: R2 LRO}, but still obeys Eq.~\eqref{eq: strong symmetry}.
The notion of SW-SSB is not uniquely defined. Another commonly used measure for comparing the ``states" $\opket{\rho}$ and $\opket{\sigma}$ is the fidelity $F(\rho,\sigma)=\Tr[\sqrt{\sqrt{\rho}\sigma\sqrt{\rho}}.]$ \footnote{Note that some sources define the fidelity as $F(\rho,\sigma)^2$.}.
Using this measure for comparison, SW-SSB can be defined as \cite{lessa_strong--weak_2024, sala_spontaneous_2024, moharramipour_symmetry_2024}
\begin{align}
    &\lim_{|i-j|\to \infty}F_O(i,j)[\rho] \neq 0,\nonumber\\
    &F_O(i,j)[\rho]  := \Tr[\sqrt{\sqrt{\rho}O_i^\dagger O_j\rho O_j^\dagger O_i\sqrt{\rho}}] \label{eq: fidelity correlator}
\end{align}
in addition to Eq.~\eqref{eq: no R1 LRO}.
It is known that for some density matrices this definition is not equivalent to the one in terms of the Rényi-2 correlator~\cite{lessa_strong--weak_2024}, introduced in Eq.~(\ref{eq: R2 LRO}).
In practice, the fidelity is significantly harder to evaluate than the Rényi-2 correlator as it does not have a simple expression in the doubled Hilbert space.
However it has certain mathematical properties that the Rényi-2 correlator lacks, e.g., long-range order defined in terms of $F_O$ has the following stability property:
If $\rho$ obeys Eq.~\eqref{eq: fidelity correlator}, then so does any $\mathcal{E}[\rho]$, where $\mathcal{E}$ is a symmetric low-depth quantum channel~\cite{lessa_strong--weak_2024}.

In this work, we will consider both the Rényi-2 and the fidelity correlator.
We derive a rigorous relation between the two for the case of thermal states in Sec. \ref{sec: 3 infinite temperature}.
Further, we show in Appendix \ref{sec: Equivalent measures of SW-SSB} Thm.~\ref{theorem: QuantumInfo correlators} that the definition of fidelity SW-SSB can be equivalently reformulated in terms of other quantum information theoretic quantities, thereby answering a conjecture from \cite{lessa_strong--weak_2024}.
In Sec.~\ref{sec: phase transition} we will present a model exhibiting a SW-SSB phase transition in its steady state by analytically analyzing the Rényi-2 correlator and numerically investigating the fidelity.
Strikingly, for this setup our numerical results indicate that the Rényi-2 correlator and the fidelity give rise to the same onset of SW-SSB for the steady-state density matrices.
As we will discuss later, this can be understood from the fact that the different correlators in $D$ dimensions act as different boundary correlators in the \textit{same} effective $D+1$ dimensional classical statistical mechanics model whose bulk is determined by the evolution to the steady state.
Since the bulk determines the properties of the phases, all boundary correlators will give rise to the same transition.
This should be contrasted with the finite-depth circuits studied in Ref.~\cite{lessa_strong--weak_2024}, for which examples can be constructed where the fidelity and the Rényi-2 correlator map onto \textit{different} $D$-dimensional classical statistical mechanics models, and hence yield different results for SW-SSB.
\section{Maximally Mixed Invariant States (MMIS)}
\label{sec: 3 infinite temperature}
In this section, we will  demonstrate that Maximally Mixed Invariant States (MMIS) exhibit SW-SSB for arbitrary symmetry groups in terms of the Rényi-2 correlator.
This idea is not new. For example, it was conjectured that all symmetric thermal states should have SW-SSB~\cite{lessa_strong--weak_2024}. 
Even before that, steady states of Brownian random circuits that explicitly break strong symmetries such as $\mathbb{Z}_2$ or $U(1)$ to their weak counterparts have been constructed~\cite{ogunnaike_unifying_2023, moudgalya_symmetries_2023}.
In addition, the Rényi-2 correlators has been investigated in sectors where all conserved charges are equal to zero for finite groups (i.e., $\theta_g=0$ in Eq.~\eqref{eq: pure statate symmetry})~\cite{sala_spontaneous_2024}.
It has also been argued that the fidelity correlator in such sectors should be non-zero at finite system sizes~\cite{moharramipour_symmetry_2024}, although the possibility that it decays to zero as $N \to \infty$ has not been fully ruled out.\footnote{Concretely, Ref.~\cite{moharramipour_symmetry_2024} argues that for any MMIS, $F_O(i,j)$ is strictly greater than $0$ for every finite system size and is independent of $i$ and $j$. However, these two conditions are insufficient to conclude that $F_O(i,j)[\rho^\infty_N] \xrightarrow[|i-j| \to \infty]{}>0$. For example, for the MMIS of a $U(1)$ symmetry with $\sum_iZ_i=0$ we have that $\Tr[(-\mathbf{S}_i\cdot \mathbf{S}_j) \rho^\infty_N]=\frac{N}{4N(N-1)}>0$, which is independent of $i,j$ but still decays to $0$ as $N \to \infty$.} 
Here, we obtain two novel results on MMIS.
First, we provide a simple formula for the Rényi-2 correlators of the MMIS in the limit $N \to \infty$ for arbitrary compact Lie groups, including finite groups.  
Second, we bound the fidelity correlator of these with their Rényi-2 correlators, and rigorously establish SW-SSB of these states irrespective of the correlator used.
In the following, we will consider systems that possess an on-site symmetry group $G$.
That is, given a unitary representation $\{u_g\}_{g\in G}$ of $G$ on the local Hilbert space $\mathcal{H}_{\mathrm{loc}}$ with dimension $d$, the group element $g$ is represented as $U_g = \bigotimes _{i=1}^Nu_g$ on the Hilbert space $\mathcal{H}_N = \bigotimes_{i=1}^N \mathcal{H}_{\mathrm{loc}}$ of a lattice containing $N$ qudits.
This covers most cases relevant in quantum many-body physics, including for example the conventional representations of $\mathbb{Z}_2$, $ U(1)$ and $SU(2)$. 
We will begin by introducing necessary terminology  before stating the main result of this section: a formula for the Rényi-2 correlator of the MMIS given in Eq.~\eqref{eq: formula R2 maximally mixed}.
\subsection{Scalar symmetry sectors}
A pure state $\ket{\Psi}$ is said to be symmetric under the action of $G$ if $U_g\ket{\Psi} = e^{i\theta_g}\ket{\Psi}$ for all $g \in G$.
The subspace containing all symmetric states that transform the same way under the action of $G$, i.e., that share the same function $\theta_g$, is referred to as a \textit{scalar} symmetry sector\footnote{In the language of representation theory, $\ket{\Psi}$  transforms under a 1d representation of $G$, and $V_\theta(N)$ is the isotypical component of the 1d irreducible representation labled by $\theta_g$.} 
\begin{equation}
V_\theta(N)=\{{\ket{\Psi}\in \mathcal{H}}_N|\; U_g \ket{\Psi}=e^{i\theta_g}\ket{\Psi}\}.\label{eq:scalarsymmetry}
\end{equation}
Physically, different functions $\theta_g$ correspond to different values of a conserved quantity.
For a $U(1)$ symmetry generated by $\sum_iZ_i$, different scalar symmetry sectors correspond to different eigenvalues of the magnetization $\sum_iZ_i$.
For a $\mathbb{Z}_2$ symmetry generated by $\prod_i Z_i$ the two different scalar symmetry sectors correspond to the subspaces of positive or negative global parity and are labeled by the eigenvalues of $\prod_i Z_i$. 
Importantly, for non-abelian symmetries, there are states that have well-defined conserved quantities, but do not lie within a scalar symmetry sector.
For example, for the standard $SU(2)$ symmetry, there is only one scalar symmetry sector corresponding to $S^2=0$.
For the purpose of SW-SSB we restrict ourselves to scalar symmetry sectors since this guarantees that all states \textit{within a certain symmetry sector} are strongly symmetric according to Eq.~\eqref{eq: strong symmetry}.
The \textit{Maximally Mixed Invariant State (MMIS) of $V_\theta(N)$}  is a \textit{symmetric infinite temperature state of $V_\theta(N)$}.
It is given by the normalized projection onto $V_\theta(N)$, i.e., it is the mixed state density matrix composed of all states within $V_\theta(N)$ with equal probability.
We denote this state by $\rho_N^\infty$ and suppress the dependence on $\theta_g$ as we will always work in a fixed scalar symmetry sector.
It follows from representation theory, that $\rho_N^\infty$ can be written as:
\begin{equation}
    \rho^\infty_N=\frac{1}{\mathrm{dim}(V_\theta(N))}\int_G e^{-i\theta_g}U_g dg\;, \label{eq: definition of maximally mixed}
\end{equation}
where integration is with respect to the Haar measure of $G$.
For example, the maximally mixed invariant states of a $\mathbb{Z}_2$-symmetry generated by $U_Z=\prod_{i=1}^N Z_i$ can be written as $\frac{1}{2^N}\left(\mathbb{1}\pm U_Z\right)$, where $+/-$ refers to the sectors of positive/negative parity, respectively, see Fig.~\ref{fig: Overview} (a).
This extends previous definitions from \cite{moharramipour_symmetry_2024}, where the notion of \textit{maximally mixed states in the invariant sector} was used to describe maximal mixtures in scalar symmetry sectors with $\theta_g=0$, corresponding to the positive parity sector for the $\mathbb{Z}_2$-symmetry above or the $\sum_iZ_i=0$ symmetry sector for the $U(1)$ symmetry.

\subsection{Rényi-2 correlators}
We now discuss the form of the Rényi-2 correlators of Eq.~(\ref{eq: R2 LRO}) for the MMIS of scalar symmetry sectors.
When taking the limit $N \to \infty$ we keep the function $\theta_g$ unchanged, e.g., for a $U(1)$ symmetry, we keep the eigenvalue $Q=\sum_i Z_i$ constant across all system sizes, which corresponds to a finite charge density. 
We now state the result which is valid for all compact Lie groups with exponentially growing scalar symmetry sectors, i.e., $\dim V_\theta(N_L)\sim \frac{1}{\mathrm{poly}(N)}(\dim \mathcal{H}_{\mathrm{loc}})^N$.
This is true for all finite groups, all semisimple compact Lie groups \cite{Biane_1993}, as well as for certain representations of compact connected Lie groups~\cite{Tate_2004}.
The class of finite groups covers groups such as $\mathbb{Z}_n$, and $S_n$ and the semisimple compact Lie groups includes $SU(2)$.
The group $U(1)$ is not semisimple but it can be confirmed (see App. \ref{sec: U1 calculations}) that the scalar symmetry sectors are exponentially growing if the $U(1)$ symmetry is generated by $\sum_iZ_i$ and hence our result applies.
The final expression for the Rényi-2 correlator for such groups reads
\begin{align}
&\lim_{|i-j|\to\infty}C^{sw}(i,j)[\rho_N^\infty]=\frac{\norm{O}_2^4}{(\dim \mathcal{H}_{\mathrm{loc}})^2|\mathcal{I}_O|}. \label{eq: formula R2 maximally mixed}
\end{align}
Here, $O$ is the order parameter used in the expression of $C^{sw}$, $|\mathcal{I}_O|$ is the dimension of the irreducible representation under which it transforms, and $\norm{O}_2=\Tr[|O|^2]^{\frac{1}{2}}$ is its Frobenius norm. 
Moreover, in App.~\ref{appendix: SW-SSB maximally mixed} Thm. \ref{theorem: SWSSB of MMIS}, we also allow for multi-site order parameters (recall that the local operator $O_i$ in Sec.~\ref{sec: 2 SW-SSB} can also have support on multiple but finitely many sites) and consider more general correlation functions that go beyond $C^{sw}$.
We also confirm that the MMIS does not break down the weak symmetry further, i.e., that 
\begin{equation}
    \lim_{|i-j|\to\infty}C^{w\emptyset}(i,j)[\rho^\infty_N] =0 .
\end{equation}
This establishes SW-SSB for the maximally mixed invariant states for arbitrary finite or compact Lie groups $G$ with exponentially growing scalar symmetry sectors.
We will now demonstrate our result for two simple examples on a chain of qubits.
\begin{enumerate}
    \item Consider the $\mathbb{Z}_2$ symmetry generated by $U_Z=\prod_iZ_i$. As a local order parameter, we can pick the Pauli $X$ matrix. Since $U_ZX_iU_Z^\dagger =-X_i$, the order parameter transforms under a one-dimensional representation and we thus have $|\mathcal{I}_X|=1$.
    Further, since $\norm{X}_2^2=2$, we find using Eq.~(\ref{eq: formula R2 maximally mixed}) that $C^{sw}=\frac{2^2}{2^2\cdot 1}=1$, which agrees with a direct calculation.
    \item Next, we consider a $U(1)$ symmetry generated by $\sum_iZ_i$.
    Since $e^{i\phi \sum_jZ_j}X_i e^{-i\phi \sum_jZ_j}=\cos(\phi)X_i+\sin(\phi)Y_i$, we can again take $X_i$ as the order parameter.
    The representation under which the order parameter transforms is now two dimensional and we consequently have $|\mathcal{I}_X|=2$ and thus obtain $C^{sw}=\frac{2^2}{2^2\cdot 2}=\frac{1}{2}$.
    In Appendix \ref{sec: U1 calculations} we show how our formula can be extended to the Rényi-2 correlator of quadratic symmetry-invariant operators such as $\frac{\Tr[\rho \mathbf{S}_i\mathbf{S}_j\rho\mathbf{S}_j\mathbf{S}_i]}{\Tr[\rho^2]}$. This correlator has been studied in the context of ordinary symmetry breaking~\cite{hauser_continuous_2023} and also extended to SW-SSB~\cite{lessa_strong--weak_2024}.
\end{enumerate}
In Sec.~\ref{sec: Potts model} we will also apply the formula to a non-abelian symmetry, in particular, the $S_3$ symmetry of the Potts model.
\subsection{Fidelity correlators}
The fidelity correlator can be used as a measure of SW-SSB as well.
Since it is known to be inequivalent to the Rényi-2 correlator in general, it is natural to study its behavior in symmetric thermal states, i.e., $\rho_\beta = \frac{P_\theta e^{-\beta H}}{\Tr[P_\theta e^{-\beta H}]}$ where $P_\theta$ is the projection onto the scalar symmetry sector $V_\theta$ (i.e., the unnormalized maximally mixed invariant state) and $H$ is a $G$-symmetric Hamiltonian.
We show in Appendix \ref{sec: fidelity bounds} (see Lemma \ref{lemma: bounds for fidelity}) that:
\begin{align}
    C^{sw}(i,j)[\rho_\beta]\leq \norm{O}_\infty^2F_O(i,j)[\rho_{\frac{\beta}{2}}] \label{eq: fidelity Rényi},
\end{align}
where $\norm{O}_\infty$ is the operator norm of the order parameter $O$.
Here, we again assume $G$ to be a compact Lie group, and unlike in Eq.~(\ref{eq: formula R2 maximally mixed}), we have no assumptions on $\dim V_\theta$.
Eq.~\eqref{eq: fidelity Rényi} implies that for the MMIS at $\beta = 0$ long-range order of the Rényi-2 correlator implies long-range order of the fidelity correlator.
We also show that this equivalence extends to other measures of SW-SSB such as the trace distance and quantum divergences by showing that they are equivalent to the fidelity for all density matrices $\rho$ in Appendix \ref{sec: Equivalent measures of SW-SSB} (see Theorem \ref{theorem: QuantumInfo correlators} therein).  
Hence we can conclude that SW-SSB of the MMIS is independent of the correlation function used to define SW-SSB.
This further cements their role as ideal examples of SW-SSB.
\section{Dynamics of mixed states \label{sec:DynamicsOfMixedStates}}
In this section, we will review concepts and models for quantum operations and quantum channels, with a focus on  Lindbladian and Brownian dynamics.
\subsection{Quantum operations}
We refer to a \textit{quantum operation} as a completely positive map 
\begin{equation}
    \rho \mapsto \mathcal{E}[\rho]=\sum_iK_i\rho K_i^\dagger    \label{eq: quantum operation}
\end{equation}
with $\sum_i{K_i^\dagger K_i} \leq \mathbb{1}$.\footnote{For operators $A,B$ $A\leq B$ means that $B-A$ is positive semidefinite~\cite{nielsen_quantum_2010, breuer_theory_2007}}
The operators $K_i$ are referred to as Kraus operators.
A completely positive and trace preserving (CPTP) map is called a \textit{quantum channel}, which is equivalent to $\sum_iK_i^\dagger K_i = \mathbb{1}$ \cite{watrous_theory_2018, nielsen_quantum_2010}.
An alternative interpretation of a quantum channel is that the set $\{M_i\}_i$, $M_i=K_i^\dagger K_i$ forms a positive operator valued measure (POVM) and hence Eq.~\eqref{eq: quantum operation} represents the density matrix after a generalized measurement.
Here, we will be interested in general quantum operations which are not necessarily trace-preserving. 
A quantum operation with $\sum_i K_i^\dagger K_i <\mathbb{1}$ can be interpreted as postselection on a generalized measurement. Let us define $A$ by $A^\dagger A =\mathbb{1}-\sum_i K_i^\dagger K_i$.
The channel $\widetilde{\mathcal{E}}[\rho]=\sum_iK_i\rho K_i^\dagger+A\rho A^\dagger$ then represents a generalized measurement of the POVM $\mathcal{P}=\{K_i^\dagger K_i\}_i\cup \{A^\dagger A\}$.
From this vantage point, the original operation $\sum_iK_i \rho K_i^\dagger$ is obtained from postselection on the POVM $\mathcal{P}$, since we drop measurement outcomes corresponding to $A^\dagger A$ .
Another relevant concept in this context is \textit{unitality}, which corresponds to the condition that the Kraus operator satisfies $\sum_iK_iK_i^\dagger=\mathbb{1}$.
As we will discuss below, this condition directly imposes restrictions on structure of the steady states. We illustrate the classification of quantum operations in Fig.~\ref{fig: where transition}.

\subsection{Lindbladian dynamics}
Given a one-parameter family of quantum operations $\mathcal{E}_t$ that fulfills the semigroup properties $\lim_{t\to 0}\mathcal{E}_t[\rho]=\rho$ and $\mathcal{E}_{t_1+t_2}=\mathcal{E}_{t_1}\circ \mathcal{E}_{t_2}$ we can find a generator $\mathcal{L}$ such that $\rho (t)=\mathcal{E}_t[\rho(0)]=e^{-t\mathcal{L} }\rho(0)$.
If $\mathcal{E}_t$ is a quantum channel, i.e., if it is also trace preserving, then $\mathcal{L}$ always has the following form \cite{breuer_theory_2007, lindblad_generators_1976}:
\begin{align}
    \mathcal{L}\rho = i[H,\rho]-\sum_i \gamma_i\left(2L_i \rho L_i^\dagger- \{L_i^\dagger L_i, \rho\}\right), \label{eq: Lindbladian}
\end{align}
where $\gamma_i \geq 0$, $H$ is hermitian and $L_i$ are referred to as Lindblad operators.
For $\rho(t) = e^{-t\mathcal{L} }\rho(0)$, with $\mathcal{L}$ of Eq.~\eqref{eq: Lindbladian}, the dynamics is said to be \textit{Lindbladian} or \textit{Quantum Markovian}.
In the following, we will often specify a quantum operation $\mathcal{E}_{dt}$ with $\lim_{dt\to 0}\mathcal{E}_{dt}[\rho]=\rho$ and associate with it a generator $\mathcal{L}$ via
\begin{align}
    -\mathcal{L}=\lim_{dt \to 0}\frac{\mathbb{1}-\mathcal{E}_{dt}}{dt} \label{eq: channel to superop}.
\end{align}
If $\mathcal{E}_{dt}$ is not trace preserving, $\mathcal{L}$ will deviate from Eq.~\eqref{eq: Lindbladian}. 
In general, Lindbladian dynamics does not need to be unital.
However, when the jump operators $\{L_i\}$ are Hermitian unitality is directly implied from trace preservation and vice-versa.
We will analyze the dynamics of the density matrix $\rho$ in the doubled Hilbert space $\mathcal{H}_f\otimes \mathcal{H}_b$ and write:
\begin{align}
    &\rho(t)=e^{-t\mathcal{L}}\rho(0) \iff \opket{\rho(t)} =e^{-tP}\opket{\rho(0)} \label{eq: Dynamics doubled H space}\\
    &P=iH^f -iH^{*b}-\sum_i\gamma_i \left[2L_i^fL_i^{*b}-L_i^{f\dagger}L_i^f -\left(L_i^{\dagger b}L_i^b\right)^*\right], \nonumber  
\end{align}
where $P$ acts as an effective non-hermitian Hamiltonian in the doubled Hilbert space.
By writing the dynamics in the form of Eq.~\eqref{eq: Dynamics doubled H space} we can understand the steady states at $t \to \infty$ as right groundstates of the effective Hamiltonian $P$, i.e., states with $P\opket{\rho} =0$.
Using these concepts, we translate between dynamical generators $\mathcal{L}$, quantum operations $\mathcal{E}_{dt}$ and effective non-hermitian Hamiltonians $P$. 
\subsection{Symmetries of quantum operations}
\label{sec: symmetries of operations}
Similar to states, notions of strong and weak symmetries exist also for maps between quantum states, both for quantum operations $\mathcal{E}$ and for the infinitesimal generator $\mathcal{L}$ \cite{buca_note_2012}.
$\mathcal{F} \in \{\mathcal{L}, \mathcal{E}\}$ is called \textit{weakly symmetric} if $\mathcal{F}[U_g\rho U_g^\dagger]= U_g\mathcal{F}[\rho ]U_g^\dagger$.
This translates to the condition that the effective Hamiltonian $P$ of Eq.~(\ref{eq: Dynamics doubled H space}) commutes with two copies of the symmetry operator $U_g$, i.e., $[P, U_g \otimes U_g^\ast] = 0$.
By contrast, $\mathcal{F}$ is referred to as \textit{strongly symmetric} if $[L_i,U_g]=[H,U_g]=0$ for a Lindblad time evolution, or $[K_i,U_g]=0$ for a quantum operation in Kraus form.
This translates to the condition that the effective Hamiltonian $P$ commutes with single copies of the symmetry, i.e., $[P, U_g \otimes \mathds{1}] = [P, \mathds{1} \otimes U_g^\ast] = 0$. 
Hence, any strongly symmetric quantum channel is also weakly symmetric, but not necessarily vice versa.
The strong symmetries of a quantum operation can sometimes dictate the structure of the steady-state manifold directly. 
For example, when a quantum operation is unital, i.e., $\sum_i{K_i K^\dagger_i} = \mathds{1}$, any density matrix that commutes with the $\{K_i\}$ is a steady state. 
This also means that the MMIS of Eq.~(\ref{eq: definition of maximally mixed}) are the steady states of strongly symmetric unital channels, which is evident using the fact that $[K_i,U_g]=0$.
The quantum operations considered in the rest of the paper will all be strongly symmetric.
We then say that a semigroup of quantum operations $\mathcal{E}_t$ has SW-SSB, if, starting from a strongly symmetric initial state $\rho_0$, the steady state $\lim_{t\to \infty}\frac{\mathcal{E}_t[\rho_0]}{\Tr[\mathcal{E}_t[\rho_0]]}$ has SW-SSB.
This extends the examples of SW-SSB of density matrices obtained by the application of finite-depth quantum channels $\mathcal{E}$ on simple density matrices, e.g., such as those in \cite{lessa_strong--weak_2024,sala_spontaneous_2024}, to steady states.

\subsection{Brownian circuits}\label{subsec:brownian}
We will now review a class of random quantum circuits called Brownian circuits, which turn out to yield nice examples of \textit{unital} Lindbladian dynamics.
They have been studied previously in multiple contexts, e.g., in the context of SYK models \cite{lashkari_towards_2013, bauer_stochastic_2017, jian_syk_2021, sunderhauf_quantum_2019, agrawal_entanglement_2022, agarwal_emergent_2022}, to study quantum chaos \cite{zhou_operator_2019, bauer_stochastic_2017, sunderhauf_quantum_2019, agarwal_charge_2023, ogunnaike_unifying_2023}, entanglement \cite{Knap_2018, bernard_entanglement_2021, swann_spacetime_2023,vardhan_entanglement_2024} or in the context of symmetries and hydrodynamics
\cite{moudgalya_symmetries_2023, vardhan_entanglement_2024, sahu_entanglement_2022, chen_strong--weak_2024}.
In this setting, we evolve our system over the time interval $dt \ll 1$ with a random Hamiltonian of the form
\begin{equation}
    H(t)=\sum_\alpha J_\alpha (t)B_\alpha,
\end{equation}
where $B_\alpha$ are local Hermitian operators and the coupling constants $J_\alpha$ are i.i.d. Gaussian random variables with 
\begin{equation}
    \mathbb{E}[J_\alpha(t)]=0, \quad \mathbb{E}[J_\alpha(t)J_\beta(t')]=\frac{2J_\alpha \delta_{tt'}\delta_{\alpha \beta}}{dt}. \label{eq: random couplings}
\end{equation}
Here, we will be interested in the properties of the density matrix obtained under such random time evolution, \textit{after} averaging over the random variables $J_\alpha(t)$, 
For this, it is convenient to vectorize the density matrix $\rho \to \opket{\rho}$.
Then the averaged time evolution takes the form \cite{ogunnaike_unifying_2023, moudgalya_symmetries_2023}:
\begin{align}
   &\mathbb{E}\left[U(dt)\otimes U^*(dt)\right]
    =1-dt\cdot P_{\mathrm{B}}+\mathcal{O}(dt^2) \nonumber\\ 
    &P_{\mathrm{B}}=\sum_\alpha J_\alpha \mathcal{}\left(B_{\alpha,f}-B^{T}_{\alpha,b} \right)^2.  \label{eq: Brownian formula}
\end{align}
Comparing with Eq.~\eqref{eq: Dynamics doubled H space}, we see that $P_{\mathrm{B}}$ realizes the dissipative part of Eq.~\eqref{eq: Lindbladian} with the $B_\alpha$ taking the role of the Lindblad operators $L_i$.
Since the $\{B_\alpha\}$ are Hermitian, the associated Lindbladian dynamics is also unital, which already determines the structure of the steady states, as we discuss below.
Hence, if we evolve the system by $t=N_t\cdot dt$ and take the limit $dt \rightarrow 0$, the dynamics in the doubled Hilbert space is given by an imaginary time evolution with the effective Hermitian Hamiltonian $P_{\mathrm{B}}$:
\begin{align}
&\mathbb{E}\left[U(t)\otimes U(t)^*
\right]\opket{\rho_0}=\prod_{i=1}^{N_t} \mathbb{E}\left[U(dt)\otimes U^*(dt)\right] \opket{\rho_0} \nonumber \\
&= e^{-tP_{\mathrm{B}}}\opket{\rho_0}.\label{eq: average dynamics}
\end{align}
The late-time density matrix is then simply a ground state of this effective Hamiltonian $P_B$, which is also Hermitian since all the $\{B_\alpha\}$ are.
The ground state manifold of $P_B$ is related to the symmetries of the evolution $\{B_\alpha\}$ ~\cite{moudgalya_hilbert_2022, moudgalya_symmetries_2023, moudgalya_symmetries_2023-1, ogunnaike_unifying_2023}.
Concretely, let $\mathcal{C}$ be the algebra generated by all operators commuting with all $B_\alpha$, i.e., $\mathcal{C}=\{A \in L(\mathcal{H})|AB_\alpha = B_\alpha A \; \forall \alpha\}$ where $L(\mathcal{H})$ is the set of linear operators on $\mathcal{H}$.
$\mathcal{C}$ is also referred to as the commutant of $\{B_\alpha\}$, and it spans the groundstate manifold of $P_\mathrm{B}$~\cite{moudgalya_symmetries_2023, vardhan_entanglement_2024}.
This means that if $\{C_m\}_{m \in \mathcal{M}}$ is an orthonormal basis for $\mathcal{C}$ (i.e., $\Tr[C_mC_n]=\delta_{nm}$), the steady-state density matrix is given by (up to normalization)
\begin{align}
    \rho^{\mathrm{eq}}= \sum_{m \in \mathcal{M}}C_m\Tr[C_m^\dagger\rho_0]. \label{eq: commutant steady state}
\end{align}
When the initial density matrix $\rho_0$ is strongly symmetric with $U_g\rho_0 = e^{i\theta_g}\rho_0$, then $\rho^{\mathrm{eq}}$ is the MMIS for $V_\theta$ from Eq.~\eqref{eq: definition of maximally mixed}. 
To see this, we need to show that $\rho^{\mathrm{eq}}$ is proportional to the projection onto $V_\theta$, the scalar symmetry sector defined in Eq.~(\ref{eq:scalarsymmetry}).
That is, we want (i) $\rho^{\mathrm{eq}}\ket{\Psi}\in V_\theta$ for all $\ket{\Psi}\in \mathcal{H}$ and (ii) $\rho^{\mathrm{eq}}\ket{\Psi_\theta}\propto \ket{\Psi_\theta}$ for all $\ket{\Psi_\theta}\in V_\theta$.
To show condition (i), note that if $\{C_m\}_m$ is an orthonormal basis for $\mathcal{C}$, then so is $\{U_gC_m\}_m$ for any choice of unitary symmetry operator $U_g$.
Thus for any $\ket{\Psi} \in \mathcal{H}$, we have
\begin{align}
    &U_g(\rho^{\mathrm{eq}}\ket{\Psi})\propto \sum_{m \in \mathcal{M}}\underbrace{U_gC_m}_{C'_m}\Tr[C_m^\dagger\rho_0] \nonumber \\
    &=\sum_{m \in \mathcal{M}}C_m'\Tr[(C_m')^\dagger U_g\rho_0]=e^{i\theta_g}(\rho^{\mathrm{eq}}\ket{\Psi}),
\end{align}
hence $\rho^{\mathrm{eq}}\ket{\Psi} \in V_\theta$.
To show condition (ii), we use that by definition of the commutant $\mathcal{C}$, we can express every $C_m$ as a linear combination of symmetry operators $C_m=\sum_{g \in G}\alpha^{(m)}_gU_g$. 
Hence $C_m \ket{\Psi_\theta}=\sum_{g \in G}\alpha^{(m)}_ge^{i\theta_g}\ket{\Psi_\theta}$ and the second condition follows directly using Eq.~(\ref{eq: commutant steady state}).

\begin{figure*}
    \centering
    \includegraphics[width=1\linewidth]{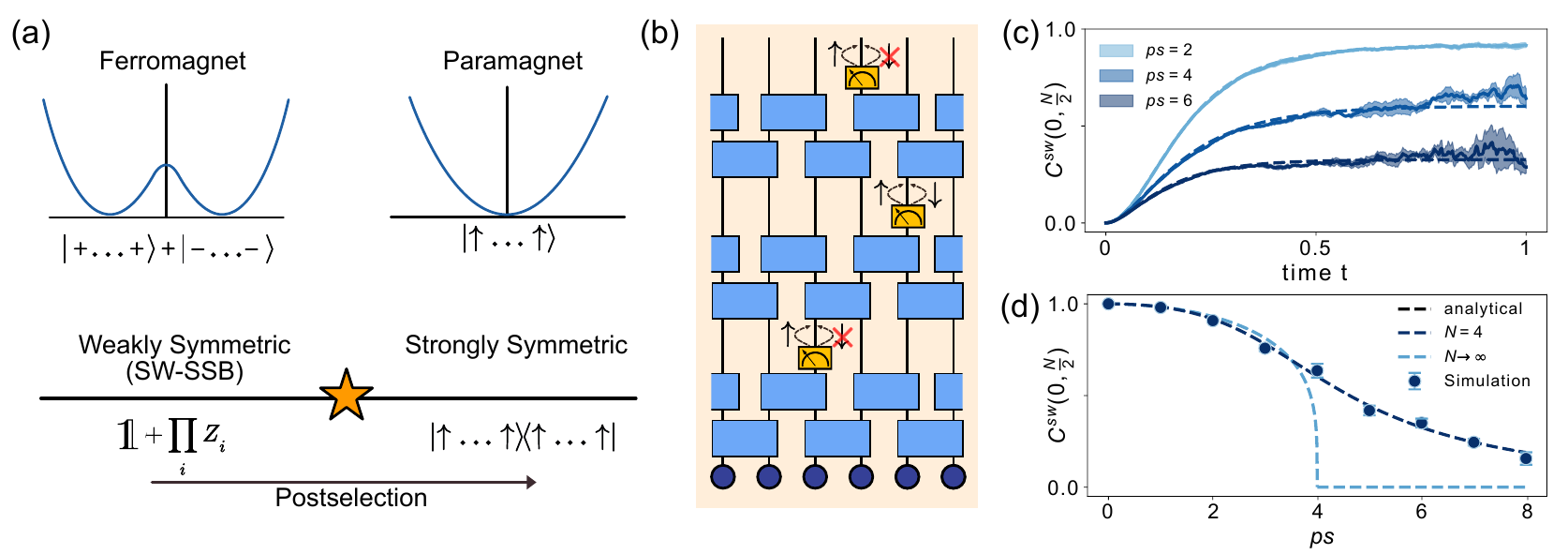}
    \caption{\textbf{Postselection induced SW-SSB transition in a $\mathbb{Z}_2$ symmetric system:} \textbf{(a):} Correspondence principle between the quantum phase transition in the transverse field Ising model (TFI) and the SW-SSB transition of a mixed quantum state. The ferromagnetic cat states correspond to the maximally mixed invariant states. \textbf{(b):} Circuit architecture of the model. Between each layer of random gates a random $Z$ measurement is performed. For a fraction of measurements, the $\downarrow$ result is excluded from the ensemble. \textbf{(c):} Dynamics of the Rényi-2 correlator $C^{sw}(0,\frac{N}{2})$, Eq.~\eqref{eq: Rényi-2 correlaotr Z2}, from simulations for a chain of $N=4$ qubits with periodic boundary conditions. Other parameters were $J=U=1$, $s=\frac{1}{2}$, $dt = 2.5\cdot 10^{-3}$. Solid line is an average over $4\cdot 10^{4}$ iterations, the shaded area represents a Jacknife estimate of the error and dashed line is the analytical result from the TFI. \textbf{(d):} Steady state value of the Rényi-2 correlator. Dashed lines are analytical results for the TFI with $N=4$ and $N\to \infty$, respectively. }
    \label{fig: Overview}
\end{figure*}

Thus, Brownian circuits exhibit SW-SSB in terms of their averaged density matrix.
For abelian symmetries, all irreps of the symmetry are one-dimensional, hence the commutant $\mathcal{C}$ has a basis in terms of projections onto scalar symmetry sectors, which are the MMIS.
Since convex combinations of SW-SSB states again exhibit SW-SSB (see App. \ref{sec: convex combinations}), the entire steady state manifold has SW-SSB.
For non-abelian symmetries, since some irreps are necessarily not one-dimensional, the MMIS states \textit{do not} span the entire ground state manifold of $P_\mathrm{B}$.
Nevertheless, the MMIS are the only ones reachable from strongly symmetric initial states, as shown above.
This leaves open the question on the full structure of the steady state manifold for non-abelian symmetries (which are not accessible from strongly symmetric initial states), and whether they exhibit SW-SSB, which we leave for future work. 
\section{SW-SSB transitions in random quantum circuits with postselection}

\label{sec: phase transition}
The observation that MMIS are the steady states of strongly symmetric unital quantum operations, and the fact that they have SW-SSB establishes such random circuits as a natural class of systems to study  SW-SSB.
We now explore whether there is (i) a \textit{phase} of steady state density matrices that exhibit SW-SSB, and whether (ii) one might drive the system to a steady state that \textit{does not} exhibit SW-SSB and is strongly symmetric instead.
To this end, we need to relax the condition of having a unital and trace-preserving quantum channel.
We will discuss in Sec.~\ref{sec: Lindbladian} that breaking unitality alone is not sufficient; rather we need to break the trace-preservation. 
Interspersing such non-trace-preserving operations in between the random unitary gates of the quantum circuit could then drive a phase transition towards a state that does not have SW-SSB, see Fig.~\ref{fig: where transition}. 
Due to the absence of trace preservation the resulting dynamics is therefore not a quantum channel and the continuous-time limit of such an operation is also not Lindbladian by construction.
Let us now describe a concrete protocol and operations that realize these ideas.

\subsection{Protocol} 
\label{sec: protocol}
Given an on-site symmetry group $G$, we pick a ``target state" $\ket{\Psi_\mathrm{t}}=\ket{\varphi}\otimes \ldots\otimes\ket{\varphi}$, which is a product state of local $G$-symmetric states $\ket{\varphi}$,  e.g.,  a paramagnetic state for $\mathbb{Z}_2$ symmetry.
We also pick a local $G$-symmetric observable $A$ such that $\ket{\varphi}$ corresponds to one of the measurement outcomes.
In the background, we implement $G$-symmetric Brownian dynamics.
After evolving the system by $dt$ with a random Hamiltonian, we apply the following measurement protocol on a lattice with $N$ sites for some $p\in [0,\frac{1}{dt\cdot N}]$ and $s\in[0,1]$ (see Fig.~\ref{fig: Overview}(b) for an illustration):
\begin{enumerate}
    \item  With a probability $p\cdot dt\cdot N$, measure $A_i$ at a uniformly random chosen site.
    \item  Given that a measurement occurred, we perform the following probabilistic postselection step: With a probability of $s$, discard the trajectory if the measurement outcome does not correspond to $\ket{\varphi}$.
\end{enumerate}
The normalization of the measurement probability $p\cdot dt\cdot N$ is chosen such that $p$ is interpreted to be the measurement rate per lattice site. 
As $s$ is the conditional probability of postselecting, provided that a measurement was performed, $p\cdot s$ is the postselection rate per site.
We expect the following behavior of the circuit. 
For $s=0$, the averaged steady state should be a MMIS, which has SW-SSB according to our previous discussion. 
On the other hand, when $p\cdot s$ is large, the steady state is expected to be close to target density matrix $\ket{\Psi_\mathrm{t}}\bra{\Psi_\mathrm{t}}$ which is a strongly symmetric pure state and thus does not have SW-SSB. 
In between these two limits a phase transition should occur. 
The above protocol alters the effective Hamiltonian $P_\mathrm{B}$ of Eq.~(\ref{eq: Brownian formula}).
We check SW-SSB of the steady state by finding the groundstates of the resulting Hamiltonian, and by computing the Rényi-2 and fidelity correlators for the resulting steady state.
For a $\mathbb{Z}_2$  symmetric system, we find an analytic solution in arbitrary dimensions by relating our system to the transverse field Ising model of effective degrees of freedom. 
We find that there is a ``phase" of steady states that exhibits SW-SSB, and a phase that does not. 
These phases are stable to local Brownian perturbations that preserve the symmetry by utilizing the mapping to the ground state of an effective quantum model.
We also numerically evaluate higher Rényi-$m$ correlators from which we extrapolate the fidelity.  
Strikingly,  our numerical results are consistent with a simultaneous onset of SW-SSB in all of these distinct correlators evaluated with the averaged density matrix and diagnose SW-SSB from $G\times G \to G_\text{diag}$.
We also study an $S_3$-symmetric Potts circuit as an example of a non-abelian symmetry, where we are able to demonstrate the SW-SSB phase and a phase transition to a strongly symmetric phase numerically on a one-dimensional chain.
%
%
%
%

\begin{figure*}
    \centering
    \includegraphics[width=1\linewidth]{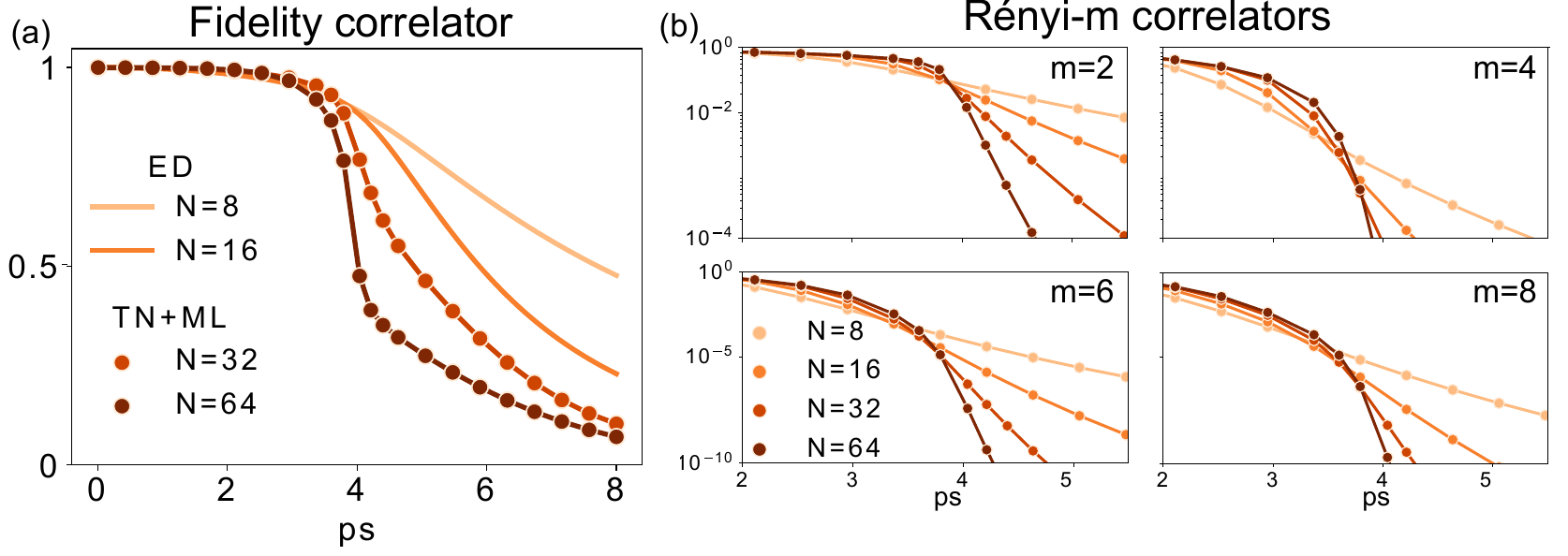}
    \caption{\textbf{Fidelity and Rényi-m correlators for the $\mathbb{Z}_2$-symmetric Brownian circuit with postselection:} {(a)} Steady-state value of the fidelity correlator $F_X(\frac{N}{4},\frac{3N}{4})$ for various system sizes $N$.
    For small systems, the results were obtained using exact diagonalization (ED) and for large systems a combination of tensor network and machine learning methods was employed (TN+ML).
    {(b)} The steady state value of the Rényi-m correlators $C^{(m)}(\frac{N}{4},\frac{3N}{4})$ [{defined in Eq.~(\ref{eq: Rényi-m})}] for $m=2,4,6,8$ for various system sizes close to the analytically predicted critical point at $p \cdot s=4$.
    Results were obtained by finding the ground state of the 2-copy effective Hamiltonian Eq.~\eqref{eq: TFI} using the Density Matrix Renormalization Group algorithm and then contracting the tensor network corresponding to the Rényi-correlator $C^{(m)}$. We used open boundary conditions and parameters $J=U=1$ and $s=1$.}
    \label{fig: fidelity and Rényis}
\end{figure*}

\subsection{Abelian \texorpdfstring{$\mathbb{Z}_2$}{Lg} symmetry}\label{subsec:abelianZ2}
We now construct the Brownian circuit for a system of qubits with strong $\mathbb{Z}_2$ symmetry generated by $\prod_iZ_i$. 
We will start by introducing the circuit and showing that we can understand SW-SSB in its steady state as a $\mathbb{Z}_2$ ferromagnet~\cite{moudgalya_symmetries_2023}. 
We will then extend the model by the postselection protocol described in Sec. \ref{sec: protocol} and show that postselection drives a phase transition of the steady state density matrix into a strongly symmetric phase.
\subsubsection{Model}
We consider the following random Hamiltonian on $N$ sites
\begin{equation}
H(t)=\sum_{\langle i,j\rangle}J_{ij}(t)X_iX_{j}+\sum_{i}U_i(t)Z_i, \label{eq: rando Ising}
\end{equation}
where $\langle i,j\rangle$ labels adjacent lattice sites and $J_{ij}(t),U_i(t)$ are Brownian random variables distributed according to Eq.~\eqref{eq: random couplings} with
$\mathbb{E}[J_{ij}(t)J_{i'j'}(t')]=\frac{2J \delta_{tt'}\delta_{ii'}\delta_{jj'}}{dt}$ and $\mathbb{E}[U_i(t)U_j(t')]=\frac{2U \delta_{tt'}\delta_{i j}}{dt}$.
%
We leave the dimensions and boundary conditions unspecified for now, as our analytic calculations are independent of them, while our numerical simulations are for a one-dimensional system.
Following Ref.~\cite{moudgalya_symmetries_2023} the effective Hamiltonian $P_{\mathbb{Z}_2}$ in the doubled Hilbert space is given by: 
\begin{equation}
    P_{\mathbb{Z}_2}=2U\sum_j[1-Z_i^fZ_i^b]+2J\sum_{ij}[1-X_i^fX_i^bX_j^fX_j^b].
\label{eq:PZ2nomeas}
\end{equation}
This effective Hamiltonian has a $\mathbb{Z}_2^{N}$ symmetry as it commutes with $Z_i^fZ_i^b$ for every site $i$. 
When we take the initial state to be a product state in the positive parity sector, for example $\rho_0=\ket{\uparrow\ldots \uparrow}\bra{\uparrow\ldots \uparrow}$, we can restrict our analysis to the reduced Hilbert space $\widetilde{\mathcal{H}}=\mathrm{span}\{\bigotimes_i\opket{\widetilde{p}_i}, \; p_i \in \{\uparrow,\downarrow\}\}$, where we have defined $\opket{\widetilde{\uparrow}}_i=\ket{\uparrow}_{i,f}\otimes\ket{\uparrow}_{i,b}$ and $\opket{\widetilde{\downarrow}}_i=\ket{\downarrow}_{i,f}\otimes\ket{\downarrow}_{i,b}$. 
On this subspace $P_{\mathbb{Z}_2}$ acts as~\cite{moudgalya_symmetries_2023}
\begin{equation}
    P_{\mathbb{Z_2}}=2J\sum_{\langle i,j \rangle}[1-\widetilde{X}_j\widetilde{X}_i], \label{eq: P for Z2 base model}
\end{equation}
which is the Hamiltonian of the classical Ising model. 
The groundstate space is spanned by $\opket{\widetilde{+}\ldots\widetilde{+}}$ and $\opket{\widetilde{-}\ldots\widetilde{-}}$ and hence the steady state is given by $\opket{\rho}=\opket{\widetilde{+}}\opbraket{\widetilde{+}}{\rho_0}+\opket{\widetilde{-}}\opbraket{\widetilde{-}}{\rho_0}$. 
By ``devectorizing" $\opket{\rho}$ and using that $\rho_0$ was taken to be strongly symmetric and in the positive parity sector, we see that the steady-state density matrix is $\rho=\frac{1}{2^N}\mathbb{1}+\frac{1}{2^N}\prod_iZ_i$. 
This is simply the MMIS in the sector of positive parity.
To compute the Rényi-2 correlator, we pick $O_i=X_i$ as the order parameter
\begin{equation}
    C^{sw}(i,j)[\rho]=\frac{\Tr[\rho X_i X_j \rho X_j X_i]}{\Tr[\rho^2]}=\frac{\opbra{\rho}\widetilde{X}_i\widetilde{X}_j\opket{\rho}}{\opbra{\rho}\opket{\rho}}, \label{eq: Rényi-2 correlaotr Z2}
\end{equation}
which acts as the ferromagnetic correlation function on $\widetilde{\mathcal{H}}$.
The groundstates of $P_{\mathbb{Z}_2}$ in Eq.~\eqref{eq: P for Z2 base model} correspond to the averaged steady states of the circuit and spontaneously break the $\mathbb{Z}_2$ symmetry of the Ising model. 
Hence $C^{sw}(i,j)[\rho]=1$ is long-ranged ordered and the steady state has SW-SSB. 
Our direct calculation is in agreement with Eq.~\eqref{eq: formula R2 maximally mixed} where we diagnose SW-SSB for all MMIS.
\subsubsection{SW-SSB transition driven by postselection}
As we have demonstrated,  SW-SSB in the steady state can be reformulated as pure-state spontaneous symmetry breaking in the groundstate of $P_{\mathbb{Z}_2}$, which is in the ferromagnetic phase of the Ising model.
We now seek a quantum operation $\mathcal{E}_{dt}[\rho]$ that can be interspersed between the random gates that break the trace preservation and drives the system into the paramagnetic phase.
This can be achieved by the postselection protocol introduced in Sec.\ref{sec: protocol}.
As the target state, we pick $\ket{\Psi_\mathrm{t}}=\ket{\uparrow\ldots\uparrow}$, i.e., $\ket{\varphi}=\ket{\uparrow}$ in Sec.~\ref{sec: protocol}.
As the local observable to perform probabilistic postselection for, we then take $A_i=Z_i$, which commutes with the $\mathbb{Z}_2$-symmetry generated by $\prod_iZ_i$. 
With the postselection protocol described above, the density matrix evolves under the following quantum map:
\begin{align}
    &\mathcal{E}_{dt}[\rho] = (1-p\ dt\ N)\rho + \nonumber \\
    &\frac{p\ dt\ N}{N} \sum_i \left[(1-s)\left(\pi^+_i\rho\pi^+_i+\pi^-_i\rho\pi^-_i\right) +s\pi^+_i\rho\pi^+_i\right],
    \label{eq: postselection operation}
\end{align}
where $\pi^\pm_i = \frac{1\pm Z_i}{2}$. 
Importantly, this map is \textit{not} trace preserving for $s > 0$.
Thus we need to renormalize $\rho$ in order to obtain a valid density matrix.
Hence the object under consideration is not $\rho(t\to \infty) = \lim_{t \to \infty} \mathbb{E}[\rho(t)]$, but $\rho^{\mathrm{eq}}=\frac{\rho(t \to \infty)}{\Tr[\rho(t\to \infty)]}$ instead.
The difference between the two objects is not visible in the Rényi-2 correlator, as it has two copies of $\rho$ in the numerator and in the denominator.
 In the doubled Hilbert space, Eq.\eqref{eq: postselection operation} reads (see Appendix \ref{app:effH}):\
\begin{align}
    &\mathcal{E}_{dt}\opket{\rho} 
    =\left(1-dt\cdot P^{\mathrm{meas}}_{\mathbb{Z}_2} \right)\opket{\rho}\label{eq: effective Hamiltonian Z2 postselection}\\
    &P^{\mathrm{meas}}_{\mathbb{Z}_2}=\frac{(s-2)p}{4}\sum_i(\frac{s+2}{s-2}+Z_i^fZ_i^b) -
\frac{ps}{4}\sum_i(Z_i^f+Z_i^b).\nonumber
\end{align}
The full effective Hamiltonian for the total operation including the Brownian evolution becomes
$P_{\mathbb{Z}_2}^{\mathrm{total}}=P_{\mathbb{Z}_2}+P^{\mathrm{meas}}_{\mathbb{Z}_2}$. 
As before, we can restrict the analysis to the subspace $\widetilde{\mathcal{H}}$ [see discussion above Eq.~\eqref{eq: P for Z2 base model}].
Here, $P_{\mathbb{Z_2}}^{\mathrm{total}}$ acts as:
\begin{equation}
P_{\mathbb{Z_2}}^{\mathrm{total}}=2J\sum_{\langle i,j \rangle}[1-\widetilde{X}_j\widetilde{X}_i]-\frac{ps}{2}\sum_i\widetilde{Z}_i +\frac{psN}{2}, \label{eq: TFI}
\end{equation}
which is the Hamiltonian of the transverse-field Ising model.
Hence, SW-SSB in the steady state of the Brownian circuit maps to spontaneous symmetry breaking in the ground state of $P_{\mathbb{Z_2}}^{\mathrm{total}}$, with the Rényi-2 correlator acting as the ferromagnetic correlation function.
The transverse field Ising model exhibits a quantum phase transition at $\left|\frac{ps}{4J}\right|=1$~\cite{pfeuty_one-dimensional_1970}.
It follows that the averaged density matrix $\mathbb{E}[\rho]$ exhibits a phase transition \textit{in the steady state} from an SW-SSB phase (ferromagnet) and a strongly symmetric phase (paramagnet). 

\subsubsection{Numerical results on the Rényi-m and fidelity correlators}

We now numerically investigate the effective Brownian dynamics with postselection and focus on a one-dimensional chain with periodic boundary conditions; see Fig.~\ref{fig: Overview} (c,d).
Our simulations show excellent agreement with the predictions from the analytical results on the transverse field Ising model for the Rényi-2 correlator.
The survival probability of a single trajectory scales as $\left({1}/{2}\right)^{psNT}$ where $T$ is the time simulated and $N$ is the size of the linear chain, which limits the accessible system sizes for the real time evolution. 
We also numerically study the fidelity correlator introduced in Eq.~\eqref{eq: fidelity correlator}.
To do so, we introduce a family of correlators $C^{(m)}(i,j)$, {which we refer to as the Rényi-m correlators} (similar {but slightly distinct} quantities have been defined in \cite{zhang_probing_2025, lessa_strong--weak_2024}).
Let $\sigma_{ij}=O_i^\dagger O_j\rho O_j^\dagger O_i$. 
Then the Rényi-m correlator $C^{(m)}(i,j)$ is defined as:
\begin{align}
    C^{(m)}(i,j)[\rho]=\frac{\Tr[\left(\rho \sigma_{ij}\right)^{\frac{m}{2}}]}{\Tr[\rho^m]} \label{eq: Rényi-m}
\end{align}
Note that $C^{(2)}(i,j)=C^{sw}(i,j)$, the Rényi-2 correlator. 
The Rényi-1 correlator is simply the fidelity correlator, i.e., $C^{(1)}(i,j)=F_O(i,j)$, since $\Tr[\sqrt{\sqrt{\rho}\sigma_{ij} \sqrt{\rho}}]=\Tr[\sqrt{\rho \sigma_{ij}}]$ even if $\rho$ and $\sigma_{ij}$ do not commute~\cite{müller_2023_simplified}.\footnote{A brief sketch of the argument in \cite{müller_2023_simplified} goes as follows: $\Tr[\sqrt{\sqrt{\rho}\sigma \sqrt{\rho}}]=\sum_i\sqrt{\lambda_i}$, where $\lambda_i$ are eigenvalues of $\sqrt{\rho}\sigma \sqrt{\rho}$. Since the spectrum of a product of matrices is invariant under cyclic permutations, the result follows.}
For even $m$, $C^{(m)}(i,j)$ can be efficiently computed for the averaged steady state with tensor methods~\cite{hauschild_efficient_2018}.
Specifically, we first use the Density Matrix Renormalization Group  (DMRG) algorithm to find the groundstate of the effective Hamiltonian Eq.~\eqref{eq: TFI} in the doubled Hilbert space, convert it into a density matrix and then perform the contraction of the network corresponding to $C^{(m)}$, where we truncate small singular values.
Strikingly, our results show transitions in $C^{(m)}(i,j)$ for various values of $m$ that are consistent with the analytically obtained phase transition of the Rényi-2 correlator at $ps=4$ [see Fig. \ref{fig: fidelity and Rényis}b].
This suggests that the same transition is also present in higher copy quantities, which is consistent with the fact that we are studying the same underlying averaged steady-state density matrix. 
The fidelity correlator, however, is in general not easy to obtain using tensor network methods.
To estimate the fidelity for large system sizes, we combine a tensor-based calculation of $C^{(m)}$ for even $m$ with tools from machine learning. 
Our procedure is as follows: We first train a neural network to predict the fidelity $F_X(i,j)$ from the higher Rényi correlators $C^{(m)}(i,j)$, the postselection rate $p\cdot s$ and the system size $N$.
Training data is provided by exact diagonalization of the transverse field Ising model Eq. \eqref{eq: TFI} for $N\leq 18$.
Throughout this process, we use $m=2,4,6,8$ and $i=\frac{N}{4}$, $j=\frac{3N}{4}$.
Using data for $C^{(m)}$ obtained from tensor network methods (see Fig. \ref{fig: fidelity and Rényis}b) for $N=32,64$ we use the trained neural network to predict $F_X$ for these larger system sizes.
To get a robust prediction, we repeat this process of training and predicting $10^3$ times and average over the predictions. 
The results are shown in Fig. \ref{fig: fidelity and Rényis}a.
Finally, using this extrapolation to large system sizes $N$, our approach predicts a critical point of the fidelity correlator that is consistent with the one of the Rényi correlators.
In App.~\ref{app:fidelity} we provide an analytic argument for  this behavior by mapping the fidelity to a series of n-point correlation functions in the same statistical mechanics model where the Rényi-2 correlator acts as the 2-point correlation function. From this, we argue that all Rényi correlators and the fidelity capture the SW-SSB from $G\times G \to G_\text{diag}$ of the averaged density matrix.
\subsubsection{Identical transition in distinct correlators}\label{subsubsec:identtrans}
The fact that the transitions in the fidelity correlator matches that in the Rényi-2 correlator is in stark contrast to previous examples in finite-depth quantum channels, where both correlation functions predicted different critical points~\cite{lessa_strong--weak_2024, sala_spontaneous_2024}. 
Our understanding of this difference is the following.
The steady state of a $D$-dimensional quantum system maps to the ground state of a $D$-dimensional quantum model (in our case the $1d$ transverse-field Ising model). Equivalently, it can be expressed in terms of the partition function of a $(D+1)$-dimensional classical statistical mechanics model (in our case the 2d classical Ising model).
The Rényi-m correlation functions are expectation values of local operators in this ground state, which, in the classical statistical mechanics model, are operator insertions located at the boundary of the $(D+1)$-dimensional model.
Hence, in this language, the choice of correlation function {evaluated using boundary operators} should not influence the critical behavior, which is determined by the bulk {statistical mechanics model}.
By contrast, when a fixed quantum channel $\mathcal{E}$ to a quantum state $\rho$ in $D$ dimensions is applied, the correlation functions in the state $\mathcal{E}[\rho]$ can be understood using a $D$-dimensional classical statistical mechanics model~\cite{sala_spontaneous_2024, lessa_strong--weak_2024} that depends on the particular kind of correlation function being evaluated.
Hence in that case, the bulk model itself changes depending on the correlation function, and it is natural to expect that these different bulk models can lead to different locations of the  critical point.
\subsection{Non-Abelian \texorpdfstring{$S_3$ symmetry}{Lg}}
\label{sec: Potts model}
We will now present an example of a similar phase transitions in circuits with a non-abelian symmetry.
The group we will consider is the symmetric group in three elements $S_3 \cong \mathbb{Z}_3 \ltimes \mathbb{Z}_2$.
A well-known $S_3$ symmetric model is the three-state Potts model \cite{Baxter_exactly, RevModPhys.54.235,friedman_localization-protected_2018}.
To this end, consider a 3-level local Hilbert space $\mathcal{H}_{\mathrm{loc}}=\mathbb{C}^3$ and the operators
 \begin{equation}
     \sigma=\begin{pmatrix}
         0 &1&0\\
         0&0&1\\
         1&0&0
     \end{pmatrix},\;\tau = \begin{pmatrix}
         1&0&0\\
         0&\omega &0\\
         0&0&\omega^2
     \end{pmatrix}, \;
     \chi = \begin{pmatrix} 
1 & 0 & 0\\
0 & 0 & 1\\
0 & 1 & 0
\end{pmatrix},
 \end{equation}
where $\omega = e^{\frac{2i\pi}{3}}$.
The $S_3$ symmetry is then generated by the operators $\prod_i \chi_i$ and $\prod_i \tau_i = \omega^{\sum_iQ_i}$, where the $\mathbb{Z}_3$-charge Q is given by
\begin{equation}
    Q_i=\frac{i}{\sqrt{3}}\left(\tau^\dagger_i-\tau_i\right).
\label{eq:Qidefn}
\end{equation}
$Q_i$ has three eigenvalues, $\{0,+1, -1\}$, which are thus conserved modulo 3. 
We label the basis states of the local Hilbert space by the respective eigenvalues of $Q_i$, i.e., $\mathcal{H}_{\mathrm{loc}}= \mathrm{span}\{\ket{0},\ket{+},\ket{-}\}$.
Then, $\prod_i\chi_i$ serves as a charge conjugation symmetry exchanging $\ket{+}\leftrightarrow \ket{-}$ or alternatively, the operators transform as $\sigma_i\leftrightarrow \sigma_i^\dagger$ and $\tau_i \leftrightarrow \tau^\dagger_i$ under conjugation by it. 
We introduce the Brownian three-state $S_3$-symmetric Potts model 
\begin{equation}
H(t)=\sum_iU_i(t)\left(\tau_i+\tau^\dagger_i\right)+\sum_{\langle i,j\rangle}J_{ij}(t)\left(\sigma_i^\dagger\sigma_j+\sigma_i\sigma_j^\dagger\right), \label{eq: Potts}
\end{equation}
where the random variables $J_{ij}$ and $U_i$ are distributed as in Eq.~\eqref{eq: random couplings}:
\begin{align}
    \mathbb{E}[U_i(t)]=0\quad &\mathbb{E}[U_{i}(t)U_{i'}(t')]=\frac{2U\delta_{tt'}\delta_{i i'}}{dt} \\ 
    \mathbb{E}[J_{ij}(t)] =0 \quad &\mathbb{E}[J_{ij}(t)J_{i'j'}(t')]=\frac{2J \delta_{tt'}\delta_{i i'}\delta_{jj'}}{dt}. \nonumber
\end{align}
\begin{figure}
    \centering
    \includegraphics[width=1.0\linewidth]{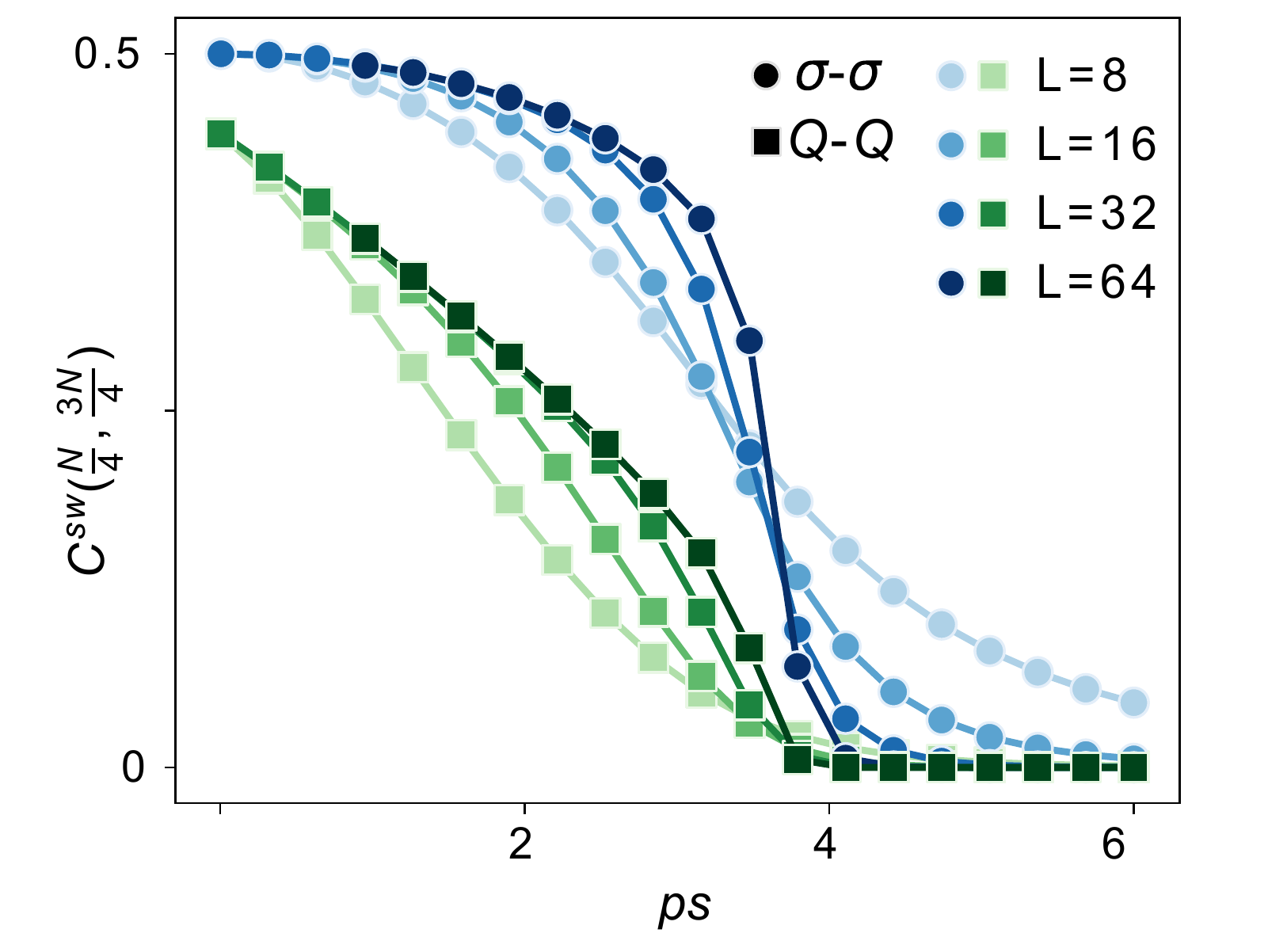}
    \caption{\textbf{Steady-state Rényi-2 correlators for the non-abelian $S_3$ Potts circuit.} Results were obtained using the Density Matrix Renormalization Group (DMRG) algorithm to find the groundstate of Eq.~\eqref{eq: Potts-2copy} with open boundary conditions. We show both $\sigma-\sigma$ correlations, Eq.~\eqref{eq: sigma sigma}, and $Q-Q$ correlations, Eq.~\eqref{eq: Q Q}, with $i=\frac{N}{4}$ and $j=\frac{3N}{4}$. Parameters are $J=U=1$ and $s=0.5$.}
    \label{fig: S3 DMRG}
\end{figure}

We follow the protocol described in Sec.~\ref{sec: protocol} to drive a phase transition as a function of the postselection rate.
As a target state we pick $\ket{\Psi_\mathrm{t}}=\ket{0\ldots 0}$ and measure the local operator $A_i = Q_i^2$.
This operator has eigenvalues $0$ and $1$ with projections $\pi^1 = Q_i^2$ and $\pi^0_i=1-Q_i^2$.
We then postselect onto the ``$0$" result, which drives the system towards the target state.  
The quantum operation describing the above protocol is given by: 
\begin{align}
    \mathcal{E}_{dt}[\rho] &= (1-p\ dtN)\rho \nonumber \\
    &+\frac{p\ dt N}{N} \sum_i \left[(1-s)\left(\pi^0_i\rho\pi^0_i+\pi^1_i\rho\pi^1_i\right) +s\pi^0_i\rho\pi^0_i\right].
    \label{eq: postselection operation S3}
\end{align}
Then, the entire effective Hamiltonian including the Brownian part in the doubled Hilbert space is (up to an overall constant, see Appendix \ref{app:effH}):
\begin{align}
    P_{S_3}^{\mathrm{tot}}&= H^f_{\mathrm{Potts}}(U',J)+H^b_{\mathrm{Potts}}(U',J)\label{eq: Potts-2copy}\\
    &-2J\sum_{\langle i,j\rangle}\left(\sigma_{i}^{f \dagger}\sigma_{j}^f+\sigma_{i}^f \sigma_{j}^{f \dagger}\right)\left(\sigma_{i}^{b \dagger}\sigma_{j}^b + \sigma_{i}^b \sigma_{j}^{b \dagger}\right) \nonumber\\
      &-2\left(U+\frac{p(s-2)}{9}\right)\sum_i\left(\tau_{i
    }^f+\tau^{f \dagger}_{i}\right)\left(\tau_{i}^b + \tau^{b \dagger}_{i}\right)\nonumber, 
\end{align}
where $H^{f/b}_{\mathrm{Potts}}(U',J)$ are Potts Hamiltonians of the form of Eq.~\eqref{eq: Potts}, but with constant coefficients $U_i(t)=U'=U+\frac{p(1-2s)}{9}$ and $J_{ij}(t)=J$ acting on the forward/backward copy only.
To find a local order parameter, notice that under the $\prod_j\chi_j$ generator $\sigma_i \to \sigma_i^\dagger$ and $Q_i \to -Q_i$ while under the $\prod_j\tau_j$ generator $\sigma_i \to \omega^2 \sigma_i$ and $Q_i$ remains unchanged. Hence $\sigma_i$ transforms nontrivially under the action of the entire symmetry group, while $Q_i$ only transforms nontrivially under the $\mathbb{Z}_2$ subgroup generated by $\prod_j\chi_j$. For completeness we look at both order parameters and define the corresponding Rényi-2 correlators:
\begin{align}
    &{\sigma-\sigma\;\mathrm{correlations:}\quad C^{sw}(i,j)[\rho]=\frac{\Tr[\rho \sigma_i^\dagger\sigma_j \rho \sigma_j^\dagger\sigma_i]}{\Tr[\rho^2]}} \label{eq: sigma sigma}\\
    &{Q-Q\;\mathrm{correlations:}\quad C^{sw}(i,j)[\rho]=\frac{\Tr[\rho Q_i^\dagger Q_j \rho Q_j^\dagger Q_i]}{\Tr[\rho^2]}}\label{eq: Q Q}.
\end{align}
In contrast to the $\mathbb{Z}_2$-symmetry studied before, we do not find an analytic solution for the ground state and the transition. 
We instead resort to determine the ground state of Eq.~\eqref{eq: Potts-2copy} numerically using tensor network simulations~\cite{hauschild_efficient_2018}.
%
Increasing the postselection rate $ps$, we find a phase transition from the SW-SSB phase to the strongly symmetric phase with a critical point at $ps \approx 3.75$; see Fig.~\ref{fig: S3 DMRG}. 
Both order correlation functions appear to indicate the same phase transition. 

Let us now compare our formula for the Rényi-2 correlator Eq.~\eqref{eq: formula R2 maximally mixed} with the numerical data at $s=0$. 
The local Hilbert space dimension is $d=3$ and
$\sigma$ transforms under a two-dimensional representation of $S_3$ since under charge conjugation $\sigma \to \sigma^\dagger$ which yields $|\mathcal{I}_\sigma|=2$.
Because $\norm{\sigma}_2^2 = 3$ we find from Eq.~\eqref{eq: formula R2 maximally mixed} that for the $\sigma-\sigma$ correlations $C^{sw}=\frac{\norm{\sigma}_2^4}{d^2|\mathcal{I}_\sigma|}=\frac{3^2}{3^2\cdot 2}=\frac{1}{2}$. 
When we take the charge operator $Q$ as the local order parameter, we have $\norm{Q}_2^2=2$ and $|\mathcal{I}_Q|=1$ since $Q$ transforms under the 1d sign representation of $S_3$.
Hence in this case we find $C^{sw}=\frac{\norm{Q}_2^4}{d^2|\mathcal{I}_Q|}=\frac{2^2}{3^2\cdot 1}=\frac{4}{9}$.
Both results are consistent with the numerical results.
\section{Absence of phase transitions in steady states of quantum channels}
\label{sec: Lindbladian}
\subsection{General argument}
We will now argue that a transition to a strongly symmetric phase is absent in the steady state of general quantum channels, which are quantum operations that are trace-preserving.
To illustrate the logic, consider the $\mathbb{Z}_2$-symmetric circuit with postselection introduced in Sec.\ref{subsec:abelianZ2}. 
In SW-SSB phase in the absence of postselection at $ps=0$ the steady state was twofold degenerate with the two steady states given by the MMIS $\mathbb{1}\pm \prod_IZ_i$, while in the strongly symmetric phase at $J=0$  the steady state was unique and is given by $\ket{\uparrow \ldots \uparrow}\bra{\uparrow \ldots \uparrow}$.
This is the typical situation for spontaneous symmetry breaking, where the symmetry-broken phase has a higher degeneracy than the symmetric phase (Landau-type transition).

Our argument is based on the expectation that a phase transition from a SW-SSB phase to a strongly symmetric phase requires a change in the steady-state degeneracy from a degenerate phase (SW-SSB) to a phase with lower degeneracy (stronlgy symmetric phase).
However, in the following, we show that the degeneracy cannot be lowered unless trace preservation is violated. 
To this end, we use that the steady state of a quantum operation $\mathcal{E}$ is a state $\rho$ such that $\mathcal{E}[\rho]=\rho$. 
The structure of the argument for the degeneracy then closely follows \cite{zhang_stationary_2020,moharramipour_symmetry_2024}.
Given a unitary representation of the symmetry group $G$, the Hilbert space decomposes into irreducible representations $V_\lambda$ (irreps) of $G$ as follows:
\begin{equation}
    \mathcal{H}=\bigoplus_\lambda m_\lambda V_\lambda, \label{eq: decomposition}
\end{equation}
where $m_\lambda$ is the multiplicity of $V_\lambda$ and $d_\lambda = \dim V_\lambda$.
$\mathcal{H}$ has an orthonormal basis in terms of the decomposition in Eq.~\eqref{eq: decomposition} given by states of the form $\ket{\lambda,s_\lambda,\alpha_\lambda}$, where $\lambda$ labels the irrep, $s_\lambda$ a state within $V_\lambda$ and $1\leq \alpha_\lambda \leq m_\lambda$ labels the copy of $V_\lambda$ within $m_\lambda V_\lambda$.

Let $\mathcal{E}[\rho]=\sum_iK_i\rho K_i^\dagger$ be a strongly symmetric quantum channel, i.e., a trace-preserving quantum operation. Since $\sum_iK_iK_i^\dagger=\mathbb{1}$, the adjoint map $\mathcal{E}^*[\rho]=\sum_iK_i^\dagger \rho K_i$ is unital.
Since it is strongly symmetric, the operators $\rho_{\lambda,s, p}=\sum_{\alpha_\lambda}\ket{\lambda,s_\lambda,\alpha_\lambda}\bra{\lambda,p_\lambda,\alpha_\lambda}$ commute with all Kraus operators $K_i$ and since $\mathcal{E}^*$ is unital it follows that 
\begin{equation}
    \mathcal{E}^*[\rho_{\lambda, s, p}]=\sum_iK_i^\dagger \; \rho_{\lambda, s,p}K_i=\rho_{\lambda, s,p}\sum_iK_i^\dagger K_i=\rho_{\lambda, s,p}. \label{eq: steady state unital}
\end{equation}
Hence, $\mathcal{E}^*$ has at least $\sum_\lambda (d_\lambda)^2$ steady states.\footnote{Note that $\rho_{\lambda,s,p}$ is not a density matrix for $p\neq s$, since it is traceless. However, adding $\rho_{\lambda,s,p}$ to any linearly independent steady state density matrix results in a new linearly independent steady state and hence the unphysical nature of $\rho_{\lambda,s,p}$ for $s\neq p$ does not impact the steady-state degeneracy.} Since $\mathcal{E}^*$ and $\mathcal{E}$ have the same steady state degeneracy, it follows that the steady-state degeneracy of $\mathcal{E}$ is also lower bounded by $\sum_\lambda (d_\lambda)^2$. 
The lower bound is, for example, realized in the steady state of strongly-symmetric Brownian circuits since the dimension of the commutant $\mathcal{C}$ defined above Eq.~\eqref{eq: commutant steady state} is precisely given by $\sum_\lambda (d_\lambda)^2$ \cite{moudgalya_symmetries_2023,li_highly-entangled_2024}.  
Hence, any strongly $G$-symmetric trace-preserving quantum operation has at least the same steady-state degeneracy as a $G$-symmetric Brownian circuit, which shows SW-SSB, see Fig.~\ref{fig: where transition} for an illustration.
Since a phase transition from a symmetry broken to a symmetric phase is associated with a \textit{decrease} in the steady-state degeneracy, a Landau-type phase transition from the SW-SSB phase to a strongly symmetric phase is absent in the steady state of a trace-preserving quantum channel. 
This, of course, does not rule out the existence of thermal phase transitions in the steady state of Lindbladians. In that case, the weak symmetry is further broken down and the degeneracy of the steady state \textit{increases}.

\subsection{Concrete example}\label{subsec:Lindbladexample}
To illustrate the above argument we will now consider an example of a $\mathbb{Z}_2$-symmetric circuit subjected to a measurement and feedback protocol which seemingly follows the same principles as the probabilistic postselection protocol described in Sec.~\ref{sec: protocol}, but does so by preserving the trace of the density matrix.
We will then show numerically, that this model does not exhibit a robust strongly-symmetric phase in terms of the Rényi-2 correlator.
\begin{figure}
    \centering
    \includegraphics[width=1\linewidth]{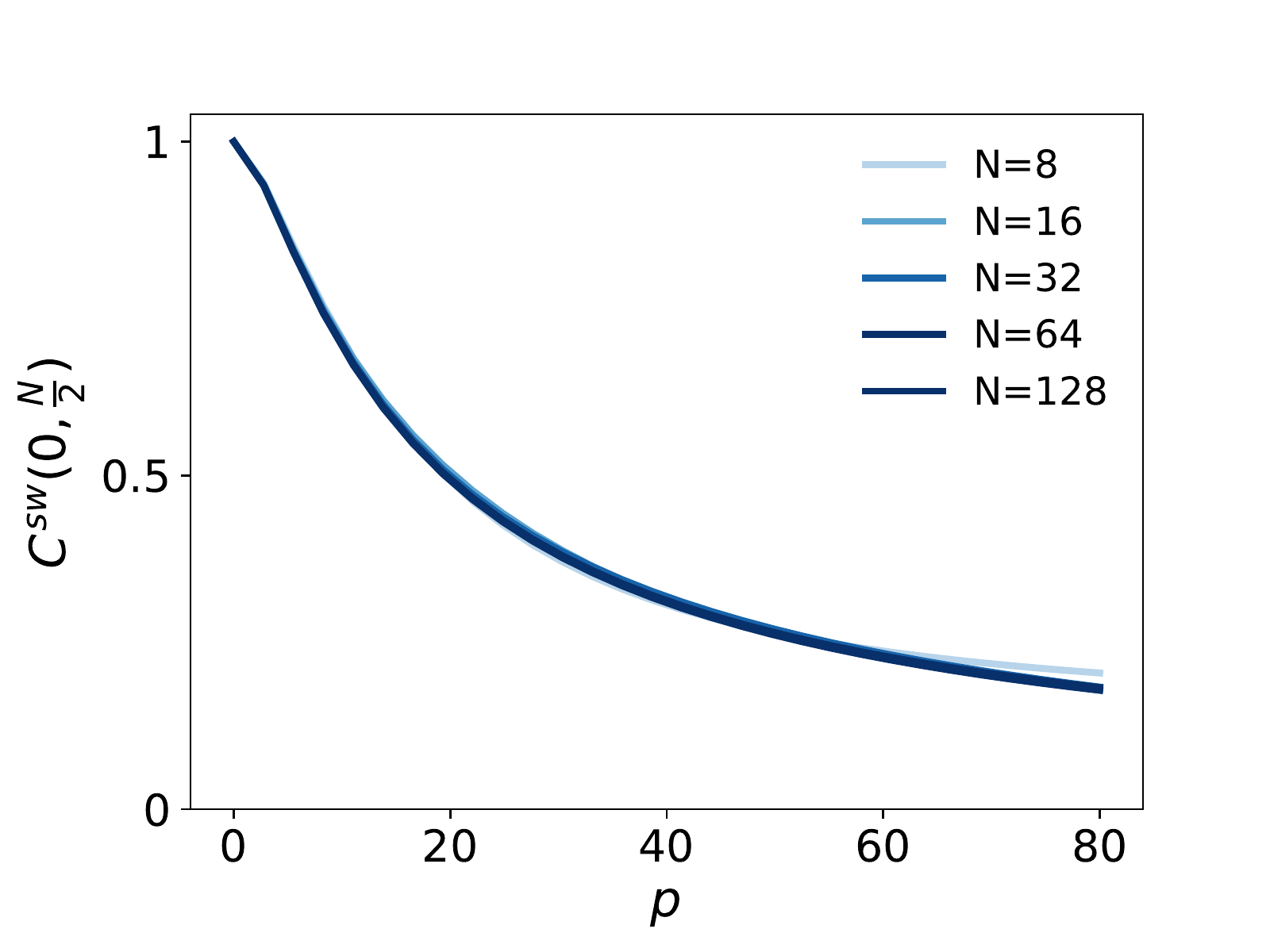}
    \caption{\textbf{Absence of a SW-SSB phase transition in the steady state of an adaptive, trace-preserving model.} The  Rényi-2 correlator smoothly decays with measurement probability $p$, which we obtained using the effective Hamiltonian expressed in terms of the total spin Eq.~\eqref{eq: total spin hamiltonian}. The curves for $N=16,32,64,128$ are on top of each other. Parameters are $J=U=1$.}
    \label{fig: adaptive Rényi}
\end{figure}
Concretely, we consider the $\mathbb{Z}_2$ circuit, characterized by the random Hamiltonian in Eq.~\eqref{eq: rando Ising}, and choose $\ket{\uparrow\ldots \uparrow}$ as the target state. 
We will now argue that any strongly $\mathbb{Z}_2$ symmetric trace-preserving protocol that targets \textit{only} this state necessarily leads to all-to-all interaction in the effective Hamiltonian acting in the doubled Hilbert space.
We showed in Sec.~\ref{sec: phase transition} that under $\mathbb{Z}_2$ symmetric Brownian dynamics, the vectorized density matrix $\opket{\rho}$ is confined to a subspace $\widetilde{\mathcal{H}} \subset \mathcal{H}_f\otimes \mathcal{H}_b$ (see discussion above Eq.~\eqref{eq: P for Z2 base model}). 
Recall that $\widetilde{\mathcal{H}}=\mathrm{span}\{\bigotimes_i\opket{\widetilde{p}_i}, \; p_i \in \{\uparrow,\downarrow\}\}$, where we have defined $\opket{\widetilde{\uparrow}}_i=\ket{\uparrow}_{i,f}\otimes\ket{\uparrow}_{i,b}$ and $\opket{\widetilde{\downarrow}}_i=\ket{\downarrow}_{i,f}\otimes\ket{\downarrow}_{i,b}$.
In terms of the unvectorized density matrix $\rho$, this is equivalent to saying that $\rho$ is diagonal in the $Z$-basis, i.e., it can be expressed as a mixture of states of the form $\ket{b_1 \ldots b_N}$ with $b_i \in \{\uparrow, \downarrow\}$.
Since the initial state lies in the sector of positive parity, the state remains in this sector and we are guaranteed an even number of $\downarrow$s.
Targeting the state $\ket{\uparrow \ldots \uparrow}$ now corresponds to locally flipping $\downarrow$ to $\uparrow$.
In order to respect the strong $\mathbb{Z}_2$ symmetry, we are restricted to correct pairs of $\downarrow\downarrow$ to $\uparrow \uparrow$.
Targeting the state $\ket{\uparrow \ldots \uparrow}$ necessarily involves correcting configurations of the form $\ket{\ldots \downarrow_i \ldots \downarrow_j \ldots} \to \ket{\ldots \uparrow_i \ldots \uparrow_j \ldots}$ for all pairs of sites $(i,j)$, leading to an all-to-all coupling.
We also introduce all-to-all couplings for the Brownian dynamics in our concrete example, which helps us to solve the model explicitly.
As before, we take the initial state to be $\rho_0 = \ket{\uparrow \ldots \uparrow}\bra{\uparrow \ldots \uparrow}$.
In addition, consider the following measurement protocol on a system of $N$ qubits:
\begin{enumerate}
    \item With probability $p\frac{dt\ N(N-1)}{2}$ measure the operators $Z_i$ and $Z_j$ for a randomly uniform chosen pair $i\neq j$.
    \item If the measurement outcomes correspond to $\downarrow$ for both measurements, apply the gate $X_i X_j$, thereby flipping both spins. This increases the probability of flowing towards the target steady state $\ket{\uparrow\ \cdots\ \uparrow}\bra{\uparrow\ \cdots\ \uparrow}$.
    For any other measurement  outcome, do nothing.
\end{enumerate}
The normalization of the probability is such that $p$ has the interpretation of a measurement rate per pair.
Since the initial state lies in the positive parity sector, we are always guaranteed an even number of $\uparrow$'s. Then the protocol drives the system towards the target state $\ket{\uparrow \ldots \uparrow}$ (similar to the postselection model) by correcting configurations of the form $\downarrow \downarrow$ to $\uparrow \uparrow$.
Since this protocol described above is trace-preserving, the dynamics become Lindbladian for $dt \to 0$.
The jump operators are given by $\pi^+_i\pi^+_j$, $\pi^+_i\pi^-_j$, $\pi^-_i\pi^+_j$ and $X_iX_j \pi^-_i\pi^-_j$, where $\pi_i^\pm = \frac{1\pm Z_i}{2}$ (see Appendix \ref{app:effH}).
By following the same steps as to derive Eq.~\eqref{eq: TFI}, we can express the time evolution of the vectorized density matrix $\opket{\rho}$ in the doubled Hilbert space in terms of an effective non-hermitian Hamiltonian $P^{\mathrm{tot}}_{\mathbb{Z}_2}$ (see Appendix \ref{app:effH}).
As before we can restrict the analysis to the reduce Hilbert space $\widetilde{\mathcal{H}}$, defined above Eq.~\eqref{eq: P for Z2 base model}. Here $P^{\mathrm{tot}}_{\mathbb{Z}_2}$ acts as:

\begin{align}  &P^{\mathrm{tot}}_{\mathbb{Z}_2}=2J\left(N^2-4\widetilde{S}^x\widetilde{S}^x\right)\nonumber\\
    &-\frac{p}{2}\left(N\widetilde{S}^z-\widetilde{S}^z\widetilde{S}^z+\widetilde{S}^+\widetilde{S}^+-\frac{N^2}{4}\right) -\frac{pN}{4}+\frac{p}{2}\widetilde{S}^z.  \label{eq: total spin hamiltonian}
\end{align}
Here, $\widetilde{S}^{x}=\frac{1}{2}\sum_i \widetilde X_i$ (similarly for $\widetilde{S}^{y,z}$) are the total spin operators and $\widetilde{S}^+=\widetilde{S}^x +i\widetilde{S}^y$ is the spin raising operator.
The initial state $\opket{\rho_0}=\ket{\uparrow \ldots \uparrow}_f\otimes \ket{\uparrow \ldots \uparrow}_b = |{\widetilde{\uparrow} \ldots \widetilde\uparrow{\widetilde\uparrow}}\rangle$ lies in the sector of maximal total spin $\widetilde{S}=\frac{N}{2}$, and $P^{\mathrm{tot}}_{\mathbb{Z}_2}$ commutes with $\widetilde{\mathbf{S}}\cdot \widetilde{\mathbf{S}}$.
Hence the dynamics remains confined to the maximal spin sector with $\widetilde{S}=\frac{N}{2}$ for all times which greatly simplifies numerically determining the Rényi-2 correlator of the steady state $\lim_{t\to \infty}e^{-P_{\mathbb{Z}_2}^{\mathrm{tot}} t}\opket{\rho_0}$.
The result is plotted in Fig.~\ref{fig: adaptive Rényi}.
Contrary to the previous examples with postselection, we do not find a critical point, and  $C^{sw}$ decays smoothly to zero for large measurement rates. 
We observe similar behavior  for the fidelity, see App.~\ref{app:fidelity}.
Hence, an extended strongly-symmetric phase is absent in this model with active control, consistent with our general considerations that a Landau type phase transition is absent in trace-preserving dynamics. 
To see how our previous discussion applies to this model, we analyze the groundstate degeneracy $P_{\mathbb{Z}_2}^{\mathrm{tot}}$ in the two limits $p=0, J\neq 0$ and $J=0,p\neq 0$.
In the first case, the steady state is twofold degenerate with the two steady states being given by $\mathbb{1}\pm \prod_iZ_i$, which is the lower bound for the degeneracy of any $\mathbb{Z}_2$ symmetric Lindbladian.
In the other limit, the target state $\ket{\uparrow \ldots  \uparrow}$ is not the only steady state. All pure states of the form $\ket{\uparrow  \ldots \downarrow \ldots \uparrow}$ that feature a single downward pointing spin are also left unchanged by the measurement and feedback protocol, leading to a $N+1$ fold degeneracy.
Hence, even in the case of very strong measurements, the system is not able to break the steady-state degeneracy of the SW-SSB phase, required for Landau type SW-SSB transitions.
\section{Summary and Outlook}
\label{sec: discussion}
In this work, we have demonstrated the existence of Strong-to-Weak Spontaneous Symmetry Breaking (SW-SSB) phases and phase transitions in steady-states of quantum operations.
While SW-SSB is exhibited by the so-called Maximally Mixed Invariant States (MMIS), which are realized as averaged steady states of noisy symmetric unitary quantum evolutions, we showed that SW-SSB is stable to the addition of quantum operations such as measurements and postselection.
Further, we found that breaking trace preservation is a necessary condition for Landau-type SW-SSB transitions to occur, which can be achieved by postselection only.
With this knowledge, we constructed concrete protocols to realize the SW-SSB phase and transitions out of that phase for both abelian and non-abelian on-site symmetries by mapping the averaged steady state of certain noisy quantum operations onto ground states of effective quantum Hamiltonians.
On a technical note, we also derived a simple formula for the Rényi-2 correlator of a large class of MMIS of symmetry sectors of compact Lie groups and finite groups, and we proved rigorously that these states have SW-SSB. 
This opens up a number of avenues for future explorations. 
First, these MMIS can be interpreted as states with infinite chemical potential.
A pertinent next step in exploring the landscape of SW-SSB states is hence to consider states at different chemical potentials.
For abelian symmetries, these would amount to some convex combinations of the MMIS, and hence such states have SW-SSB (see Appendix \ref{sec: convex combinations} for a proof).
However, in the case of non-Abelian symmetries, general steady states could also be those that are not convex combinations of MMIS, i.e., those that are not reachable from strongly symmetric initial states. 
We have not studied such states in this work, and understanding their structure would be interesting future work. 
Morevoer, while we have only explored MMIS with on-site symmetries in this work, recent works have revealed numerous novel types of symmetries that are not of that form, including non-invertible symmetries~\cite{shao2023noninvertible}, Hilbert space fragmentation~\cite{moudgalya_hilbert_2022, Sala2020, khemani2020hsf, Rakovszky20, li_hsf_2023, moudgalya2021review}, and quantum many-body scars~\cite{moudgalya_scars_2024, serbyn2020review, moudgalya2021review, papic2021review}, and it would be interesting to understand MMIS in such settings as well, where steady states have been demonstrated to show interesting properties~\cite{li_hsf_2023, li_highly-entangled_2024, sahu2025entanglement}. 
Characterizing some of these examples would also involve understanding the SW-SSB of MMIS in symmetry sectors of sizes that are not growing exponentially with system size (e.g., these are ubiquitous in the study of Hilbert space fragmentation~\cite{Sala2020, khemani2020hsf, Rakovszky20, moudgalya2019krylov, yang2020hsf,  moudgalya2021review}), which are not covered by our results. 
There are also a number of directions to pursue with regard to phase transitions.
First, while in this work we have focused on SW-SSB phases and transitions with discrete symmetries, it would be interesting to understand them in the presence of continuous symmetries such as $U(1)$ and $SU(2)$.
SW-SSB for continuous symmetries is closely related to hydrodynamics~\cite{ogunnaike_unifying_2023,moudgalya_symmetries_2023, huang_hydrodynamics_2024}. Hence a transition out of the SW-SSB phase should then be associated with an interesting breakdown of hydrodynamics.
Second, although we showed in this work that continuous Landau-type transitions cannot arise without postselection, other kinds of transitions might exist, including dynamical ``freezing" transition in dipole-conserving systems~\cite{Morningstar2020, Feldmeier2021}. It would be interesting to study the fate of SW-SSB across such a transition.
In this context too, it would be pertinent to characterize MMIS in scalar symmetry sectors that grow sub-exponentially with system size.
Finally, on the practical front, in the future it will be interesting to explore measurement protocols for the fidelity and Rényi correlators that can experimentally detect such transitions. 
There have been proposals to obtain the Rényi-2 correlator via randomized measurements \cite{sun_scheme_2025}.
Even though the sample complexity of these methods usually scales exponentially with system size, they have been successfully used in experiments for non-linear functions of the density matrix \cite{elben_mixed-state_2020, brydges_probing_2019, satzinger_2021}, because the exponents are favorable to those of full-state tomography.

\begin{acknowledgments}
We thank Pablo Sala for useful discussions on SW-SSB, and Gustav Berth for discussions on representation theory. S.M. thanks Lesik Motrunich for a previous collaboration on \cite{moudgalya_symmetries_2023}. We acknowledge support from the Deutsche Forschungsgemeinschaft (DFG, German Research Foundation) under Germany’s Excellence Strategy--EXC--2111--390814868, TRR 360 – 492547816 and DFG grants No. KN1254/1-2, KN1254/2-1, the European Research Council (ERC) under the European Union’s Horizon 2020 research and innovation programme (grant agreement No. 851161), the European Union (grant agreement No. 101169765), as well as the Munich Quantum Valley (MQV), which is supported by the Bavarian state government with funds from the Hightech Agenda Bayern Plus. 
\end{acknowledgments}

{\textbf{Data availability.---}}Data, data analysis, and simulation codes are available upon reasonable request on Zenodo~\cite{zenodo}.

\bibliography{SWSSB_lib.bib}

\appendix\label{appendix}

\onecolumngrid
\section{Proof of SW-SSB of MMIS}
\label{appendix: SW-SSB maximally mixed}
\subsection{Main Theorem}
Before we proceed to the main definition, we need to address a small technical point about taking the $N \rightarrow \infty$ limit correctly.
We first need to know for which $N$ the MMIS $\rho^\infty_N(\theta)$ is well defined (i.e., the scalar symmetry sector $V_\theta(N)$ is not empty).
For example for qubit systems with a $U(1)$ symmetry generated by $\sum_iZ_i$, states with $\sum_iZ_i=0$ only exist for even $N$.
More generally, one needs to know how $\mathcal{H}_N$ decomposes into irreps for every $N$, and we will assume this is known for the groups we consider.
It is convenient to define the following notion.
\begin{definition}[n-regular scalar symmetry sector]
    A scalar symmetry sector $V_\theta$ is called $n$-regular, if there are $N_0,n \in \mathbb{N}$ such that $\mathrm{dim}V_\theta(N_0+L\cdot n)>0$ for all $N \in \mathbb{N}$.
\end{definition}

An $n$-regular scalar symmetry sector is thus one that has a well-defined thermodynamic limit by increasing the system size in steps of $n$.
The positive and negative parity sectors of $G = \mathbb{Z}_2$ for qubits are $1$-regular (and therefore also $2$-regular) and the sectors of a $G = U(1)$ generated by $\sum_i Z_i$ for qubits which have charge $Q$ are $2$-regular with $N_0=2Q$.
W.l.o.g. we can assume $n$ to be even since any $n$-regular scalar symmetry sector is also $2n$-regular.
Although we will in the end consider arbitrary sequences $\{\Lambda\}_N$ of system sizes where $\dim V_\theta(N_L)>0$, we will use sequences of the form $N_L=N_0+L\cdot n$ as a type of ``anchor" that help us in proving Thm. \ref{theorem: SWSSB of MMIS}.
Further, note that in the main text we only considered order parameters that acted non-trivially only on a single lattice site, however we can directly extend this notion and consider a set of order parameters $\{O^\alpha_\Omega\}_{\alpha \in \mathcal{I_O}}$ where every $O^\alpha_\Omega$ is supported on a fixed region $\Omega$ of the lattice.
We also assume that the order parameters form an irreducible representation of dimension $|\mathcal{I}_O|$ and are orthogonal w.r.t. the Hilbert-Schmidt inner product.
Taking the limit $|i-j|\to \infty$ then means that we take sequences $\Omega_{i_{N_L}},\Omega_{j_{N_L}}$ of translations of $\Omega$ parametrized by the system size $N$ with $\mathrm{dist}(\Omega_{i_{N_L}},\Omega_{j_{N_L}})\rightarrow \infty$.
We can now state the final result.\\
\begin{theorem}[SW-SSB of MMIS]\label{theorem: SWSSB of MMIS}
    Let $V_\theta$ be an $n$-regular scalar symmetry sector of an on-site representation of a compact Lie group $G$ (including finite group $G$) and $\{N_L\}_N$ be a sequence of increasing system sizes with $\dim V_\theta(N_L)>0$ for all $L$.
    Let $\{O^\alpha_\Omega\}_{\alpha \in \mathcal{I}_O}$ be a set of local order parameters that transform under an irreducible representation of dimension $|\mathcal{I}_O|$ and $P_\Omega$ any operator supported on $\Omega$.
    If we assume that $\dim V_\theta(N_L)\sim \frac{1}{\mathrm{poly}(N_L)}\dim (\mathcal{H}_{\mathrm{loc}})^{N_L}$ as $L\to \infty$ then:
    \begin{align}
        \lim_{\mathrm{dist}(\Omega_i,\Omega_j)\to\infty}\frac{\Tr\left[O_{\Omega_i}^{\alpha}\rho^\infty_{N_L}P_{\Omega_i}^{\dagger}P_{\Omega_j}\rho^\infty_{N_L}O_{\Omega_j}^{\alpha\dagger}\right]}{\Tr\left[\big(\rho^\infty_{N_L}\big)^2\right]}&=\frac{1}{|\mathcal{I}_O|(\dim \mathcal{H}_{\mathrm{loc}})^{2|\Omega|}}\sum_{\beta \in \mathcal{I}_O}\left|\Tr[P^\dagger O^\beta ]\right|^2 \label{eq: generalized Rényi-2}
        \\
        \lim_{\mathrm{dist}(\Omega_i,\Omega_j)\to\infty}C^{sw}(\Omega_{i}, \Omega_{j})[\rho^\infty_{N_L}]&=\frac{\norm{O^\alpha}_2^4}{|\mathcal{I}_O|(\dim \mathcal{H}_{\mathrm{loc}})^{2|\Omega|}} \label{eq:Rényi2expression},
\end{align}
\end{theorem}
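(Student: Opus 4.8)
The plan is to exploit that $\rho^\infty_N$ is, up to normalization, the orthogonal projector $P_\theta=\int_G e^{-i\theta_g}U_g\,dg$ onto $V_\theta(N_L)$, so that $\rho^\infty_N=P_\theta/\dim V_\theta(N_L)$ and $\Tr[(\rho^\infty_N)^2]=1/\dim V_\theta(N_L)$. Using $P_\theta^2=P_\theta$ and cyclicity of the trace, the left-hand side of Eq.~\eqref{eq: generalized Rényi-2} collapses to
\[
\frac{1}{\dim V_\theta(N_L)}\,\Tr\big[X\,P_\theta\,Y\,P_\theta\big],\qquad X=O^{\alpha\dagger}_{\Omega_j}O^{\alpha}_{\Omega_i},\quad Y=P^{\dagger}_{\Omega_i}P_{\Omega_j},
\]
where $X$ and $Y$ each factorize across the two regions once these are disjoint (automatic as $\mathrm{dist}(\Omega_i,\Omega_j)\to\infty$).

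Next I would insert the Haar form of each $P_\theta$ and use the on-site structure $U_g=u_g^{\otimes N_L}$ to factorize $\Tr[X\,U_g\,Y\,U_h]$ over the three regions $\Omega_i$, $\Omega_j$, and the $N_L-2|\Omega|$ bulk sites. The bulk yields $\chi(gh)^{N_L-2|\Omega|}$ with $\chi$ the local character, while the two regions contribute bounded local traces $F_i(g,h),F_j(g,h)$. Substituting $k=gh$ and using that $\theta$ is a one-dimensional character, so $e^{-i\theta_g}e^{-i\theta_{g^{-1}k}}=e^{-i\theta_k}$, gives
\[
\Tr\big[X\,P_\theta\,Y\,P_\theta\big]=\int_G e^{-i\theta_k}\,\Phi(k)\,\chi(k)^{N_L-2|\Omega|}\,dk,\quad \Phi(k)=\int_G F_i(g,g^{-1}k)F_j(g,g^{-1}k)\,dg,
\]
while $\dim V_\theta(N_L)=\int_G e^{-i\theta_g}\chi(g)^{N_L}\,dg$.

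The heart of the argument is a localization step. Let $Z=\{k\in G: u_k=\omega_k\mathbb{1}\}$ be the set of elements represented as scalars; these are exactly the maxima $|\chi|=d$. Since $\dim V_\theta(N_L)\sim d^{N_L}/\mathrm{poly}(N_L)$ by hypothesis, both Haar integrals are dominated by arbitrarily small neighborhoods of $Z$, the complement being suppressed by $O((d-\delta)^{N_L})$ for some $\delta>0$. Evaluating the local traces on $Z$, the scalar relation $u_k=\omega_k\mathbb{1}$ gives $\Phi(k)=\omega_k^{2|\Omega|}\Phi(e)$ and $\chi(k)^{-2|\Omega|}=\omega_k^{-2|\Omega|}d^{-2|\Omega|}$, so the amplitude $\Phi(k)\chi(k)^{-2|\Omega|}\equiv\Phi(e)d^{-2|\Omega|}$ is \emph{constant} on $Z$: the phases $\omega_k$ cancel. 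This constant therefore factors out of the localized numerator, which reduces to exactly $\Phi(e)d^{-2|\Omega|}\dim V_\theta(N_L)$, so the ratio converges to $\Phi(e)d^{-2|\Omega|}$ regardless of how the elements of $Z$ are weighted. It remains to evaluate $\Phi(e)=\int_G F_i(g,g^{-1})F_j(g,g^{-1})\,dg=\int_G|\langle O^\alpha, U_gPU_g^\dagger\rangle|^2\,dg$ in the Hilbert–Schmidt pairing; expanding $O^\alpha$ and $P$ in an operator basis adapted to the irrep $\mathcal{I}_O$ and applying the Great Orthogonality Theorem yields
\[
\Phi(e)=\frac{1}{|\mathcal{I}_O|}\sum_{\beta\in\mathcal{I}_O}\big|\Tr[P^\dagger O^\beta]\big|^2,
\]
which is Eq.~\eqref{eq: generalized Rényi-2}. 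Setting $P=O^\alpha$ and using orthogonality of the $\{O^\beta\}$ collapses the sum to $\norm{O^\alpha}_2^4$, giving Eq.~\eqref{eq:Rényi2expression}.

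The main obstacle is making the localization rigorous and uniform in $N_L$: one must show that both the non-constant part of the smooth amplitude near $Z$ and the exponentially small bulk of $G\setminus Z$ drop out of the ratio. The clean cancellation of the phases $\omega_k$ on $Z$ removes what would otherwise be the delicate issue of competing saddle points contributing with different phases, leaving a standard Laplace/dominated-convergence estimate; the one quantitative input needed is a bound of the form $\int_G|\chi(k)|^{N_L}\,dk=O\big(\mathrm{poly}(N_L)\,\dim V_\theta(N_L)\big)$, which the $n$-regularity and exponential-growth hypotheses supply by ensuring $\dim V_\theta(N_L)$ is not diminished by destructive interference among the contributions of $Z$. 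The remaining algebraic reduction and the Schur-orthogonality computation are routine.
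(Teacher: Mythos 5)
Your proposal is correct and follows essentially the same route as the paper's proof: the integral (Haar) representation of the projector, concentration of the character power $\chi(k)^{N_L}$ on the set of scalar-acting elements, cancellation of the resulting phases so that the amplitude is constant there, and Schur orthogonality for the remaining group integral. Your set $Z$ is exactly the paper's normal subgroup $H$ of scalar-acting elements, and the "standard localization estimate" you defer is precisely what the paper carries out via its delta-distribution lemma on the quotient $G/H$ (note that the correct bound there controls $\lvert\int_{U} f_L\rvert$ by $\int_G f_L$ plus an exponentially small remainder, rather than by $\int_U \lvert f_L\rvert$, which avoids the residual $\mathrm{poly}(N_L)$ factor your sketch would otherwise leave).
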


Before moving on to the proof of Theorem \ref{theorem: SWSSB of MMIS} in App. \ref{sec: proof of Thm 1}, a few comments are in order:
\begin{itemize}
{
\item The sum in Eq.~\eqref{eq: generalized Rényi-2} is over all operators $O^\beta$ that span the irreducible representation that $O^\alpha$ transforms under.
\item 
Eq.~\eqref{eq:Rényi2expression} follows directly from Eq. \eqref{eq: generalized Rényi-2} by taking $P=O^\alpha$ and taking the order parameters to be orthogonal w.r.t. the Hilbert-Schmidt inner product.
We include the more general expression in Eq.~\eqref{eq: generalized Rényi-2} in order to calculate more general correlation functions that are sometimes encountered in the literature~\cite{hauser_continuous_2023, lessa_strong--weak_2024} which we will demonstrate in App.~\ref{sec: U1 calculations} for the $U(1)$ symmetry generated by $\sum_iZ_i$.
\item In order to diagnose SW-SSB we also need to show that the ordinary correlation function $C^{w\emptyset}$ vanishes under the assumptions of Thm.~\ref{theorem: SWSSB of MMIS}. This is done in Lem.~\ref{lemma: linear correlations}.
Intuitively, this should be expected as the MMIS are states at infinite temperature where ordinary correlations should vanish.
\item It was shown in \cite{Biane_1993, Tate_2004} that the assumption $\dim V_\theta(N_L)\sim \frac{1}{\mathrm{poly}(N_L)}\dim(\mathcal{H}_{\mathrm{loc}})^{N_L}$ is fulfilled at least for all finite and semisimple compact Lie groups as well as certain representations of connected compact Lie groups.
}
\end{itemize}

\subsection{Preliminary Lemmas}
Before we can prove Thm.~\ref{theorem: SWSSB of MMIS}, we need to collect a few tools that will aid in evaluating the expressions for $C^{sw}$ and $C^{w\emptyset}$ (see Eq.~(\ref{eq: R2 LRO}) and (\ref{eq: no R1 LRO}) for their definitions) in the limit $N_L\to \infty$.
Crucially, we will be using the fact that the maximally mixed invariant state is expressed in terms of its integral representation Eq.~\eqref{eq: definition of maximally mixed}.
Denote by $C(G)$ the set of continuous complex valued functions on $G$. The first Lemma gives sufficient conditions for a sequence of functions $f_L$ on $G$ to act as a Dirac delta function centered at the identity $e$ on the group manifold as $L\to \infty$.
\begin{lemma}[Delta distributions on compact Lie groups\label{lemma: delta}]
   Let G be a compact Lie group and $(f_L)_{L\in \mathbb{N}}\in C(G)^\mathbb{N}$ such that $\lim_{L \to \infty} \frac{|f_L(g)|}{\int_Gf_L(g)dg}=0$ for all $g\neq e$. Then for all $\phi \in C(G)$
   \begin{equation}
       \lim_{L\to \infty}\int_G\phi(g)\frac{f_L(g)}{\int_Gf_L(g)dg}=\phi(e)
   \end{equation}
\end{lemma}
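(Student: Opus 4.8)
The plan is to recognize $\phi(e)$ as the value picked out by an approximate identity concentrated at the group identity, and to run the standard delta-sequence argument. First I would normalize, writing $g_L := f_L/\int_G f_L\,dg$, so that $\int_G g_L\,dg = 1$ by construction (the denominator is nonzero for large $L$, otherwise the quotient in the hypothesis is not defined). Since $\int_G g_L\,dg = 1$, the claim is equivalent to
\[
\int_G \bigl(\phi(g)-\phi(e)\bigr)\,g_L(g)\,dg \xrightarrow[L\to\infty]{} 0.
\]
Next, I would fix $\epsilon>0$ and use continuity of $\phi$ at $e$ to choose an open neighborhood $U\ni e$ with $|\phi(g)-\phi(e)|<\epsilon$ for all $g\in U$, and set $K:=G\setminus U$, which is compact and satisfies $e\notin K$.

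I would then split the integral as $\int_U + \int_K$. On the near region $U$, the bound is $\bigl|\int_U(\phi-\phi(e))g_L\,dg\bigr|\le \epsilon\int_U|g_L|\,dg\le \epsilon\int_G|g_L|\,dg$; when $f_L\ge 0$ this is simply $\le\epsilon$ because $\int_G|g_L|=\int_G g_L=1$, and in general one controls it through the normalized total variation $\int_G|f_L|\,dg\big/\bigl|\int_G f_L\,dg\bigr|$. On the far region $K$, the bound is $\bigl|\int_K(\phi-\phi(e))g_L\,dg\bigr|\le 2\|\phi\|_\infty\int_K|g_L|\,dg$, and the hypothesis supplies $|g_L(g)|\to 0$ for every $g\in K$. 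If this pointwise decay can be upgraded to $\int_K|g_L|\,dg\to 0$, then taking $L\to\infty$ first and $\epsilon\to 0$ afterwards closes the argument via a standard limsup estimate.

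The hard part will be precisely this upgrade from pointwise to $L^1$ decay on $K$: pointwise smallness of $|g_L|$ does not by itself prevent the unit mass from accumulating at a point other than $e$, so one genuinely needs the peaked structure of $f_L$ at the identity. In the cases of interest the normalized functions are of the form $g_L(g)\propto h(g)^{N_L}$ with $|h(g)|<h(e)$ for $g\neq e$, so by compactness $\sup_{g\in K}|g_L(g)|\le\bigl(\sup_{K}|h|/h(e)\bigr)^{N_L}\to 0$, whence $\int_K|g_L|\,dg\le \mathrm{vol}(K)\,\sup_K|g_L|\to 0$. Equivalently one may apply dominated convergence once a uniform bound $|g_L|\le C$ on $K$ is available, or invoke Egorov's theorem to confine the trouble to a set of arbitrarily small Haar measure. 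Establishing this uniform/dominated control on compact sets away from $e$ is the crux, and it is exactly what the peakedness of $f_L$ at the identity provides.
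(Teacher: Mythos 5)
Your decomposition is the same as the paper's: normalize $f_L$, subtract $\phi(e)$, split the integral into a small neighborhood $U$ of $e$ (controlled by continuity of $\phi$) and the compact complement $K=G\setminus U$ (controlled by the decay hypothesis), then send $L\to\infty$ before $\varepsilon\to 0$. So in structure you have reproduced the paper's argument.

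The one place you stop short --- upgrading the pointwise decay $|f_L(g)|/\int_G f_L\to 0$ on $K$ to $\sup_K|f_L|/\int_G f_L\to 0$ (or to $\int_K|f_L|/\int_G f_L\to 0$) --- is in fact the same place the paper's proof is incomplete: the paper bounds the far-region term by $\max_{g\in K}|f_L(g)|/\int_G f_L$ and asserts that this tends to zero ``by assumption,'' but pointwise convergence on a compact set does not imply convergence of the maximum (a bump of fixed mass whose center drifts toward some $g_0\neq e$ while its width shrinks satisfies the pointwise hypothesis yet concentrates away from $e$). So your diagnosis is correct, and your repair --- in every application $f_L$ has the form $h(g)^{N_L}$ with $|h(g)|<|h(e)|$ for $g\neq e$, so compactness gives $\sup_K|h|<|h(e)|$ and hence geometric decay of $\sup_K|f_L|/\int_G f_L$, which beats the polynomial correction between $\int_G f_L$ and $|h(e)|^{N_L}$ --- is exactly the content of the exponential-growth assumption $\dim V_\theta(N_L)\sim d^{N_L}/\mathrm{poly}(N_L)$ that the paper invokes when it applies the lemma. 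Strictly, the lemma should carry a uniform-on-compacts hypothesis; with it, both your argument and the paper's close. One further point in your favor: for the near region the paper pulls $\sup_U|\phi-\phi(e)|$ out of $\bigl|\int_U(\phi-\phi(e))f_L\bigr|$ leaving $\bigl|\int_U f_L\bigr|$ rather than $\int_U|f_L|$, which is only legitimate for nonnegative $f_L$, whereas the relevant $f_L=e^{-i\theta_g}\Tr[u_g]^{N_L}$ is complex; your total-variation phrasing is the honest general statement, though you should note that it then requires its own bound on $\int_G|f_L|/\int_G f_L$ (again supplied, up to polynomial factors, by the peaked structure). Your $\sup_{g\in K}|g_L(g)|\le(\sup_K|h|/h(e))^{N_L}$ is off by the polynomial normalization factor, but since the ratio is strictly less than one this does not affect the conclusion.
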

\begin{proof}
    Our goal is to show that for every $\varepsilon>0$ we have that
    \begin{align}
        &\left|\int_G\phi(g)\frac{f_L(g)}{\int_Gf_L(g)dg} -\phi(e)\right|=\left|\int_G(\phi(g)-\phi(e))\frac{f_L(g)}{\int_Gf_L(g)dg}\right| \leq \varepsilon \label{eq: delta proof objective}
    \end{align}
    for $L$ large enough. To do so, we follow a common strategy (see e.g. \cite{appel_mathematics_2007} p. 601) and split the integral into a contribution close to $e$ and one that is bounded away from $e$. Let $B_{\frac{\varepsilon}{2}}(\phi(e))$ be the $\frac{\varepsilon}{2}$ ball around $\phi(e)$. Then $U_{\frac{\varepsilon}{2}}=\phi^{-1}(B_{\frac{\varepsilon}{2}}(\phi(e)))$ is open by continuity. We have that:
\begin{align*}
    &\left|\int_G(\phi(g)-\phi(e))\frac{f_L(g)}{\int_Gf_L(g)dg}\right| \leq
    \left|\int_{U_{\frac{\varepsilon}{2}}}\left(\phi(g)-\phi(e)\right)\frac{f_L(g)}{\int_Gf_L(g)dg}\right|+\int_{G\setminus U_{\varepsilon}}\left|\phi(g)-\phi(e)\right|\frac{|f_L(g)|}{\int_Gf_L(g)dg} \label{eq: delta proof 1}
\end{align*}
The first contribution can be bounded by:
\begin{align}
   &\left|\int_{U_{\frac{\varepsilon}{2}}}\left(\phi(g)-\phi(e)\right)\frac{f_L(g)}{\int_Gf_L(g)dg}\; dg\right| \leq \sup_{g \in U_{\frac{\varepsilon}{2}}}|\phi(g)-\phi(e)|\frac{\left|\int_{U_\frac{\varepsilon}{2}}f_L(g)\; dg\right|}{\int_Gf_L(g)dg}\\
   &=\sup_{g \in U_{\frac{\varepsilon}{2}}}|\phi(g)-\phi(e)|\frac{\left| \int_Gf_L(g)dg -\int_{G\setminus U_\frac{\varepsilon}{2}}f_L(g)\; dg\right|}{\int_Gf_L(g)dg}\leq \frac{\varepsilon}{2}\left(1+\frac{\max_{g\in G\setminus U_{\frac{\varepsilon}{2}}}|f_L(g)|}{\int_G f_L(g)\; dg}\right),
\end{align}
where in the last step we used the triangle inequality, the fact that $\int_G dg=1$ and that $|\phi(g)-\phi(e)|\leq \frac{\varepsilon}{2}$ for all $g \in U_{\frac{\varepsilon}{2}}$ by definition of $U_{\frac{\varepsilon}{2}}$. Note that $\max_{g\in G\setminus U_{\frac{\varepsilon}{2}}}|f_L(g)|$ is attained in $G\setminus U_{\frac{\varepsilon}{2}}$ by continuity of $f_L(g)$ since $G\setminus U_{\frac{\varepsilon}{2}}$ is closed and thus compact since $G$ is compact.
The second term in Eq.~\eqref{eq: delta proof 1} is bounded as follows:
\begin{align}
    \int_{G\setminus U_{\varepsilon}}\left|\phi(g)-\phi(e)\right|\frac{|f_L(g)|}{\int_Gf_L(g)dg} \leq \frac{\max_{g\in G\setminus U_{\frac{\varepsilon}{2}}}|f_L(g)|}{\int_G f_L(g)\; dg}\int_{G\setminus U_{\varepsilon}}\left|\phi(g)-\phi(e)\right|\;dg\leq \frac{\max_{g\in G\setminus U_{\frac{\varepsilon}{2}}}|f_L(g)|}{\int_G f_L(g)\; dg}\left(\norm{\phi}_{L^1}+\phi(e)\right)\label{eq: delta proof 2}
\end{align}
In the last step we again used the triangle inequality and the definition of the $L^1$ norm $\norm{\phi}_{L^1}=\int_G|\phi(g)|dg$ which finite since $\phi$ is continuous and $G$ is compact. Combining Eqs.~\eqref{eq: delta proof 1} and \eqref{eq: delta proof 2} we arrive at
\begin{align}
    \left|\int_G(\phi(g)-\phi(e))\frac{f_L(g)}{\int_Gf_L(g)dg}\right| \leq \frac{\varepsilon}{2}+\frac{\max_{g\in G\setminus U_{\frac{\varepsilon}{2}}}|f_L(g)|}{\int_G f_L(g)\; dg}\left(\norm{\phi}_{L^1}+\phi(e) +\frac{\epsilon}{2}\right) \label{eq: delta proof 3}
\end{align}
Since by assumption $\lim_{L \to \infty} \frac{|f_L(g)|}{\int_Gf_L(g)dg}=0$ for all $g \neq e$, we can pick $L$ large enough such that the second contribution in Eq.~\eqref{eq: delta proof 3} becomes smaller than $\frac{\varepsilon}{2}$, which shows Eq.~\eqref{eq: delta proof objective}.
\end{proof} 
We also need the following statement, where we denote the set of integrable functions on G by $L^1(G)$:
\begin{lemma}[Fubini's theorem for Lie groups]\label{lem:fubini}
    Let H be a closed and normal subgroup of G.
    Then $G/H$ again admits a Lie group structure and for a function $\phi \in L^1(G)$ that is constant on the cosets $gH$ it holds true that:
\begin{equation}
    \int_G\phi(g)dg=\int_{G/H}\phi(gH)dgH
\end{equation}
Where $dgH$ is the Haar measure on $G/H$.

\end{lemma}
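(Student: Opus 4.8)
The plan is to reduce the identity to the uniqueness of normalized Haar measure together with the elementary change-of-variables formula for pushforward measures; the full Weil quotient-integral formula is not needed, precisely because $\phi$ is assumed constant on cosets. Throughout, all three groups $G$, $H$, $G/H$ are compact (since $G$ is compact and $H$ is closed), so their Haar measures are finite and I normalize each to be a probability measure, consistent with the convention $\int_G dg = 1$ used in Lemma~\ref{lemma: delta}. First I would establish the Lie group structure on $G/H$: since $H$ is a closed subgroup of the Lie group $G$, Cartan's closed-subgroup theorem makes $H$ an embedded Lie subgroup and endows the coset space with the unique smooth manifold structure for which the quotient map $\pi:G\to G/H$ is a surjective submersion, while normality of $H$ guarantees that the induced group operations on $G/H$ are smooth. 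Hence $G/H$ is a Lie group, compact as the continuous image of $G$, which is what gives meaning to the Haar measure $d(gH)$.

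The central step is to identify this Haar measure with the pushforward $\pi_*\mu_G$ of the Haar measure $\mu_G$ on $G$. I would argue that $\pi_*\mu_G$ is a left-invariant probability measure on $G/H$: it is a probability measure because $\pi$ is surjective and measurable, and for any $a\in G$ the left translations satisfy the intertwining relation $\pi\circ L_a = L_{\pi(a)}\circ\pi$, whence
\begin{equation}
(L_{\pi(a)})_*(\pi_*\mu_G) = \pi_*(L_a)_*\mu_G = \pi_*\mu_G,
\end{equation}
using left-invariance of $\mu_G$. Since $\pi$ is surjective, the translations $L_{\pi(a)}$ range over all left translations of $G/H$, so $\pi_*\mu_G$ is genuinely left-invariant. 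By uniqueness of the normalized Haar measure on the compact group $G/H$, it follows that $\pi_*\mu_G = \mu_{G/H}$.

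Finally I would invoke the coset-constancy hypothesis. Because $\phi$ is constant on each coset $gH$, the assignment $\tilde\phi(gH) := \phi(g)$ defines a function on $G/H$ that is well defined and Borel measurable (it is the unique map with $\tilde\phi\circ\pi = \phi$, and measurability descends through the quotient), and it lies in $L^1(G/H)$. Applying the standard change-of-variables identity for a pushforward measure then gives
\begin{equation}
\int_G \phi(g)\, dg = \int_G (\tilde\phi\circ\pi)(g)\, d\mu_G = \int_{G/H} \tilde\phi\, d(\pi_*\mu_G) = \int_{G/H} \tilde\phi(gH)\, d(gH),
\end{equation}
which is exactly the claimed identity once we write $\tilde\phi(gH)=\phi(gH)$ under the mild abuse of notation in the statement (the right-hand side being well defined precisely by coset-constancy).

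I expect the main technical obstacle to be the careful measure-theoretic justification of the identification $\pi_*\mu_G=\mu_{G/H}$ — in particular, verifying that $\pi_*\mu_G$ is a genuine Borel/Radon probability measure and that the descended function $\tilde\phi$ inherits measurability and integrability from $\phi$ — rather than the invariance computation, which is a one-line consequence of $\pi\circ L_a = L_{\pi(a)}\circ\pi$.
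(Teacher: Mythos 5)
Your proof is correct. Note, however, that the paper does not actually prove this lemma: it simply cites standard references on abstract harmonic analysis (Loomis; Br\"ocker--tom Dieck), where the statement appears as a special case of the Weil quotient-integral formula. Your argument is therefore a genuinely self-contained alternative, and a nicely economical one: by exploiting the coset-constancy hypothesis you avoid the full Weil formula (which would require disintegrating $\mu_G$ over the fibers of $\pi$ and integrating against the Haar measure of $H$ on each fiber) and reduce everything to the uniqueness of normalized Haar measure on the compact group $G/H$ plus the change-of-variables identity for pushforward measures. The one point you flag yourself --- that the descended function $\tilde\phi$ is measurable on $G/H$ --- is the only place where some care is genuinely needed; it can be dispatched either by the standard fact that the quotient Borel structure agrees with the topological Borel structure for quotients of compact (indeed Polish) groups by closed subgroups, or more simply by first treating continuous $\phi$ (for which $\tilde\phi$ is continuous since $\pi$ is a quotient map) and extending to $L^1$ by density. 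With that remark your argument is complete and, arguably, more informative than the paper's citation.
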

\begin{proof}
    This is a standard result from abstract harmonic analysis, see for example \cite{loomis_introduction_1953,brocker_representations_1985}.
\end{proof}
Let $R_V,R_W$ be representations of $G$ on the vector spaces $V$ and $W$.
An \textit{interwiner} $\tau$ is a linear function $V \to W$ such that $\tau \circ R_V(g)=R_W(g)\circ \tau$ for all $g \in G$.
Two representations are said to be isomorphic (denoted by $R_V \cong R_W$) if there exists an invertible interwiner between them. We will need the following fundamental results from representation theory:

\begin{lemma}[Schur's Lemma \cite{fulton_representation_2004, sepanski_compact_2007}]\label{lemma: Schur}
    Let $R_V$, $R_W$ be finite dimensional irreducible representations of a compact Lie group G and $\tau$ an interwiner. Then
    \begin{equation}
        \tau = c\cdot \mathbb{1} \; \mathrm{for}\;\mathrm{some} \; c\in S^1\; \mathrm{if}\; R_V\cong R_W\;\mathrm{and}\; \tau =0\;\mathrm{else}
    \end{equation}
\end{lemma}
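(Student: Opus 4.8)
The plan is to prove Schur's Lemma by the standard module-theoretic argument, exploiting the fact that the kernel and image of any intertwiner are themselves subrepresentations, and then invoking irreducibility of $R_V$ and $R_W$. Throughout, $\tau\colon V \to W$ denotes an intertwiner, so that $\tau\, R_V(g) = R_W(g)\,\tau$ for all $g\in G$, and we work over $\mathbb{C}$. Note that compactness of $G$ is not needed for the scalar/vanishing dichotomy itself; it enters only to justify the refinement $c\in S^1$, since every finite-dimensional representation of a compact group is unitarizable by averaging over the Haar measure.

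First I would establish the dichotomy. The kernel $\ker\tau$ is an invariant subspace of $V$: if $v\in\ker\tau$ then $\tau(R_V(g)v) = R_W(g)\tau(v) = 0$, so $R_V(g)v\in\ker\tau$. Likewise the image $\mathrm{im}\,\tau$ is an invariant subspace of $W$, since $R_W(g)\tau(v) = \tau(R_V(g)v)\in\mathrm{im}\,\tau$. Because $V$ and $W$ are irreducible, each of these subspaces is either $\{0\}$ or the whole space. Hence if $\tau\neq 0$, then $\ker\tau\neq V$ forces $\ker\tau=\{0\}$ (injectivity) and $\mathrm{im}\,\tau\neq\{0\}$ forces $\mathrm{im}\,\tau=W$ (surjectivity), so $\tau$ is a bijective intertwiner, i.e.\ an isomorphism of representations. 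Contrapositively, if $R_V\not\cong R_W$ then no nonzero intertwiner can exist and $\tau=0$, which is the second alternative of the lemma.

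For the case $R_V\cong R_W$, I would fix an invertible intertwiner and use it to identify $W$ with $V$, so that we may assume $R_V=R_W$ and $\tau\colon V\to V$. Because $\mathbb{C}$ is algebraically closed, the characteristic polynomial of $\tau$ has a root $c$, and thus $\tau - c\,\mathbb{1}$ has nontrivial kernel. Since $c\,\mathbb{1}$ commutes with every $R_V(g)$, the operator $\tau - c\,\mathbb{1}$ is again an intertwiner $V\to V$, so by the dichotomy of the previous step it is either zero or an isomorphism; it cannot be an isomorphism because its kernel is nontrivial. Therefore $\tau - c\,\mathbb{1}=0$, i.e.\ $\tau = c\,\mathbb{1}$. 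To obtain $c\in S^1$ I would invoke unitarity, which is natural since all representations here may be taken unitary: for a unitary representation $R_W(g)^\dagger = R_W(g^{-1})$, so $\tau^\dagger$ is also an intertwiner (from $W$ to $V$), whence $\tau^\dagger\tau = c'\,\mathbb{1}$ with $c'\geq 0$ by the scalar case just proved; normalizing a nonzero $\tau$ to be unitary then gives $|c|^2=1$, so $c\in S^1$.

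I do not expect a genuine obstacle, as this is a foundational result; the one point requiring care is the use of algebraic closure of the scalar field to guarantee the eigenvalue $c$ in the isomorphic case. Without it (for instance over $\mathbb{R}$) the endomorphism algebra of an irreducible representation need not collapse to the scalars, and the conclusion $\tau=c\,\mathbb{1}$ can fail. The remaining steps are routine verifications that the maps in question ($\tau$, $\tau-c\,\mathbb{1}$, and $\tau^\dagger$) are indeed intertwiners.
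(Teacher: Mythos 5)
Your proof is correct, and it is the standard textbook argument: the paper itself offers no proof of this lemma at all, simply citing Fulton--Harris and Sepanski, so your write-up supplies exactly the kernel/image dichotomy plus eigenvalue argument that those references contain. One point deserves flagging, though it is a defect of the paper's statement rather than of your reasoning: as literally stated, the conclusion $\tau = c\,\mathbb{1}$ with $c \in S^1$ is false for a general intertwiner (e.g.\ $\tau = 2\,\mathbb{1}$ intertwines $R_V$ with itself), and your own final step tacitly concedes this, since you obtain $|c|=1$ only \emph{after} normalizing $\tau$ to be unitary --- i.e.\ for the original $\tau$ your argument proves $c \in \mathbb{C}$, which is the correct statement of Schur's lemma over an algebraically closed field. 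The restriction $c \in S^1$ in the paper is presumably a typo for $c \in \mathbb{C}$ (or an implicit assumption that $\tau$ is unitary), and it is harmless for the paper's purposes: in its two applications (concluding $\Tr[O^\alpha]=0$ in the lemma on linear correlations, and feeding into the Schur orthogonality relations) only the vanishing case and the proportionality to the identity are used, never the modulus of $c$. Your observations that compactness enters only through unitarizability and that algebraic closure is essential for the scalar case are both accurate.
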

\begin{lemma}[Othogonality relations]
\label{eq: Schur orthogonality}
Let $R_V$ and $R_W$ be finite dimensional unitary representations of a compact Lie group $G$ with matrix elements $M^V_{ij}$ and $M^W_{nm}$ respectively. In the following $\overline{\cdots}$ stands for complex conjugation. Then
    \begin{equation}
        \int_GM_{ij}^V \; \overline{M^W_{nm}} dg = \begin{cases}
            0 \quad R_V \ncong R_W\\
            \frac{1}{\dim V} \delta_{in}\delta_{nm}
        \end{cases}
    \end{equation}
\end{lemma}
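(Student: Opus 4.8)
The plan is to prove this by the standard averaging construction that manufactures an intertwiner from an arbitrary linear map and then appeals to Schur's Lemma (Lemma~\ref{lemma: Schur}); I will assume, as the notation $R_V \ncong R_W$ and the factor $\frac{1}{\dim V}$ demand, that $R_V$ and $R_W$ are irreducible. For any linear map $A\colon W \to V$ I would define the Haar average
\begin{equation}
\widetilde{A} = \int_G R_V(g)\, A\, R_W(g)^\dagger \, dg,
\end{equation}
using unitarity to write $R_W(g)^{-1}=R_W(g)^\dagger$. The first step is to check that $\widetilde{A}$ intertwines the representations: by left-invariance of the Haar measure and the homomorphism property, for any $h\in G$ one finds $R_V(h)\widetilde{A}=\int_G R_V(hg)\,A\,R_W(g)^\dagger\,dg=\int_G R_V(g')\,A\,R_W(h^{-1}g')^\dagger\,dg'=\widetilde{A}\,R_W(h)$, where the last equality uses $R_W(h^{-1}g')^\dagger=R_W(g')^\dagger R_W(h)$ for unitary $R_W$.

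With this in hand, the second step is to invoke Schur's Lemma. If $R_V\ncong R_W$, then $\widetilde{A}=0$ for every $A$; if $R_V\cong R_W$, which I take to mean $R_V=R_W$ in a fixed basis (a genuine isomorphism can be absorbed into a change of basis), then $\widetilde{A}=c\,\mathbb{1}$ for a scalar $c$ depending on $A$. To extract the matrix elements I would specialize to $A=E_{jm}$, the elementary matrix with $(E_{jm})_{kl}=\delta_{kj}\delta_{lm}$, and write out the $(i,n)$ entry,
\begin{equation}
\widetilde{A}_{in}=\int_G \sum_{k,l}\big(R_V(g)\big)_{ik}\,(E_{jm})_{kl}\,\overline{\big(R_W(g)\big)_{nl}}\, dg=\int_G M^V_{ij}\,\overline{M^W_{nm}}\, dg,
\end{equation}
which is precisely the integral to be evaluated. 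The non-isomorphic case then gives vanishing of the integral at once.

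In the isomorphic case I would fix the constant by taking a trace: cyclicity together with $R_V(g)^\dagger R_V(g)=\mathbb{1}$ and $\int_G dg=1$ give $\Tr[\widetilde{A}]=\int_G \Tr[A]\,dg=\Tr[A]=\delta_{jm}$, while also $\Tr[\widetilde{A}]=c\,\dim V$. Hence $c=\delta_{jm}/\dim V$ and, since $\widetilde{A}_{in}=c\,\delta_{in}$,
\begin{equation}
\int_G M^V_{ij}\,\overline{M^W_{nm}}\, dg=\frac{1}{\dim V}\,\delta_{in}\,\delta_{jm},
\end{equation}
establishing the claim (the correct index pattern being $\delta_{in}\delta_{jm}$). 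I do not expect a genuine obstacle, as this is a textbook computation; the only points requiring care are the bookkeeping of complex conjugation versus Hermitian adjoint in passing between $R_W(g)^\dagger$ and the entries $\overline{M^W_{nm}}$, and the remark that treating $R_V=R_W$ suffices in the isomorphic case. The essential content is simply that the averaging produces an intertwiner, after which Schur's Lemma does all the work.
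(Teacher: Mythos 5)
Your proof is correct and is the standard averaging argument; the paper itself offers no proof of this lemma, treating it (like the adjacent Schur's Lemma and Fubini statements) as a textbook fact from the cited references, so there is nothing in the paper to compare against step by step. Your construction of the intertwiner $\widetilde{A}=\int_G R_V(g)\,A\,R_W(g)^\dagger\,dg$, the left-invariance computation, the application of Schur's Lemma, and the trace normalization are all sound, and you are right on the two points where care is needed: the lemma only makes sense if one adds the (implicit) hypothesis that $R_V$ and $R_W$ are irreducible and, in the isomorphic case, realized by the same matrices; and the index pattern in the paper's displayed formula, $\delta_{in}\delta_{nm}$, is a typo for $\delta_{in}\delta_{jm}$, which is what your calculation (and the standard references) produce. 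One further small remark: the paper's statement of Schur's Lemma asserts $c\in S^1$, which is also inaccurate — the scalar is an arbitrary complex number — and your proof correctly uses the general version.
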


\subsection{Proof Strategy}
\label{sec: proof strategy}
\begin{figure}
    \centering
    \includegraphics[width=0.5\linewidth]{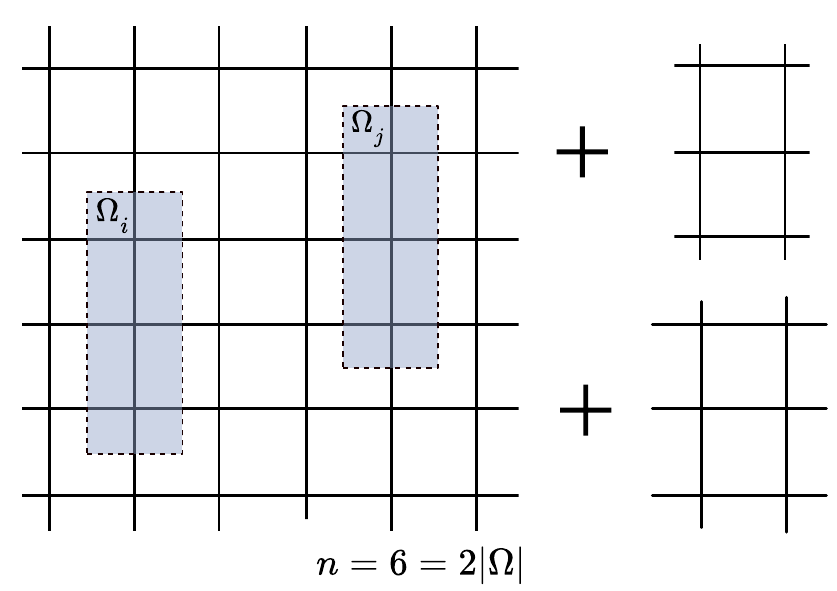}
    \caption{\textbf{Visualization of the proof strategy for the formula for $C^{sw}(i,j)$:} The order parameter $O_{\Omega}$ is supported on a region $\Omega$ containing three lattice sites, i.e., $|\Omega|=3$. $\Omega_i$ and $\Omega_j$ are translations of $\Omega$ and $C^{sw}(i,j)$ measures correlations between $\Omega_i$ and $\Omega_j$. When taking the limit $N\to\infty$, we can increase the system size in steps of $n=2|\Omega|=6$. }
    \label{fig:visualization L infinity}
\end{figure}

\paragraph{Notation:} We will frequently make use of the notation
\begin{equation}
    Ad^{M}(g)O=u_g^{\otimes M}Ou_g^{ \dagger\otimes M}
\end{equation}
for an operator $O$ on $\mathcal{H}_{\mathrm{loc}}^{\otimes M}$. This is the adjoint representation. We will also use $d=\dim \mathcal{H}_{\mathrm{loc}}$ interchangeably.

\paragraph{Proof strategy:} To show Thm.~\ref{theorem: SWSSB of MMIS} we will first consider sequences of the form $N_L=N_0+L\cdot n$ for an n-regular scalar symmetry sector and later lift the result to arbitrary sequences $N_L$ with $\dim V_\theta(N_L)>0$.
Although this could be done in one step, we split the argument for pedagogical reasons.
We also assume $2|\Omega|=n$, where $|\Omega|$ is the size of the support of the order parameter (see Fig. \ref{fig:visualization L infinity}).
There is no loss of generality here: If $2|\Omega|<n$ we can increase the definition of $\Omega$ by extending the order parameter trivially to the neighboring sites.
It is easy to see that this redefinition of $\Omega$ does not change the value of $C^{sw}$ in Thm.~\ref{theorem: SWSSB of MMIS}.
If $2|\Omega|>n$ we can pick an integer $k$ such that $2|\Omega|<kn$ and treat the symmetry sector $V_\theta$ as a $kn$-regular symmetry sector. 
\paragraph{Implications of $\dim V_\theta(N_L) \sim \frac{1}{\mathrm{poly}(N_L)}d^{N_L}$:} We will frequently make use of the following two consequences of this assumption:
\begin{align}
    \lim_{L \to \infty}\frac{V_\theta (N_0+(L-1)\cdot n)}{V_\theta (N_0+L\cdot n)}=d^{-n}=d^{-2|\Omega|} \quad &\mathrm{or}\,\mathrm{more}\,\mathrm{generally}\lim_{L \to \infty}\frac{V_\theta( N_L +M)}{V_\theta ( N_L)}=d^M\label{eq: constant}\\
    \lim_{L\to \infty}\frac{q^{N_0 + L\cdot n}}{\dim V_\theta(N_0+L\cdot n)}=0 \quad\mathrm{for}\;\mathrm{all}\; q<d \quad &\mathrm{or}\,\mathrm{more}\,\mathrm{generally} \lim_{L\to \infty}\frac{q^{N_L}}{\dim V_\theta(N_L)}=0 \label{eq: exponential growth}
\end{align}
In Eq. \eqref{eq: constant} we used $n=2|\Omega|$
\paragraph{How to apply Lemma \ref{lemma: delta}:} 
Our proof consists of expressing the MMIS as an integral over the group $G$ as in Eq.~\eqref{eq: definition of maximally mixed}.
The evaluation of the Rényi-2 correlator involves two integrals (as we will find in App.~\ref{sec: proof of Thm 1}), one of which will be eliminated by Lem.~\ref{eq: Schur orthogonality} and the other one by Lem.~\ref{lemma: delta} for an appropriate sequence of functions $f_L$.
In our case we will pick the function
\begin{equation}
f_L(g) = e^{-i\theta_g}\Tr[u_g^{\otimes (N_0+L\cdot n)}]=e^{-i\theta_g}\Tr[u_g]^{N_0+L\cdot n},
\label{eq:importantfunc}\end{equation}
which gives $\int_Gf_L(g)dg = \dim V_\theta(N_0+L\cdot n)$.
To apply Lem.~\ref{lemma: delta} we need to check that $\lim_{N\rightarrow \infty}\frac{|\Tr[u_g]|^{N_0+L\cdot n}}{V_\theta(N_0+L\cdot n)}\to 0$ for $g\neq e$.
Following \cite{sala_spontaneous_2024} we use that $|\Tr[u_g]|\leq \dim \mathcal{H}_{\mathrm{loc}}$ with equality if and only if $u_g \propto \mathbb{1}$.
If $e$ was the only element acting proportional to the identity, then the assumptions of Lemma \ref{lemma: delta} are fulfilled, since we have $\frac{q^{N_L}}{\dim V_{{\theta}}(N)}\to 0$ for any $q<\dim \mathcal{H}_{\mathrm{loc}}$ according to Eq.~\eqref{eq: exponential growth}.
However, we cannot rule out the case where a subgroup $H\subset G$ acts as a global phase such that $u_h = e^{i\gamma_h}\mathbb{1}$ for $h \in H$.
This can be dealt with by using Lem.~\ref{lem:fubini} to divide out the subgroup $H$ such that for the resulting group $G/ H$ only $e$ acts proportional to $\mathbb{1}$, making the application of Lemma \ref{lemma: delta} possible.
Lem.~\ref{lemma: delta} is applicable in this case since $H$ is directly seen to be normal and is also closed as the preimage of the closed set $S^1\mathbb{1}$ under a continuous map (representations of compact Lie groups are smooth by definition~\cite{sepanski_compact_2007}).
Further we need to check that all functions involved are constant on the cosets $gH$ to enable the application of Lemma \ref{lemma: delta} on $G/H$.
For the function $f_L(g)$ of Eq.~(\ref{eq:importantfunc}) we are interested in, we need to check that for $f_L(g h) = e^{-i\theta_{gh}} \Tr[u_{gh}]^{N_0+L\cdot n}$ is independent of $h$ if $h \in \ker Ad$.
Noting that $\theta_{gh} = \theta_g + \theta_h$, and $u_{gh} = u_g e^{i \gamma_h}$, we obtain independence of $h$ only if $(N_0+L\cdot n) \gamma_h = \theta_h\ \mod\ 2\pi$.
To see that this holds we make the following observations:
\begin{itemize}
    \item For all $\ket{\Psi}\in V_\theta(N_0+L\cdot n)$ we have that $U_g\ket{\Psi}=e^{i\theta_g}\ket{\Psi}$. 
This holds in particular for all elements $h \in H \subset G$, which act like $U_h \ket{\Psi}=e^{i\gamma _h}\ket{\Psi}$ everywhere. 
Hence if $\dim V_\theta(N_0+L\cdot n)>0$ then $\gamma_h =\theta_h \mathrm{mod} 2\pi$ for all $h \in H$
\item If $V_\theta$ is $n$-regular, then the following holds for states $\ket{\Psi_{N_0}}$,$\ket{\Psi_{N_0+n}}$ on $N_0$/$N_0+n$ sites:
\begin{align}
    u_h^{\otimes {N_0}}\ket{\Psi_{N_0}}=e^{i{N_0}\gamma_h}\ket{\Psi_{N_0}},\quad u_h^{\otimes ({N_0}+n)}\ket{\Psi_{{N_0}+n}}=e^{i({N_0}+n)\gamma_h }\ket{\Psi_{{N_0}+n}} \label{eq: discussion}
\end{align}
If we now take both states to belong to $V_\theta$ (which is possible by $n$-regularity), then we find ($\mathrm{mod} \;2\pi$) that $\theta_h =N_0\gamma_h$ and $\theta_h =(N_0+n)\gamma_h$ which leaves us with $n\gamma_h=0 \,\mathrm{mod}\,2\pi$.
\end{itemize}
This shows that $f_L(g)$ takes constant values on the cosets $gH$ and thus descends to a function on $G/H$. When applying Lemma \ref{lemma: delta} we will also need to show that the function we test $f_L$ against (i.e., the function $\phi$ in the language of Lemma \ref{lemma: delta} is also well defined on $G/H$.)

\subsection{Evaluating Correlation Functions}
\begin{lemma}[Linear correlations at infinite temperature]\label{lemma: linear correlations}
Ordinary correlation function vanishes for the $\infty-$temperature states. 
\begin{equation}
    \lim_{N\to \infty}\Tr[\rho_{N_L}^\infty O^{\alpha \dagger}_{\Omega_{i_{N_L}}}O^{\alpha }_{\Omega_{j_{N_L}}}]=0 \quad \forall \alpha \in \mathcal{I}
\label{eq:ordcorrfunc}
\end{equation}
\end{lemma}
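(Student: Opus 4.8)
The plan is to insert the integral representation of the MMIS from Eq.~\eqref{eq: definition of maximally mixed} and collapse the linear correlator to a single Haar integral that can be evaluated with the delta-function Lemma~\ref{lemma: delta}, in direct parallel to the Rényi-2 computation. Writing $\rho^\infty_N=\frac{1}{\dim V_\theta(N)}\int_G e^{-i\theta_g}u_g^{\otimes N}\,dg$ and using that $O^{\alpha}_{\Omega_i}$ and $O^{\alpha}_{\Omega_j}$ are supported on disjoint regions once $\mathrm{dist}(\Omega_i,\Omega_j)$ is large, the on-site structure of $u_g^{\otimes N}$ factorizes the trace across $\Omega_i$, $\Omega_j$, and the $N-2|\Omega|$ remaining sites, while translation invariance lets both local traces be evaluated on a fixed copy of $\Omega$:
\begin{equation}
\Tr[\rho^\infty_N O^{\alpha\dagger}_{\Omega_i}O^{\alpha}_{\Omega_j}]=\frac{1}{\dim V_\theta(N)}\int_G e^{-i\theta_g}\,\Tr[u_g^{\otimes|\Omega|}O^{\alpha\dagger}]\,\Tr[u_g^{\otimes|\Omega|}O^{\alpha}]\,\Tr[u_g]^{\,N-2|\Omega|}\,dg.
\end{equation}

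With $n=2|\Omega|$ and the anchor sequence $N=N_0+Ln$, the background exponent is $N-2|\Omega|=N_0+(L-1)n$, so the integrand already contains the function $f_{L-1}(g)=e^{-i\theta_g}\Tr[u_g]^{N_0+(L-1)n}$ of Eq.~\eqref{eq:importantfunc}, with $\int_G f_{L-1}\,dg=\dim V_\theta(N_0+(L-1)n)$. Pulling out this normalization leaves a finite prefactor $\dim V_\theta(N_0+(L-1)n)/\dim V_\theta(N_0+Ln)\to d^{-2|\Omega|}$ by Eq.~\eqref{eq: constant}, multiplying the normalized integral of $\phi(g):=\Tr[u_g^{\otimes|\Omega|}O^{\alpha\dagger}]\Tr[u_g^{\otimes|\Omega|}O^{\alpha}]$ against $f_{L-1}/\!\int f_{L-1}$. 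I would then invoke Lemma~\ref{lemma: delta} after descending to $G/H$ when a subgroup $H$ acts by global phases; this is permissible because $n\gamma_h\equiv 0 \bmod 2\pi$ makes both $f_{L-1}$ and $\phi$ (the latter picking up $e^{2i|\Omega|\gamma_h}=e^{in\gamma_h}=1$) constant on cosets, exactly as verified in App.~\ref{sec: proof strategy}, and the delta property of $f_{L-1}$ is the index-shifted version of the one already checked there. Hence the normalized integral converges to $\phi(e)=|\Tr[O^{\alpha}]|^2$.

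The genuinely new ingredient is the vanishing $\Tr[O^\alpha]=0$. Since the Hilbert-Schmidt trace is $Ad$-invariant, $\Tr[O^\beta]=\Tr[Ad^{|\Omega|}(g)O^\beta]=\sum_\gamma D_{\gamma\beta}(g)\Tr[O^\gamma]$, where $D(g)$ is the matrix of the irrep under which the order parameters transform; thus the vector $(\Tr[O^\gamma])_\gamma$ is fixed by all $D(g)$. Because an order parameter obeys $U_gO_iU_g^\dagger\neq O_i$, this irrep is nontrivial, and a nontrivial irreducible representation has no nonzero invariant vector, forcing $\Tr[O^\alpha]=0$ and therefore $\phi(e)=0$. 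The product then tends to $d^{-2|\Omega|}\cdot 0=0$, and lifting from the anchor sequence $N_L=N_0+Ln$ to an arbitrary sequence with $\dim V_\theta(N_L)>0$ goes through verbatim as in the proof of Thm.~\ref{theorem: SWSSB of MMIS} using the ``more general'' forms of Eqs.~\eqref{eq: constant} and \eqref{eq: exponential growth}. I expect the only real obstacle to be the bookkeeping around the $G/H$ reduction rather than anything conceptually deep: once $\phi(e)=|\Tr[O^\alpha]|^2=0$ is in hand the statement is immediate, so the crux is simply ensuring the delta-function machinery applies, which it does for the same reasons as in the Rényi-2 case.
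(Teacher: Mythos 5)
Your proposal is correct and follows essentially the same route as the paper's own proof: both factorize the trace over the disjoint supports, apply the delta-function Lemma~\ref{lemma: delta} on $G/H$ to the background factor $e^{-i\theta_g}\Tr[u_g]^{N-2|\Omega|}$, and reduce the statement to $\Tr[O^\alpha]=0$, which the paper obtains via Schur's Lemma applied to the trace as an intertwiner --- the same fact you derive by noting that a nontrivial irrep admits no nonzero invariant vector.
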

We will pick a sequence of system sizes of the form $N_L=N_0+L\cdot n$. From here the statement can be lifted to arbitrary increasing sequences of system sizes $\{N_L\}_{N\in \mathbb{N}}$ using steps that will be introduced around Eq.~\eqref{eq: independent sequence calculation}. We stick here with this type of sequence for pedagogical reasons.
\begin{proof}
To begin, note that the function $L(\mathcal{H}_{|\Omega|})\to\mathbb{C}, O \mapsto \Tr[O]$ obeys $\Tr[Ad^{|\Omega|}(g)O]=\Tr[U_gOU_g^\dagger]=\Tr[O]$ and is thus and intertwiner between $L(\mathcal{H}_{|\Omega|})$ with the adjoint representation and $\mathbb{C}$ with the trivial representation. 
Let $\{O^\alpha\}_\alpha$ be a set of local order parameters.
It follows by irreducibility and Schur's Lemma \ref{lemma: Schur} that $\Tr[O^\alpha]=0$ since order parameters transform nontrivially by definition.
For a chain of length $N_L$, $e^{-i\theta_h}U_h = e^{-i\theta_h}e^{i\lambda_h}\mathbb{1} = \mathbb{1}$ for $g \in H$ since on $H$ the functions $\gamma$ and $\theta$ agree with each other [see discussion around Eq.~\eqref{eq: discussion}].
Hence $e^{-i\theta_g}U_g$ takes constant value on the cosets $g H$ which we denote by $\left(e^{-i\theta_g}U_g\right)_{g H}$.
Then from Eq.~(\ref{eq: definition of maximally mixed}) and Lem.~\ref{lem:fubini} we find that we can express $\rho_{N_0+L\cdot n}^\infty$ as:
 \begin{align}
     &\mathrm{dim}(V_\theta(N_0+L\cdot n)) \cdot \rho_{N_0+L\cdot n}^\infty = \int_G e^{-i\theta_g}U_g dg 
     =\int_{G/H} \left( e^{-i\theta_{g} }U_{g}\right)_{gH}  \;d(gH).
     \label{eq: Infinite temp state quotient}
 \end{align}
The ordinary correlation function of Eq.~(\ref{eq:ordcorrfunc}) can then be written as
\begin{equation}
    \lim_{L\to \infty}\Tr[\rho_{N_0+L\cdot n}^\infty O^{\alpha \dagger}_{\Omega_{i_{N_L}}}O^{\alpha }_{\Omega_{j_{N_L}}}] = \frac{1}{\dim(V_\theta(N_0 + L\cdot n))}\int_{G/H} \Tr[\left( e^{-i\theta_{g} }U_{g}\right)_{gH} O^{\alpha \dagger}_{\Omega_{i_{N_L}}}O^{\alpha }_{\Omega_{j_{N_L}}}]  \;d(gH)
\end{equation}
Here we split the trace into two factors corresponding to regions of the size $2|\Omega|$ and $N_0 -(L-1)\cdot n=N_0 -(L-1)\cdot 2|\Omega|$ (since $2|\Omega|=n$ by assumption).
Now consider $N$ large enough such that $\Omega_{i_{N_L}}\cap \Omega_{j_{N_L}} = \emptyset$.
We can then write for all $g H \in G/H$:
\begin{equation}
    \Tr[(e^{-i\theta_{g}}U_{g})_{g H}O^{\alpha \dagger}_{\Omega_{i_{N_L}}}O^{\alpha }_{\Omega_{j_{N_L}}}]=\left(\Tr[u_{g
     }^{\otimes |\Omega|}O^{\alpha \dagger}] \Tr[u_{g
     }^{\otimes |\Omega|}O^{\alpha}]\right)_{g H} \left(e^{-i\theta_{g}}\Tr[u_{g}]^{N-2|\Omega|}\right)_{g H},
\end{equation}
since $2|\Omega|=n$ we have $n\gamma_h=2|\Omega|\gamma_h=0\;\mathrm{mod}\ 2\pi$ [see the discussion around Eq.~\eqref{eq: discussion}] and thus $\Tr[u_{g}^{\otimes |\Omega|}O^{\alpha }] \Tr[u_{g}^{\otimes |\Omega|}O^{\alpha }]$ takes constant value on the cosets $gH$.
Taking things together, we find that:
 \begin{align}
&\limsup_{L\to\infty}\left|\Tr[\rho_{N_0+L\cdot n}^\infty O^{\alpha \dagger}_{\Omega_{i_{N_L}}}O^{\alpha }_{\Omega_{j_{N_L}}}]\right| \\
 &\leq \underbrace{\lim_{L\to\infty}\left|\frac{\mathrm{dim}(V_\theta(N_0+(L-1)\cdot n)}{\mathrm{dim}(V_\theta(N_0+L\cdot n))}\right|}_{d^{-n}=d^{-2|\Omega|}}\times \lim_{L\to\infty}\Bigg|\int_{G/H} \underbrace{\left(\Tr[u_{g}^{\otimes |\Omega|}O^{\alpha \dagger}]\Tr[u_{g
     }^{\otimes |\Omega|}O^{\alpha }]\right)_{gH}}_{\mathrm{evaluate}\,\mathrm{at}\,gH=e} \underbrace{\frac{(e^{-i\theta_{g}}\Tr[u_{g}]^{N_0-(N-1)\cdot n})_{gH}}{\mathrm{dim}(V_\theta(N_0+(L-1)\cdot n))}}_{=\frac{f_{L-1}(gH)}{\int_{G/H} f_{L-1}(gH)}}\;d(g H )\Bigg| \nonumber
    \end{align}
The last factor becomes a Dirac-delta centered at the identity element of $G/H$ as per our discussion in Sec.~\ref{sec: proof strategy}.
Hence we can solve the integral by evaluating the remaining part at the identity to find:
\begin{align}
\limsup_{L\to\infty}\left|\Tr[\rho_{N_0+L\cdot n}^\infty O^{\alpha \dagger}_{\Omega_{i_{N_L}}}O^{\alpha }_{\Omega_{j_{N_L}}}]\right|\leq d^{-2|\Omega|}\left|\Tr[O^{\alpha \dagger}]\Tr[O^{\alpha }] \right|=0, 
\end{align}
\end{proof}
\subsection{Proof of Theorem}
\label{sec: proof of Thm 1}
\begin{proof}
Let us introduce the notation:
\begin{align}
    C^{sw}_{O^\alpha P}(i_{N_L},j_{N_L})[\rho^\infty_{N_0+L\cdot n}]= \frac{\Tr[\rho^\infty_{N_0+L\cdot n}O^{\alpha\dagger}_{\Omega_{i_{N_L}}}O^{\alpha}_{\Omega_{j_{N_L}}}\rho^\infty_{N_0+L\cdot n}P^{\dagger}_{\Omega_{j_{N_L}}}P^{}_{\Omega_{i_{N_L}}}]}{\Tr[(\rho_{N_0+L\cdot n}^\infty)^2]}.
\end{align}
Since $\rho_{N_0+L\cdot n}^\infty$ is a projection normalized to $\Tr[\rho_{N_0+L\cdot n}^\infty]=1$,
we find immediately that 
\begin{equation}
    \Tr[(\rho_{N_0+L\cdot n}^\infty)^2]=\frac{\Tr[\rho_{N_0+L\cdot n}^\infty]}{\mathrm{dim}V_\theta(N_0+L\cdot n)}=\mathrm{dim}V_\theta(N_0+L\cdot n)^{-1}
\end{equation}
Again, expressing the MMIS in terms of the quotient group $G/H$ as in Eq.~\eqref{eq: Infinite temp state quotient} we find that
\begin{align}
    &\lim_{L\to\infty}C^{sw}_{O^\alpha P}(i_{N_L},j_{N_L})= \lim_{L\to\infty}\frac{\Tr[\rho^\infty_{N_0+L\cdot n}O^{\alpha\dagger}_{\Omega_{i_{N_L}}}O^{\alpha}_{\Omega_{j_{N_L}}}\rho^\infty_{N_0+L\cdot n}P^{\dagger}_{\Omega_{j_{N_L}}}P^{}_{\Omega_{i_{N_L}}}]}{\Tr[(\rho_{N_0+L\cdot n}^\infty)^2]} \nonumber \\
    =&\lim_{L\to\infty}(\mathrm{dim}V_\theta(N_0+L\cdot n))^{-1}
   \int_{G/H)}d(g_2 H) \int_{G/H}d(g_1 H)\left(\Tr[\left(e^{-i\theta_{g_1}}U_{g_1}\right)_{g_1 H}O^{\alpha\dagger}_{\Omega_{i_{N_L}}}O^{\alpha}_{\Omega_{j_{N_L}}}\left(e^{-i\theta_{g_2}}U_{g_2} \right)_{g_2 H}P^{\dagger}_{\Omega_{j_{N_L}}}P_{\Omega_{i_{N_L}}}]\right) \nonumber \\
        =&\lim_{L\to \infty}\frac{\dim V_\theta (N_0+(L-1)\cdot n)}{\dim V_\theta (N_0+L\cdot n)}\lim_{L \to \infty}\int_{G/H}d(g_2 H) \int_{G/H}d(g_1 H) \Bigg(\nonumber \\
        &\hspace{1cm}\underbrace{\Tr[u_{g_1 H}^{\otimes 2|\Omega|}O^{\alpha\dagger}_{\Omega_{i_{N_L}}}O^{\alpha}_{\Omega_{j_{N_L}}}u_{g_2 H}^{\otimes 2|\Omega|}P^{\dagger}_{\Omega_{j_{N_L}}}P_{\Omega_{i_{N_L}}}]}_{\mathrm{evaluate}\,\mathrm{at\,}g_2=g_1^{-1}}
        \underbrace{\frac{\left(e^{-i\theta_{g_1g_2}}\Tr[u_{g_1 g_2}]^{N_0+(L-1)\cdot n}\right)_{g_1 H \cdot g_2H }}{\mathrm{dim}V_\theta(N_0+(L-1)\cdot n)}}_{f_{L-1}(g_1H\, g_2H)/\int_{G/H}f_{L-1}(g_1H\,g_2 H)} \Bigg)\label{eq: expand R2 expression}
\end{align}
In the above expression, we have split the trace into a product of traces over $\mathcal{H}_{2|\Omega|}=\bigotimes_{i \in \Omega_{i_{N_L}}\cup \Omega_{j_{N_L}}} \mathcal{H}_{\mathrm{loc}}$ in the first trace and the rest of $\mathcal{H}_{N_L}$ in the second trace.
Now we define $\widetilde{g} = g_1 g_2$ and use left invariance of the Haar measure.
Then as before, we apply Lem.~\ref{lemma: delta} to the second factor, which eliminates one integral by evaluating the first factor at $\widetilde{g}=e$.
Then, the above can be written as:
\begin{equation}
    \lim_{L \to \infty}C^{sw}_{O^\alpha P}(i_{N_L},j_{N_L})=d^{-2|\Omega|} \int_{G/H}d(g H) \left(\Tr[u_{g}^{\otimes |2\Omega|} O^{\alpha\dagger}_{\Omega_{i_{N_L}}}u^{{\otimes 2|\Omega|}\dagger}_{g}u^{\otimes 2|\Omega|}_{g}O^{\alpha}_{\Omega_{j_{N_L}}}u^{{2|\Omega|}\dagger}_{g}P^{\dagger}_{\Omega_{j_{N_L}}}P_{\Omega_{i_{N_L}}}]\right)_{g H}.
    \label{eq: R2 proof one integral gone}
\end{equation}
We also used that $\lim_{L\to \infty}\frac{\dim V_\theta(N_0+(L-1)\cdot n)}{\dim V_\theta(N_0+L\cdot n)}$.
The integrand in the above expression is well defined on the quotient $G/H$ since $H$ acts via multiplication with a phase.
Under the adjoint action of $G$ on $\mathcal{H}_{2|\Omega|}$, $O^{\alpha}_{\Omega}$  transforms under an irreducible representation with matrix elements $M_{\alpha \beta}$
\begin{align}
    Ad^{2|\Omega|} (g)[O^\alpha_\Omega]=u_g^{\otimes2|\Omega|}O^\alpha_\Omega (u_g^\dagger)^{\otimes2|\Omega|}=\sum_{\beta}M_{\alpha \beta}(g)O^{\beta}_\Omega \label{eq: matrix elements O}
\end{align}
This descends to an irreducible representation  $Ad^{2|\Omega|}(gH)$ on the quotient $G/H$ with the same matrix elements, since dividing out $H$ only forces the adjoint representation to be faithful but does not change the matrix elements.
By inserting Eq.~\eqref{eq: matrix elements O} into \eqref{eq: R2 proof one integral gone} and expressing the $uOu^\dagger$ in terms of the adjoint representation, we obtain
\begin{align}
&\lim_{L \to \infty}C^{sw}_{O^\alpha P}(i_{N_L},j_{N_L})=d^{-2|\Omega|}\int_{G/H}d(gH)\Tr\left[Ad^{2|\Omega|}(gH)[O^{\alpha}_{\Omega_{i_{N_L}}}] \overline{Ad^{2|\Omega|}}(gH)[O^{\alpha}_{\Omega_{j_{N_L}}}]P^{\dagger}_{\Omega_{j_{N_L}}}P_{\Omega_{i_{N_L}}}\right] \nonumber \\
    &=d^{-2|\Omega|} \sum_{\beta \beta'}\int_{G/H}d(gH)M_{\alpha \beta}(gH)\overline{M_{\alpha \beta'}}(gH)\Tr[O^{\beta\dagger}_{\Omega_{i_{N_L}}}O^{\beta'}_{\Omega_{j_{N_L}}}P^{\dagger}_{\Omega_{j_{N_L}}}P_{\Omega_{i_{N_L}}}]\nonumber \\
    &=\frac{1}{|\mathcal{I}_O|d^{2|\Omega|}}\sum_\beta\Tr[O^{\beta\dagger}_{\Omega_{i_{N_L}}}O^{\beta}_{\Omega_{j_{N_L}}}P^{\dagger}_{\Omega_{j_{N_L}}}P_{\Omega_{i_{N_L}}}] =\frac{1}{|\mathcal{I}_O|\dim (\mathcal{H}_{\mathrm{loc}})^{2|\Omega|}}\sum_\beta\left|\Tr[O^{\beta \dagger} P ]\right|^2,
\end{align}
where $\overline{\cdot}$ denotes element wise complex conjugation. 
From the second to the third line we used Schur orthogonality relations from Lem.~\ref{eq: Schur orthogonality}. 
This already yields the formula for the Rényi-2 correlator as presented in Thm.\ref{theorem: SWSSB of MMIS}.
As a mathematical detail, we have so far assumed that we increase the system size in steps of $n$. 
In general this need not be the case and we can consider arbitrary increasing sequences $\{N_L\}_{L \in \mathbb{N}}$, for as long as $\dim V_\theta(N_L)>0$ for all $L$, i.e. the MMIS is defined at every system size. 

\textbf{Independence of the sequence:}
While the above concludes the proof for the particular sequence $N_L=N_0+L\cdot n$ that we chose to approach the thermodynamic limit, we can also check that the result does not depend on this choice of sequence.
Consider an arbitrary increasing sequence $\{ N_L\}_{N \in \mathbb{N}}$ such that $\dim V_\theta(N_L)>0$ for all $N$.
Denote by $\widetilde{L}$ the largest integer such that $N_0 +\widetilde{L}\cdot n \leq N_L$.
The crucial step now is that the difference $\Delta N = N_L-(N_0 +\widetilde{L}\cdot n)$ only takes values $\Delta N = 0,1,\ldots ,n$ since the scalar symmetry sector is $n$-regular.
For a given $K \in \mathbb{N}$ we will consider the subsequence $\Lambda^K_{N}$ which contains only those $N_L$ with $\Delta N = K$.
Every $N_L$ will be contained in one of those subsequences. Hence if we can prove that $C^{sw}$ converges to the same value for all sequences $\Lambda^K_N$, we show that $\lim_{L\to \infty} C^{sw}[\rho_{N_L}]$ exists and is equal to $\lim_{L\to \infty} C^{sw}[\rho_{N_L^K}]$. We can proceed as in Eq.~\eqref{eq: expand R2 expression}, but instead of splitting the trace into two parts, we now split into three parts, corresponding to regions of the size $2|\Omega|=n$, $N_0+(\widetilde{L}-1)\cdot n$ and $K=\Lambda_K -(N_0+\widetilde{L}\cdot n)$:
\begin{align}
    &\lim_{L\to\infty}C^{sw}_{O^\alpha P}(i_{N_L},j_{N_L})[\rho^\infty_{\Lambda^K_N}]= \lim_{L\to\infty}\frac{\Tr[\rho^\infty_{\Lambda^K_N}O^{\alpha\dagger}_{\Omega_{i_{N_L}}}O^{\alpha}_{\Omega_{j_{N_L}}}\rho^\infty_{\Lambda^K_N}P^{\dagger}_{\Omega_{j_{N_L}}}P^{}_{\Omega_{i_{N_L}}}]}{\Tr[(\rho_{\Lambda^K_N}^\infty)^2]} \nonumber \\
    =&\lim_{L\to\infty}(\mathrm{dim}V_\theta({\Lambda^K_N}))^{-1}
   \int_{G/H)}d(g_2 H) \int_{G/H}d(g_1 H)\left(\Tr[\left(e^{-i\theta_{g_1}}U_{g_1}\right)_{g_1 H}O^{\alpha\dagger}_{\Omega_{i_{N_L}}}O^{\alpha}_{\Omega_{j_{N_L}}}\left(e^{-i\theta_{g_2}}U_{g_2} \right)_{g_2 H}P^{\dagger}_{\Omega_{j_{N_L}}}P_{\Omega_{i_{N_L}}}]\right)\nonumber \\
    =&\lim_{L \to \infty}\underbrace{\frac{\dim V_\theta(N_0 +(\widetilde{L} -1)\cdot n)}{\dim V_\theta(N_L^K)}}_{\to d^{N_0 +(\widetilde{L} -1)-\Lambda_K }=d^{-n-K}}\int_{G/H}d(g_2 H) \int_{G/H}d(g_1 H) \Bigg(\nonumber \\
        &\hspace{1cm}\underbrace{\Tr[u_{g_1 H}^{\otimes 2|\Omega|}O^{\alpha\dagger}_{\Omega_{i_{N_L}}}O^{\alpha}_{\Omega_{j_{N_L}}}u_{g_2 H}^{\otimes 2|\Omega|}P^{\dagger}_{\Omega_{j_{N_L}}}P_{\Omega_{i_{N_L}}}]\Tr[u_{g_1H}^{\otimes K}u_{g_2H}^{\otimes K}]}_{\mathrm{evaluate}\,\mathrm{at}\, g_2=g_1^{-1}} 
        \underbrace{\frac{\left(e^{-i\theta_{g_1g_2}}\Tr[u_{g_1 g_2}]^{N_0+(L-1)\cdot n}\right)_{g_1 H \cdot g_2H }}{\mathrm{dim}V_\theta(N_0+(L-1)\cdot n)}}_{f_{L-1}(g_1H\, g_2H)/\int_{G/H}f_{L-1}(g_1H\,g_2 H)}\Bigg) \label{eq: independent sequence calculation}
\end{align}
where we again used Eq.~\eqref{eq: constant} to calculate the limit $\lim_{L \to \infty} \frac{\dim V_\theta(N_0 +(\widetilde{L} -1)\cdot n)}{\dim V_\theta(N_L^K)} = d^{-n-K}$. Compared with Eq.~\eqref{eq: expand R2 expression} there is an additional factor of $\Tr[u_{g_1H}^{\otimes K}u_{g_2H}^{\otimes K}]$. (Using the same argument as in Eq.~\eqref{eq: discussion} we can show that this is constant on the cosets $gH$ and thus is a well defined function on $G/H$). The presence of this factor does not impact the application of Lemma \ref{lemma: delta} to the last factor and we can still evaluate the remaining part of the integral at $g_2=g_1^{-1}$. Since $\Tr[u_{Hg_1}^{\otimes K}u_{Hg_1^{-1}}^{\otimes K}]=d^K$, this gives:
\begin{align}
    &\lim_{L \to \infty}C^{sw}_{O^\alpha P}(i_{N_L},j_{N_L})=d^{-2|\Omega|-K}d^K \int_{G/H}d(g H) \left(\Tr[u_{g}^{\otimes |2\Omega|} O^{\alpha\dagger}_{\Omega_{i_{N_L}}}u^{{\otimes 2|\Omega|}\dagger}_{g}u^{\otimes 2|\Omega|}_{g}O^{\alpha}_{\Omega_{j_{N_L}}}u^{{2|\Omega|}\dagger}_{g}P^{\dagger}_{\Omega_{j_{N_L}}}P_{\Omega_{i_{N_L}}}]\right)_{g H}
\end{align}
This is the same expression as in Eq.~\eqref{eq: R2 proof one integral gone} and we can finish the proof as before.
\end{proof}

\section{Maximally Mixed Invariant State (MMIS) for a \texorpdfstring{$U(1)$}{Lg} Symmetry}
\label{sec: U1 calculations}
In this section, we provide an example of the MMIS of a $U(1)$ symmetry.
We will calculate the Rényi-2 correlator both directly and by using the formula Eq.~\eqref{eq: formula R2 maximally mixed}.
We consider a system of $N$ qubits and the $U(1)$ symmetry is generated by $\sum_iZ_i$ and the scalar symmetry sector we consider is that of charge $Q$, i.e., $V_Q(N)=\{\ket{\Psi} \in \mathcal{H}_N| \sum_iZ_i \ket\Psi = Q\ket{\Psi}\}$.
We keep $Q$ constant when taking the limit $N\to \infty$ and allow for arbitrary charge $Q$. $V_Q(N)$ is spanned by all product states in the $Z$-basis that have $N_{\uparrow}=\frac{N+Q}{2}$ spins pointing up.
\subsection{Application of the formula}
To be able to apply Thm. \ref{theorem: SWSSB of MMIS} we need to show that $\dim V_Q(N) \sim \frac{1}{\mathrm{poly}(N)}2^N$ whenever it is nonzero. 
Using Sterling's approximation $k! \sim \sqrt{2\pi k}\left(\frac{k}{e}\right)^k$ we find that for $N\to \infty$:
\begin{equation}
    \dim V_Q(N) = \binom{N}{N_\uparrow} = \frac{N!}{N_\uparrow! (N-N_\uparrow )!} = \frac{N!}{(\frac{N}{2}+\frac{Q}{2})! (\frac{N}{2}-\frac{Q}{2})! }\sim \frac{N!}{\left[\left(\frac{N}{2}\right)!\right]^2}\sim \frac{2}{\sqrt{2\pi N}}2^N,
\end{equation}
where we used that $N\gg Q$ which is always true since we keep $Q$ fixed and take the limit $N\to \infty$.
Hence Theorem \ref{theorem: SWSSB of MMIS} holds.
We take the order parameter to be supported on a single site ($|\Omega|=1$ in the language of Theorem \ref{theorem: SWSSB of MMIS}).
This choice is possible for Thm. \ref{theorem: SWSSB of MMIS} to work, since the $U(1)$ symmetry is 2-regular.
Using the generalized version of the Rényi-2 correlators from Eq.~\eqref{eq: generalized Rényi-2}, we can apply the formula to other closely related correlators that are commonly studied in the context of symmetry breaking, such as those of the inner product $\Tr[\rho\  \mathbf{S}_i\cdot\mathbf{S}_j]$.
Here $\mathbf{S}_i = \frac{1}{2}(X_i, Y_i, Z_i)$, where $(\cdots)$ denotes the vector of Pauli operators at site $i$.
Ref.~\cite{sala_spontaneous_2024} proposed to extend this to SW-SSB.
Thus we wish to compute 
\begin{align}
    C^{sw}_{U(1)} = \frac{\Tr[\rho \mathbf{S}_i \cdot \mathbf{S}_j\rho\mathbf{S}_j \cdot \mathbf{S}_i]}{\Tr[\rho^2]} =\frac{1}{4}\sum_{O,P\in \{X,Y,Z\} }\frac{\Tr[\rho O_i^\dagger O_j\rho P_i^\dagger P_j]}{\Tr[\rho^2]}.
\end{align}
The correlation functions on the right hand side can then be evaluated using Eq.~\eqref{eq: generalized Rényi-2} from Thm.~\ref{theorem: SWSSB of MMIS}.

$Z_i$ transforms under the trivial (hence $|\mathcal{I}_Z|=1$) representation of $U(1)$ and $X_i$ and $Y_i$ span a two-dimensional real irrep ( $|\mathcal{I}_{XY}|=2$).
Hence using the formula in Eq.~(\ref{eq: generalized Rényi-2}) with $\dim \mathcal{H}_{\mathrm{loc}}=2$ and $|\Omega|=1$, we find:
\begin{equation}
    C^{sw}_{U(1)}=\frac{1}{4}\Bigg(\frac{1}{|\mathcal{I}_Z|2^2}\sum_{P \in \{X,Y,Z\}}\left|\Tr[P^\dagger Z]\right|^2 +\frac{1}{|\mathcal{I}_{XY}|2^2}\sum_{P \in \{X,Y,Z\}}\left(\left|\Tr[P^\dagger X]\right|^2+\left|\Tr[P^\dagger Y]\right|^2\right)\Bigg)=\frac{3}{16}.
\label{eq:formularesultU(1)}
\end{equation}
\subsection{Direct calculation}
To confirm the result of Eq.~(\ref{eq:formularesultU(1)}), we can also calculate $C^{sw}_{U(1)}$ directly. 
We use that $\mathbf{S}_i\cdot\mathbf{S}_j=\frac{1}{4}\left(P_{ij}-\mathbb{1}\right)$, where $P_{ij}$ swaps sites $i$ and $j$.
In the symmetry sector with $Q$ as the eigenvalue of $\sum_iZ_i$ on a chain of $L$ qubits, there are $N_\uparrow=\frac{Q+L}{2}$ $\uparrow$ spins.
Let $\{\ket{b}\}$ be a product state in the $Z$ basis with $N_{\uparrow}$ spins pointing up. Then we can write 
\begin{equation}
    \rho = \binom{N}{N_\uparrow}^{-1}\sum_{b \in V_Q(N)}\ket{b}\bra{b}
\end{equation}
Hence $\Tr[\rho^2]=\binom{N}{N_\uparrow}^{-1}$.
We have the following
\begin{align}
    \Tr[P_{ij}\rho^2] &= \binom{N}{N_\uparrow}^{-2}\sum_{b \in V_Q(N)}\bra{b}P_{ij}\ket{b} =\binom{N}{N_\uparrow}^{-2}\left(\binom{N-2}{N_\uparrow-2}+\binom{N-2}{N_\uparrow}\right) \\
    \Tr[P_{ij}\rho P_{ji}\rho] &= \binom{N}{N_\uparrow}^{-2} \sum_{b,b' \in V_Q(N)}|\bra{b} P_{ij}\ket{b'}|^2 =\binom{N}{N_\uparrow}^{-1}.
\end{align}
Hence $C^{sw}(i,j)$ is independent of $|i-j|$ and thus we find
\begin{equation}
    C^{sw}(i,j)[\rho]=\frac{1}{16}\frac{N-3N^2+4Q^2}{N(1-N)} \xrightarrow[N\to\infty]{}\frac{3}{16}
\end{equation}
Hence as $N\to \infty$ our formula produces the correct result.

\section{Bounds for the fidelity correlator}
\label{sec: fidelity bounds}
In this section we will consider symmetric thermal states $\rho_\beta = \frac{P e^{-\beta H}}{\Tr[Pe^{-\beta H}]}$ where $H$ is a $G$-symmetric Hamiltonian and $P$ is the projection onto a scalar symmetry sector of $G$.
Note that by virtue of being a \textit{scalar} symmetry sector, we have that $[P, H] = 0$.
We will now show the following lemma.
\begin{lemma}[Bounds for the fidelity-correlator]\label{lemma: bounds for fidelity}
    Let $\rho_\beta$ be a symmetric thermal state and $O^\alpha_\Omega$ a local order parameter. Then \label{lemma: bounds}
    \begin{equation}
        \frac{C^{sw}(\Omega_1,\Omega_2)[\rho_{\beta/2}]}{\norm{O^\alpha}^2_\infty}\leq F_{O^\alpha_\Omega}(\Omega_1,\Omega_2)[\rho_\beta]\leq \norm{O^\alpha}_\infty^2
    \end{equation}
Where the upper bound holds for all states $\rho$.
\end{lemma}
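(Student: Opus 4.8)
The plan is to first collapse the fidelity correlator into a single Schatten-$1$ norm, which makes both inequalities transparent. Writing $O_1 := O^\alpha_{\Omega_1}$, $O_2 := O^\alpha_{\Omega_2}$ and $\sigma := O_1^\dagger O_2\,\rho\,O_2^\dagger O_1$, I set $E := \sqrt{\rho}\,O_2^\dagger O_1\sqrt{\rho}$ and observe that $E^\dagger E = \sqrt{\rho}\,\sigma\,\sqrt{\rho}$. Hence
\[
F_{O^\alpha_\Omega}(\Omega_1,\Omega_2)[\rho] = \Tr\!\big[\sqrt{E^\dagger E}\,\big] = \norm{E}_1 = \norm{\sqrt{\rho}\,O_2^\dagger O_1\sqrt{\rho}}_1,
\]
an identity valid for \emph{every} state $\rho$ (it is the $C^{(1)}=F_O$ observation of the main text together with a polar-decomposition rewriting). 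This reformulation is the workhorse for both bounds.

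The upper bound then follows with no reference to thermality. Applying Hölder's inequality for Schatten norms, $\norm{ABC}_1\le\norm{A}_2\norm{B}_\infty\norm{C}_2$, to $A=C=\sqrt{\rho}$ and $B=O_2^\dagger O_1$, and using $\norm{\sqrt{\rho}}_2=\sqrt{\Tr\rho}=1$ together with $\norm{O_2^\dagger O_1}_\infty\le\norm{O_2^\dagger}_\infty\norm{O_1}_\infty=\norm{O^\alpha}_\infty^2$ (the two translates share the same operator norm), I obtain $F\le\norm{O^\alpha}_\infty^2$ for all $\rho$, giving the right-hand inequality.

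For the lower bound I would exploit the thermal structure through one key observation: since $[P,H]=0$ and $P^2=P$, the operator $W:=P\,e^{-\beta H/2}$ is positive semidefinite and satisfies $W^2=P\,e^{-\beta H}$. Consequently $\sqrt{\rho_\beta}=W/\norm{W}_2$ while $\rho_{\beta/2}=W/\norm{W}_1$ --- both are proportional to the \emph{same} operator $W$, which is exactly what links the scale $\beta$ to $\beta/2$. Substituting into the formulas above yields the two clean expressions $F_{O^\alpha_\Omega}(\Omega_1,\Omega_2)[\rho_\beta]=\norm{W O_2^\dagger O_1 W}_1/\Tr[W^2]$ and $C^{sw}(\Omega_1,\Omega_2)[\rho_{\beta/2}]=\Tr[O_1 W O_1^\dagger O_2 W O_2^\dagger]/\Tr[W^2]$.

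The comparison is closed by trace-norm/operator-norm duality, $\norm{Z}_1=\sup_{\norm{X}_\infty\le1}|\Tr[X Z]|$. Choosing $Z=W O_2^\dagger O_1 W$ and the contraction $X=O_1^\dagger O_2/\norm{O^\alpha}_\infty^2$ (a contraction because $\norm{O_1^\dagger O_2}_\infty\le\norm{O^\alpha}_\infty^2$), a single cyclic rearrangement turns $\Tr[X Z]$ into $\Tr[(O_1 W O_1^\dagger)(O_2 W O_2^\dagger)]/\norm{O^\alpha}_\infty^2$. This quantity is nonnegative, being the trace of a product of two positive semidefinite operators, so the absolute value can be dropped; dividing by $\Tr[W^2]$ gives $F\ge C^{sw}[\rho_{\beta/2}]/\norm{O^\alpha}_\infty^2$. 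The main obstacle --- indeed essentially the only nontrivial insight --- is spotting that $\sqrt{\rho_\beta}$ and $\rho_{\beta/2}$ collapse onto the common operator $W$, and pairing this with the duality choice of $X$; everything else is routine Hölder and cyclicity manipulation.
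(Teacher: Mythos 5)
Your proof is correct and follows essentially the same route as the paper's: rewrite the fidelity as the trace norm $\norm{\sqrt{\rho}\,O_2^\dagger O_1\sqrt{\rho}}_1$, get the upper bound by a Hölder estimate, and get the lower bound by trace-norm/operator-norm duality with the test operator $O_1^\dagger O_2/\norm{O^\alpha}_\infty^2$, using $[P,H]=0$ and $P^2=P$ so that $\sqrt{\rho_\beta}$ and $\rho_{\beta/2}$ are both proportional to $Pe^{-\beta H/2}$. The only cosmetic difference is in the upper bound, where you apply the three-factor Hölder inequality directly to $\sqrt{\rho}\,O_2^\dagger O_1\sqrt{\rho}$ while the paper applies Cauchy--Schwarz to $\sqrt{\rho}\sqrt{\sigma}$ and then bounds $\Tr[\sigma]$; both are valid.
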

To set the stage we recall the definition of the Schatten p-norm for any operator $A$ on the Hilbert space $\mathcal{H}$, i.e., $A \in L(\mathcal{H})$:
\begin{equation}
    \norm{A}_p=\Tr[|A|^p]^{\frac{1}{p}}.
\end{equation}
Note that we will often use the identity $\norm{A}_1 = \Tr[\sqrt{A^\dagger A}]$.
$\norm{\cdot}_p$ has the following important properties:
\paragraph{Hölder inequality} (1.174 in \cite{watrous_theory_2018})
Let $p,q \in [1,\infty]$ with $\frac{1}{p}+\frac{1}{q}=1$ (For $p=1$ set $q=\infty$) and $A,B \in L(\mathcal{H})$
\begin{equation}
    \Tr[A^\dagger B] \leq \Tr[|A^\dagger B|]=||A^\dagger B||_1 \leq ||A||_p||B||_q \label{eq: H-Inequality}
\end{equation}

\paragraph{Duality:} (1.173 in \cite{watrous_theory_2018})
For $p,q \in [1,\infty]$ with $\frac{1}{p}+\frac{1}{q}=1$ (For $p=1$ set $q=\infty$) and $O \in   L(\mathcal{H})$:
\begin{equation}
    ||O||_p = \max\left\{\left|\Tr(O^\dagger A)\right|, ||A||_q \leq 1\right\} \label{eq: H-Duality}
\end{equation}
\textbf{Proof of Lemma \ref{lemma: bounds for fidelity}:}
\begin{proof}
The lower bound is obtained as follows by using Eq.~(\ref{eq: H-Duality}), i.e.,  
\begin{align*}
    &F_O(\Omega_1,\Omega_2)[\rho_\beta]=\Tr\left[\sqrt{\sqrt{\rho_\beta}O_{\Omega_1}^\dagger O_{\Omega_2} \rho_\beta O_{\Omega_2}^\dagger O_{\Omega_1}\sqrt{\rho_\beta}}\right]
    =\Tr\left[\sqrt{\sqrt{\rho_\beta}O_{\Omega_1}^\dagger O_{\Omega_2} \sqrt{\rho_\beta}\sqrt{\rho_\beta} O_{\Omega_2}^\dagger O_{\Omega_1}\sqrt{\rho_\beta}}\right]\\
   & =\Tr\left[\sqrt{\left(\sqrt{\rho_\beta} O_{\Omega_2}^\dagger O_{\Omega_1}\sqrt{\rho_\beta}\right)^\dagger\sqrt{\rho_\beta} O_{\Omega_2}^\dagger O_{\Omega_1}\sqrt{\rho_\beta}}\right]=||\sqrt{\rho_\beta}O_{\Omega_1}^\dagger O_{\Omega_2}\sqrt{\rho_\beta}||_1\\
    &=\max\left\{\left|\Tr[\sqrt{\rho_\beta}O_{\Omega_2}^\dagger O_{\Omega_1}\sqrt{\rho_\beta} A]\right|, ||A||_\infty \leq 1\right\}
    \geq \Tr\left[\sqrt{\rho_\beta}O_{\Omega_2}^\dagger O_{\Omega_1}\sqrt{\rho_\beta}\frac{O_{\Omega_1}^\dagger O_{\Omega_2}}{{||O_{\Omega_1}^\dagger O_{\Omega_2}||_\infty}} \right]\\
    &=
    \frac{\Tr[Pe^{-\frac{\beta H}{2}}O_{\Omega_1}^\dagger O_{\Omega_1} Pe^{-\frac{\beta H}{2}}O_{\Omega_1}^\dagger O_{\Omega_2}]}{||O||^2_\infty \Tr[Pe^{-\beta H}]}= \frac{C^{sw}({\Omega_1},{\Omega_2})[\rho_{\beta/2}]}{||O||^2_\infty}
\end{align*}
where we used the fact, that $P$ and $H$ commute, such that $\Tr[(Pe^{-\beta H/2})^2]=\Tr[P^2e^{-\beta H}]=\Tr[Pe^{-\beta H}]$ since P is a projection and the fact that $||O^\dagger_{\Omega_1} O_{\Omega_2}||_\infty=||O||_\infty^2$ since $O^\dagger_{\Omega_1}$ and $O_{\Omega_2}$ are supported on different regions on the lattice.
For the upper bound on the fidelity correlator, we can use Eq.~\eqref{eq: H-Inequality} with $p = 2$ (i.e Cauchy-Schwarz inequality).
Defining $\sigma = O_{\Omega_1}^\dagger O_{\Omega_2} \rho_\beta O_{\Omega_2}^\dagger O_{\Omega_1}$, we obtain
\begin{align*}
    &F_O({\Omega_1},{\Omega_2})[\rho]=||\sqrt{\rho}\sqrt{\sigma}||_1\leq ||\sqrt{\rho}||_2||\sqrt{\sigma}||_2=\sqrt{\Tr[O_{\Omega_1}^\dagger O_{\Omega_2} \rho O_{\Omega_2}^\dagger O_{\Omega_1}]}\\
    &\leq\sqrt{||O_{\Omega_1}^\dagger O_{\Omega_2} O_{\Omega_2}^\dagger O_{\Omega_1}||_\infty||\rho||_1}=||O||_\infty^2,
\end{align*}
where we used that $\Tr[\rho]=\Tr[|\rho|]=||\rho||_1=1$ and that $\sqrt{\sigma}^2=\sigma$, since $\sigma$ is positive semidefinite and hermitian and for the last inequality we used Hölder`s inequality \eqref{eq: H-Inequality} with $p = 1$ and $q = \infty$ as well as cyclicity of the trace and $||A^\dagger A||_\infty= ||A||^2_{\infty}$ for all bounded operators A.
\end{proof}

\section{Equivalent measures of SW-SSB}
\label{sec: Equivalent measures of SW-SSB}
In this appendix, we will consider alternative measures of SW-SSB with the goal of showing that they are equivalent to the fidelity.
As mentioned in the main text in Sec. \ref{sec: 2 SW-SSB} these measures are \textit{inequivalent} to the Rényi-2 correlator and (to our knowledge) the only known rigorous relation between them is given by Lemma \ref{lemma: bounds for fidelity} for arbitrary states. 
However, as we argue in Sec.~\ref{subsubsec:identtrans}, for steady-state density matrices, we can resort to the structure of these correlators being boundary correlators acting on the $d$-dimensional boundary of a $(d+1)$-dimensional statistical mechanics model, which cannot change the phases of the system. 
Recall that the fidelity correlator is defined as
$F_O(i,j)=F(\rho,\sigma)$ where $\sigma = O_i^\dagger O_j \rho O_j^\dagger O_i$ for a local order parameter $O_i$.
It was conjectured in Ref.~\cite{lessa_strong--weak_2024} that SW-SSB can equivalently be defined in terms of quantities known as the trace distance or the sandwiched Rényi divergence, their definitions are reviewed below.
The conjecture was motivated by the observation that as the fidelity, trace distance, and divergence are functions that compare $\rho$ and $\sigma$ and obey a data processing inequality.
In this section we will give a positive answer to that conjecture.
Denote the set of positive semidefinite operators on the finite dimensional Hilbert space $\mathcal{H}$ by $P(\mathcal{H})$.
Let us recall the definitions we will use in the following.
\paragraph{Fidelity:} For $\rho,\sigma \in P(\mathcal{H})$, define the fidelity as:
\begin{equation}
F(\rho,\sigma)=||\sqrt{\rho}\sqrt{\sigma}||_1= \Tr \left[\sqrt{\sqrt{\sigma}^\dagger\sqrt{\rho}^\dagger\sqrt{\rho}\sqrt{\sigma}} \right] = \Tr[\sqrt{\sqrt{\rho}\sigma\sqrt{\rho}}], 
\end{equation}
where we have also used the fact that the fidelity is symmetric between $\rho$ and $\sigma$ \cite{nielsen_quantum_2010, watrous_theory_2018}.
\paragraph{Trace-distance}
\begin{equation}
    T(\rho,\sigma)=\frac{1}{2}\norm{\rho-\sigma}_1
\end{equation}
\paragraph{Sandwiched Rényi Divergence}
For positive $\rho, \sigma \in P(\mathcal{H})$ and $\alpha <1$ define the sandwiched Rényi divergence as:
\begin{small}
\begin{equation}
\widetilde{D}_\alpha(\rho||\sigma)=
        \frac{1}{\alpha-1}\log\left(\frac{1}{\Tr[\rho]}\Tr\left[\left(\sigma^{\frac{1-\alpha}{2\alpha}}\rho\sigma^{\frac{1-\alpha}{2\alpha}}\right)^\alpha\right]\right)
\end{equation}
\end{small}
It satisfies a few important properties:
\begin{enumerate}[label=(\roman*)]
    \item \textbf{Data-Processing Inequality:} $\widetilde{D}_\alpha(\rho||\sigma) \geq \widetilde{D}_\alpha(\mathcal{E}(\rho)||\mathcal{E}(\sigma))$ for $\frac{1}{2}\leq\alpha\leq \infty$ and any CPTP-map $\mathcal{E}$~\cite{frank_monotonicity2013}
    \item \textbf{Monotonicity:} The map $\alpha \mapsto \widetilde{D}_\alpha(\rho||\sigma)$ is monotonically increasing \cite{muller-lennert_quantum_2014}
\end{enumerate}
\paragraph{Petz-divergence}
The Petz divergence for $\alpha <1$ is defined as:
\begin{equation}
    \bar{D}_\alpha(\rho||\sigma)=
        \frac{1}{1-\alpha}\log\left(\frac{1}{\Tr[\rho]}\Tr\left[\rho^\alpha \sigma^{1-\alpha}\right]\right)
\end{equation}
Note that the Petz divergence satisfies 
\begin{equation}
    \bar{D}_{1-\alpha}(\sigma||\rho)=\frac{1-\alpha}{\alpha}\bar{D}_\alpha(\rho||\sigma)+\frac{1}{\alpha}\log \frac{\Tr[\sigma]}{\Tr[\rho]} \label{eq: Petz switch rho sigma}
\end{equation} 
It is worth mentioning that the last two quantities can also be defined for $\alpha \geq 1$ but are only finite for appropriate choices of $\rho$ and $\sigma$, but we will not consider these cases in the following.
We then have the following equivalent definitions of SW-SSB, resolving a conjecture from \cite{lessa_strong--weak_2024}:
\\
\\

\begin{theorem}[Equivalent definitions of SW-SSB for converging sequences]
    The following definitions of SW-SSB are equivalent
\begin{enumerate}[label=(\roman*)\setlength{\leftmargin}{1cm}]
    \item $\lim_{|i-j|\to \infty}F(\rho,O_i^\dagger O_j\rho O_j^\dagger O_i)>0$
    \item $\lim_{|i-j|\to \infty}T(\rho,O_i^\dagger O_j\rho O_j^\dagger O_i)<\frac{1}{2}\left(1+\lim_{|i-j|\to \infty}\Tr[O_i^\dagger O_j\rho O_j^\dagger O_i]\right)$
    \item $\lim_{|i-j|\to \infty}D_\alpha(\rho||O_i^\dagger O_j\rho O_j^\dagger O_i)<\infty$ for all $\alpha \in[\frac{1}{2},1)$
    \item $\lim_{|i-j|\to \infty}D_\alpha(O_i^\dagger O_j\rho O_j^\dagger O_i||\rho)<\infty$ for all $\alpha \in(0,\frac{1}{2}]$
\end{enumerate}
where $D_\alpha \in \{\widetilde{D}_\alpha,\bar{D}_\alpha\}$ can be the Petz divergence or the sandwiched Rényi divergence.
\end{theorem}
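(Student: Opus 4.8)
The plan is to show that all four conditions reduce to the single statement $F_\infty:=\lim_{|i-j|\to\infty}F(\rho,\sigma_{ij})>0$, where $\sigma_{ij}=O_i^\dagger O_j\rho O_j^\dagger O_i$, which in turn says that the limiting supports of $\rho$ and $\sigma_{ij}$ do not become orthogonal. The cornerstone is the exact identity at $\alpha=\tfrac12$: since $\tfrac{1-\alpha}{2\alpha}=\tfrac12$ there, one has $\Tr[(\sigma^{1/2}\rho\sigma^{1/2})^{1/2}]=\norm{\sqrt{\rho}\sqrt{\sigma}}_1=F(\rho,\sigma)$, so that $\widetilde{D}_{1/2}(\rho||\sigma)=-2\log F(\rho,\sigma)$ using $\Tr\rho=1$. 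Hence $\lim\widetilde{D}_{1/2}<\infty\iff F_\infty>0$ with no semicontinuity subtleties, and the same positivity criterion governs the Petz version because $\Tr[\sqrt{\rho}\sqrt{\sigma}]>0\iff F(\rho,\sigma)>0\iff\mathrm{supp}\,\rho\not\perp\mathrm{supp}\,\sigma$. This pins condition (i) to the $\alpha=\tfrac12$ endpoint of condition (iii).

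To cover the whole family $\alpha\in[\tfrac12,1)$ in (iii) I would invoke the monotonicity of $\alpha\mapsto D_\alpha$. The direction (iii)$\Rightarrow$(i) is then immediate, since finiteness for every $\alpha$ includes $\alpha=\tfrac12$. The converse (i)$\Rightarrow$(iii) is the main obstacle: I must show that $F_\infty>0$ forces $\lim_{|i-j|\to\infty}D_\alpha<\infty$ for each fixed $\alpha\in(\tfrac12,1)$, despite $D_\alpha$ being only lower semicontinuous and its $\alpha\to1$ limit (the relative entropy) possibly diverging. Writing $D_\alpha=\tfrac{1}{\alpha-1}\log\widetilde{Q}_\alpha$ with $\widetilde{Q}_\alpha=\Tr[(\sigma^{\frac{1-\alpha}{2\alpha}}\rho\sigma^{\frac{1-\alpha}{2\alpha}})^\alpha]$ and $\alpha-1<0$, this is equivalent to bounding $\widetilde{Q}_\alpha$ away from zero along the sequence. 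I would obtain such a lower bound from the log-convexity of $\alpha\mapsto\log\widetilde{Q}_\alpha$ (complex interpolation), anchored at $\log\widetilde{Q}_{1/2}=\log F$, together with the boundedness $\norm{O}_\infty<\infty$ of the order parameters. The commuting case, where $\widetilde{Q}_\alpha=\sum_k p_k^\alpha q_k^{1-\alpha}$ and $F=\sum_k\sqrt{p_k q_k}$ vanish together term by term, makes the mechanism transparent and is the template to lift to the noncommuting setting.

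Condition (iv) I would relate back to (iii) by swapping arguments. For the Petz family this is immediate from Eq.~\eqref{eq: Petz switch rho sigma}, $\bar{D}_{1-\alpha}(\sigma||\rho)=\tfrac{1-\alpha}{\alpha}\bar{D}_\alpha(\rho||\sigma)+\tfrac{1}{\alpha}\log\tfrac{\Tr\sigma}{\Tr\rho}$: as $\alpha$ sweeps $[\tfrac12,1)$, $1-\alpha$ sweeps $(0,\tfrac12]$, and the correction term is finite because $\lim\Tr\sigma_{ij}$ exists and is positive in the nondegenerate case, so $\lim\bar{D}_{1-\alpha}(\sigma||\rho)<\infty\iff\lim\bar{D}_\alpha(\rho||\sigma)<\infty$. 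For the sandwiched family I would use the analogous argument-swap duality together with the symmetry $F(\rho,\sigma)=F(\sigma,\rho)$ at the shared endpoint $\alpha=\tfrac12$ and monotonicity. Condition (ii) I would settle with the generalized Fuchs--van de Graaf inequalities for subnormalized operators, $\tfrac12(\Tr\rho+\Tr\sigma)-F\le T\le\sqrt{(\tfrac{\Tr\rho+\Tr\sigma}{2})^2-F^2}$: the lower bound converts $T<\tfrac12(1+\lim\Tr\sigma)$ into $F_\infty>0$, and the upper bound makes the inequality strict exactly when $F_\infty>0$.

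The ``converging sequences'' hypothesis enters only to replace $\liminf/\limsup$ by genuine limits, so that the monotone and continuous relations above pass to the $|i-j|\to\infty$ limit. The degenerate case $\lim\Tr\sigma_{ij}=0$ needs no separate treatment, since then $F\le\sqrt{\Tr\rho\,\Tr\sigma}\to0$ and one checks directly that all four conditions fail together. I expect the real difficulty to sit entirely in the quantitative lower bound on $\widetilde{Q}_\alpha$ (equivalently, the uniform-in-$\alpha$ upper bound on the divergences) in the noncommuting regime; the remaining equivalences follow from the exact $\alpha=\tfrac12$ identity, monotonicity, the Petz switch Eq.~\eqref{eq: Petz switch rho sigma}, and Fuchs--van de Graaf.
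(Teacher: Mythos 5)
Your proposal assembles most of the right ingredients (the generalized Fuchs--van de Graaf inequalities for (i)$\iff$(ii), the identity $F=\exp(-\tfrac12\widetilde{D}_{1/2})$, monotonicity in $\alpha$, and the Petz argument-swap identity), but the step you yourself flag as the crux --- getting from $F_\infty>0$ to finiteness of $D_\alpha(\rho\|\sigma)$ for every fixed $\alpha\in(\tfrac12,1)$ --- is where the proposal has a genuine gap. Your plan is to lower-bound $\widetilde{Q}_\alpha=\Tr[(\sigma^{\frac{1-\alpha}{2\alpha}}\rho\,\sigma^{\frac{1-\alpha}{2\alpha}})^\alpha]$ along the sequence via log-convexity of $\alpha\mapsto\log\widetilde{Q}_\alpha$ anchored at $\alpha=\tfrac12$. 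Convexity anchored at a single point does not produce a lower bound at interior points in the direction you need (it bounds interior values from above by chords, and a one-sided anchor controls nothing below), so as stated this step does not close; you acknowledge it remains to be "lifted to the noncommuting setting," but that lift is precisely the content of the theorem. Relatedly, your fallback for the sandwiched family, an "analogous argument-swap duality," does not exist: the identity $\bar{D}_{1-\alpha}(\sigma\|\rho)=\tfrac{1-\alpha}{\alpha}\bar{D}_\alpha(\rho\|\sigma)+\tfrac1\alpha\log\tfrac{\Tr\sigma}{\Tr\rho}$ is special to the Petz divergence and has no counterpart of that form for $\widetilde{D}_\alpha$.

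The paper's proof avoids both issues by never pushing finiteness upward in $\alpha$ from $\tfrac12$. Instead it reflects the index and swaps the arguments: from $F(\sigma_N,\rho_N)=F(\rho_N,\sigma_N)$ bounded below one gets $\widetilde{D}_{1/2}(\sigma_N\|\rho_N)$ bounded above, hence by monotonicity $\widetilde{D}_{1-\alpha}(\sigma_N\|\rho_N)<\infty$ for $1-\alpha\le\tfrac12$ (monotonicity \emph{downward}, which is the easy direction); the two-sided comparison $\alpha\bar{D}_\alpha+(1-\alpha)(\log\Tr\rho-\log\Tr\sigma)\le\widetilde{D}_\alpha\le\bar{D}_\alpha$ then transfers this to the Petz divergence, the Petz swap identity carries it back to $\bar{D}_\alpha(\rho\|\sigma)$ for $\alpha\in[\tfrac12,1)$, and the upper half of the same comparison returns it to $\widetilde{D}_\alpha(\rho\|\sigma)$. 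The trace bounds $\Tr\rho_N\ge C_\rho$ and $\Tr\sigma_N\le C_\sigma$ are exactly what keeps the additive correction terms under control along the sequence. I would recommend replacing your interpolation step by this reflect-and-swap chain; the rest of your outline (Fuchs--van de Graaf for (ii), the $\alpha=\tfrac12$ endpoint identity, and the remark that a degenerate $\Tr\sigma_{ij}\to0$ kills all four conditions simultaneously) is consistent with the paper's argument.
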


The above statement includes the assumption that all limits involved exist to begin with.
Even though we expect this to be true in all physical settings, from a mathematical point of view we cannot rule out that the sequences involved do not converge.
To address this, the more rigorous version of this statement, in Theorem \ref{theorem: QuantumInfo correlators} replaces the $\lim$ with $\liminf$ and $\limsup$, i.e., the smallest and largest accumulation points of the sequences.
$\liminf_{|i-j|\to \infty}F(\rho,O_i^\dagger O_j\rho O_j^\dagger O_i)>0$ then means that the sequence $F(\rho,O_i^\dagger O_j\rho O_j^\dagger O_i)$ remains bounded away from $0$ as $|i-j|\to \infty$.
Consequently, any convergent subsequence will converge to a positive value.
This ensures that any way of taking the thermodynamic limit, that results in a well-defined value for $F$ as $|i-j|\to \infty$, will yield $F>0$.
The more mathematically precise statement is then:
\begin{theorem}[Equivalent definitions of SW-SSB]\label{theorem: QuantumInfo correlators}
Let $\{\rho_N\}_{N\in \mathbb{N}}$ and $\{\sigma_N\}_{N\in \mathbb{N}}$ be sequences of positive semidefinite operators acting on the finite dimensional Hilbert spaces $\mathcal{H}_N$ for all $N$. Assume that there are constants $C_\rho,C_\sigma>0$ such that $\Tr[\rho_N]\geq C_\rho$ and $\Tr[\sigma_N]\leq C_\sigma$ for all $N\in \mathbb{N}$. Then the following are equivalent:
    
\begin{enumerate}[label=(\roman*)\setlength{\leftmargin}{1cm}]
    \item $\liminf_{N\to \infty}F(\rho_N,\sigma_N)>0$
    \item $\liminf_{N\to \infty}T(\rho_N,\sigma_N)<\frac{1}{2}\left(\liminf_{N\to \infty}\Tr[\rho_N]+\liminf_{N\to \infty}\Tr[\sigma_N]\right)$
    \item $\limsup_{N\to \infty}D_\alpha(\rho_N||\sigma_N)<\infty$ for all $\alpha \in[\frac{1}{2},1)$
    \item $\limsup_{N\to \infty}D_\alpha(\sigma_N||\rho_N)<\infty$ for all $\alpha \in(0,\frac{1}{2}]$

\end{enumerate}
where $D_\alpha \in \{\widetilde{D}_\alpha,\bar{D}_\alpha\}$ can be the Petz divergence or the sandwiched Rényi divergence.
\end{theorem}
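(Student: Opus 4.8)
The plan is to treat the fidelity condition (i) as the central ``overlap detector'' and to show each of (ii)--(iv) is equivalent to it, the common theme being that all four quantities degenerate precisely when $\rho_N$ and $\sigma_N$ become asymptotically orthogonal in support. The tools I would use are: the Schatten--norm facts from Appendix~\ref{sec: fidelity bounds}, Eqs.~\eqref{eq: H-Inequality} and \eqref{eq: H-Duality}; the monotonicity of $\alpha\mapsto D_\alpha$ and the data-processing inequality; the Petz duality Eq.~\eqref{eq: Petz switch rho sigma}; and the hypotheses $\Tr[\rho_N]\ge C_\rho$, $\Tr[\sigma_N]\le C_\sigma$, which control additive $\log(\Tr\sigma/\Tr\rho)$ terms and, crucially, the spectra via $\norm{\sigma_N}_\infty\le\Tr[\sigma_N]\le C_\sigma$. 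The endpoint $\alpha=\tfrac12$ is the linchpin: a direct computation gives $\widetilde{D}_{1/2}(\rho\|\sigma)=-2\log\!\big(F(\rho,\sigma)/\Tr\rho\big)$, since there the exponent $\tfrac{1-\alpha}{2\alpha}=\tfrac12$ and $\Tr[(\sigma^{1/2}\rho\sigma^{1/2})^{1/2}]=F(\rho,\sigma)$; the Petz value at $\alpha=\tfrac12$ is the same expression with $F$ replaced by $\Tr[\sqrt\rho\sqrt\sigma]$.

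First I would dispose of (i)$\iff$(ii). One always has $T(\rho,\sigma)=\tfrac12\norm{\rho-\sigma}_1\le\tfrac12(\Tr\rho+\Tr\sigma)$, with equality iff the supports are orthogonal, which is also the unique case $F=0$. Quantitatively I would invoke the generalized Fuchs--van de Graaf inequalities for subnormalized operators, which sandwich $\tfrac12(\Tr\rho+\Tr\sigma)-T$ between positive multiples of $F$ and of $F^2$; taking $\liminf$/$\limsup$ then converts ``$F$ bounded away from $0$'' into ``$T$ bounded away from $\tfrac12(\Tr\rho+\Tr\sigma)$'' and back, using $\Tr\sigma\le C_\sigma$ to keep the normalizations under control. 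This step is essentially bookkeeping once the subnormalized bounds are in place.

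Next I would reduce the divergence conditions to endpoint behavior. Because $\alpha\mapsto D_\alpha$ is monotone, the pointwise (hence the $\limsup_N$) divergence is increasing in $\alpha$, so condition (iv), ranging over $\alpha\in(0,\tfrac12]$, collapses to its hardest point $\alpha=\tfrac12$; combined with the endpoint identity and the symmetry $F(\rho,\sigma)=F(\sigma,\rho)$, this already yields (iv)$\iff$(i) for the sandwiched family, and (iii)$\Rightarrow$(i) by restricting to $\alpha=\tfrac12$. Condition (iii), ranging over $\alpha\in[\tfrac12,1)$, reaches toward the relative-entropy endpoint and does \emph{not} collapse by monotonicity alone; for the Petz family I would instead apply Eq.~\eqref{eq: Petz switch rho sigma}, which relates $\bar{D}_{1-\alpha}(\sigma\|\rho)$ to $\bar{D}_\alpha(\rho\|\sigma)$ with a positive prefactor and a bounded additive term, showing that Petz (iii)$\iff$Petz (iv)$\iff$ the single endpoint. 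I would then pass between the two families using the comparison $\widetilde{D}_\alpha\le\bar{D}_\alpha$ on $(0,1)$ (an Araki--Lieb--Thirring inequality), upgraded to a two-sided estimate on the relevant range with the help of the bounded spectrum of $\sigma_N$, so that the sandwiched (iii) can be routed through the Petz family rather than proved over its full range directly.

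The hard part, and the step I expect to be the genuine obstacle, is closing the loop from (i) to the divergence conditions, i.e.\ upgrading ``$F$ bounded away from $0$'' to a bound that is uniform in $N$. Concretely I would write $\widetilde{D}_\alpha(\rho\|\sigma)=\tfrac{1}{\alpha-1}\log\!\big(\Tr[M^\alpha]/\Tr\rho\big)$ with $M=\sigma^{s}\rho\sigma^{s}$, $s=\tfrac{1-\alpha}{2\alpha}$, and use the elementary lower bound $\Tr[M^\alpha]\ge\Tr[M]\,\norm{M}_\infty^{\alpha-1}$ for $\alpha\in(0,1)$ (which is tight on rank-one $M$); this reduces the required upper bound on $\widetilde{D}_\alpha$ to lower-bounding $\Tr[M]=\Tr[\rho\sigma^{2s}]$ in terms of $F$ and upper-bounding $\norm{M}_\infty\le\norm{\sigma}_\infty^{2s}\norm{\rho}_\infty$. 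The delicate point is that this genuinely requires control of the spectra of \emph{both} operators, not just of $\sigma$: if $\norm{\rho}_\infty$ is allowed to grow one can keep $F$ bounded below while $\widetilde{D}_\alpha(\rho\|\sigma)\to+\infty$ and $\Tr[\sqrt\rho\sqrt\sigma]\to0$, so the overlap certified by $F$ ``hides'' in a diverging eigenvalue. The crux is therefore to organize these operator inequalities so that every constant entering the final bound $\widetilde{D}_\alpha(\rho\|\sigma)\le\Phi(F,\Tr\rho,\Tr\sigma,\alpha)$ depends only on $C_\rho$, $C_\sigma$, and $\alpha$ and never on $\dim\mathcal{H}_N$, which is exactly what the trace hypotheses (interpreted as two-sided spectral control) are there to supply; this is also the quantitative core underlying Lemma~\ref{lemma: bounds for fidelity}.
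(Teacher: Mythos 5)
Your first three paragraphs track the paper's proof closely: the generalized Fuchs--van de Graaf inequalities (Lemma~\ref{lemma: Fuchs-van de Graaf}) for $(i)\iff(ii)$; the endpoint identity tying $F$ to $\widetilde{D}_{1/2}$ plus monotonicity in $\alpha$ to handle $(iv)$ and to get $(iii)\Rightarrow(i)$; the switch identity Eq.~\eqref{eq: Petz switch rho sigma} to pass from the $\alpha\le\frac12$ range to the $\alpha\ge\frac12$ range within the Petz family; and the Petz/sandwiched comparison Eq.~\eqref{eq: Divergences} to transfer between the two families. Up to that point your proposal coincides with the paper's argument.

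The problem is your last paragraph, where you declare $(i)\Rightarrow(iii)$ to be an unresolved ``genuine obstacle'' requiring two-sided spectral control, and you propose a direct estimate on $\Tr[M^\alpha]$ with $M=\sigma^{s}\rho\sigma^{s}$ that you then correctly observe cannot be closed because $\norm{\rho_N}_\infty$ is uncontrolled. As written, your proof therefore ends without completing this implication. But no such estimate is needed: your own third paragraph already contains the correct closing of the loop, which is the paper's. From $(i)$ and the symmetry of $F$ one bounds $\widetilde{D}_{1/2}(\sigma_N\|\rho_N)$ using $\Tr[\sigma_N]\le C_\sigma$; monotonicity extends this to all $\beta\le\frac12$; the \emph{left} inequality of Eq.~\eqref{eq: Divergences} converts it into a bound on $\bar{D}_\beta(\sigma_N\|\rho_N)$ at the cost of an additive term involving only $\log(\Tr[\rho_N]/\Tr[\sigma_N])$, controlled by $C_\rho$ and $C_\sigma$; the switch identity then bounds $\bar{D}_\alpha(\rho_N\|\sigma_N)$ for $\alpha\in[\frac12,1)$ up to another trace-dependent additive term; and $\widetilde{D}_\alpha\le\bar{D}_\alpha$ finishes the sandwiched case. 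Every correction term in that chain depends on traces only, never on operator norms, so the spectral control you invoke (and the bound $\norm{\sigma_N}_\infty\le C_\sigma$ you emphasize) plays no role in the paper's route. Your underlying instinct is not baseless: the endpoint identity is really $F(\rho,\sigma)=\Tr[\rho]\,e^{-\frac12\widetilde{D}_{1/2}(\rho\|\sigma)}$, so taking the endpoint on the pair $(\rho_N\|\sigma_N)$ would indeed require $\Tr[\rho_N]$ bounded \emph{above} (implicit in the application, where $\rho_N$ is a normalized density matrix) — but that is again trace control, and the paper sidesteps it by taking the first endpoint bound on $\widetilde{D}_{1/2}(\sigma_N\|\rho_N)$, where $\Tr[\sigma_N]\le C_\sigma$ is exactly the hypothesis used. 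You should delete the final paragraph and instead carry out the chain you sketched in paragraph three.
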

The assumptions on the boundedness of the sequences $\rho_N$ and $\sigma_N$ are fulfilled in the context of SW-SSB, since $\Tr[\rho_N]=1$ as $\rho_N$ will always be a density matrix. Since $\sigma_N=O_{i_{N_L}}^\dagger O_{j_{N_L}} \rho_N O_{j_{N_L}}^\dagger O_{i_{N_L}}$ we also have $\Tr[\sigma_N] \leq \norm{O}_\infty^4$ by using the cyclicity of the trace, and applying Hölder's inequality of Eq.~(\ref{eq: H-Inequality}).

To prepare for the proof of the equivalence of the different SW-SSB measures, we will need to collect some results on the relation between different measures that ``compare" non-negative operators.
We will start by relating the Petz divergence and the sandwiched Rényi divergence.
\begin{lemma}[Corollary 2.3 in \cite{Iten2016PrettyGM}]
    For all $\alpha \in [0,1]$ and $\rho, \sigma  \in P(\mathcal{H})$, the Petz Divergence and sandwiched Rényi divergence are connected through the following inequalities:
    \begin{equation}
    \alpha \bar{D}_\alpha(\rho||\sigma)+(1-\alpha)(\log\Tr[\rho] -\log \Tr[\sigma])\leq \widetilde{D}_\alpha(\rho||\sigma)\leq \bar{D}_\alpha(\rho||\sigma) \label{eq: Divergences}
    \end{equation}
\end{lemma}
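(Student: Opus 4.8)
The plan is to reduce both inequalities in \eqref{eq: Divergences} to two scalar inequalities between the ``trace functionals'' underlying the two divergences, and then to establish these by the Araki--Lieb--Thirring inequality (for the upper bound) and by complex interpolation (for the lower bound). Writing $s=\frac{1-\alpha}{2\alpha}$, I set
\[
\widetilde{Q}_\alpha=\Tr\big[(\sigma^{s}\rho\,\sigma^{s})^\alpha\big],\qquad
\bar{Q}_\alpha=\Tr\big[\rho^\alpha\sigma^{1-\alpha}\big],
\]
so that $\widetilde{D}_\alpha$ and $\bar{D}_\alpha$ are, up to the prefactor $\tfrac{1}{\alpha-1}$ (which is negative since $\alpha<1$), logarithms of $\widetilde{Q}_\alpha/\Tr[\rho]$ and $\bar{Q}_\alpha/\Tr[\rho]$. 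Because $x\mapsto\frac{1}{\alpha-1}\log x$ is strictly decreasing, a short computation (exponentiating and collecting the $\Tr[\rho]$, $\Tr[\sigma]$ factors) shows that the two bounds in \eqref{eq: Divergences} are equivalent, respectively, to
\[
\text{(A)}\quad \widetilde{Q}_\alpha \ge \bar{Q}_\alpha, \qquad\qquad
\text{(B)}\quad \widetilde{Q}_\alpha \le \bar{Q}_\alpha^{\,\alpha}\,(\Tr[\rho])^{\alpha(1-\alpha)}\,(\Tr[\sigma])^{(1-\alpha)^2}.
\]
I would verify this equivalence first, since it isolates the operator-theoretic content and explains the origin of the correction term $(1-\alpha)(\log\Tr[\rho]-\log\Tr[\sigma])$ as the endpoint normalizations.

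For the upper bound (A), I would invoke the Araki--Lieb--Thirring inequality: for positive semidefinite $A,B$ and $0<r\le 1$, $\Tr[(B^{1/2}AB^{1/2})^{r}]\ge \Tr[A^{r}B^{r}]$. Applying it with $A=\rho$, $B=\sigma^{(1-\alpha)/\alpha}$ and $r=\alpha\in(0,1)$, and using $\sigma^{s}\rho\,\sigma^{s}=B^{1/2}\rho\,B^{1/2}$ together with $B^{\alpha}=\sigma^{1-\alpha}$, yields exactly $\widetilde{Q}_\alpha\ge\bar{Q}_\alpha$; the decreasing monotonicity of the prefactor-logarithm then gives $\widetilde{D}_\alpha\le\bar{D}_\alpha$. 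This direction is essentially immediate once the reduction is in place.

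The lower bound (B) is the crux, and I expect it to be the main obstacle, because Araki--Lieb--Thirring controls $\widetilde{Q}_\alpha$ only from below and the reverse inequality requires a genuinely different tool. My plan is to prove (B) by complex interpolation via the Stein--Hirschman (Hadamard three-lines) theorem. I would write $\widetilde{Q}_\alpha=\norm{\rho^{1/2}\sigma^{s}}_{2\alpha}^{2\alpha}$ as a Schatten quasinorm of $M=\rho^{1/2}\sigma^{s}$, embed the problem into an analytic operator family $M(z)$ on the strip $0\le\operatorname{Re}z\le1$ that interpolates between powers of $\rho$ and $\sigma$ and is arranged so that, at the interior point corresponding to $\alpha$, the relevant Schatten norm reproduces $\widetilde{Q}_\alpha$, while the two boundary lines evaluate to the Petz functional $\bar{Q}_\alpha$ and to a pure $\sigma$-normalization, respectively. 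Bounding $M(z)$ on the two boundary lines and applying the three-lines theorem then produces precisely the weighted product on the right-hand side of (B), with the exponents $\alpha$ and $1-\alpha$ emerging as the interpolation weights. As a consistency check I would first treat the normalized case $\Tr[\rho]=\Tr[\sigma]=1$, where (B) collapses to $\widetilde{Q}_\alpha\le\bar{Q}_\alpha^{\,\alpha}$, and reinstate the trace factors afterwards by homogeneity together with the Hölder bound $\bar{Q}_\alpha\le(\Tr[\rho])^\alpha(\Tr[\sigma])^{1-\alpha}$.

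The remaining technical points are the boundary of the parameter range (the limits $\alpha\to0,1$, and the regime $2\alpha<1$ where $\norm{\cdot}_{2\alpha}$ is only a quasinorm, which needs the quasinorm version of Stein interpolation), and, when $\sigma$ is not invertible, restricting all operators to $\operatorname{supp}\sigma$ so that the negative powers $\sigma^{s}$ are interpreted through the pseudoinverse. Verifying analyticity and the correct boundary evaluations of $M(z)$ carefully enough that the three-lines theorem applies is the fiddly part; the conceptual content lies entirely in the reduction to (A)/(B) and in choosing the interpolating family so that its weights reproduce the exponents in \eqref{eq: Divergences}.
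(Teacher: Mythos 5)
The paper does not actually prove this lemma: it is imported verbatim (as Corollary~2.3 of \cite{Iten2016PrettyGM}) and used as a black box, so there is no in-paper argument to compare yours against; what follows assesses your proposal on its own terms. Your reduction of \eqref{eq: Divergences} to the scalar inequalities (A) and (B) is correct, and you rightly (if tacitly) use the prefactor $\frac{1}{\alpha-1}$ for \emph{both} divergences --- the paper's stated $\frac{1}{1-\alpha}$ in the Petz definition is a sign typo, since with it the right inequality of \eqref{eq: Divergences} already fails for commuting normalized states. The ALT step proving (A), hence $\widetilde{D}_\alpha \le \bar{D}_\alpha$, is complete and standard. Also note that (B) is exactly covariant under $\rho \to \lambda\rho$, $\sigma \to \mu\sigma$, so homogeneity alone reinstates the trace factors; the extra H\"older bound $\bar{Q}_\alpha \le (\Tr[\rho])^\alpha(\Tr[\sigma])^{1-\alpha}$ is unnecessary. (A small slip: for $\alpha \in (0,1)$ one has $s = \frac{1-\alpha}{2\alpha} > 0$, so no negative powers of $\sigma$ occur and no pseudoinverse is needed.)

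The genuine gap is (B), which you correctly identify as the crux but do not prove, and the interpolation scheme you sketch provably cannot reach it as described. Take the natural family $M(z) = \rho^{f(z)}\sigma^{g(z)}$ with $f,g$ affine on the strip, $M(\theta) = \rho^{1/2}\sigma^{s}$ measured in $\norm{\cdot}_{2\alpha}$, the line $\Re z = 1$ reproducing the Petz functional, $\norm{M(1+it)}_2 = \bar{Q}_\alpha^{1/2}$ (forcing $\Re f(1) = \frac{\alpha}{2}$, $\Re g(1) = \frac{1-\alpha}{2}$), and the line $\Re z = 0$ bounded by pure normalizations, $\norm{M(it)}_{p_0} \le (\Tr[\rho])^{\beta}(\Tr[\sigma])^{\gamma} = 1$ with $\beta = \Re f(0)$, $\gamma = \Re g(0)$ and $\frac{1}{p_0} = \beta + \gamma$. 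Affinity gives $\beta+\gamma = \frac{1-\theta\alpha}{2\alpha(1-\theta)}$, and Stein--Hirschman requires $p_0 \ge 1$, i.e., $\theta \le 2 - \frac{1}{\alpha}$; the conclusion is then only $\widetilde{Q}_\alpha \le \bar{Q}_\alpha^{\alpha\theta} \le \bar{Q}_\alpha^{2\alpha-1}$, strictly weaker than (B) for every $\alpha<1$ (vacuous at $\alpha = \frac12$, empty for $\alpha < \frac12$), while with your intended weights $\theta = \alpha$ the constraint becomes $\frac{1+\alpha}{2\alpha} \le 1$, impossible for $\alpha < 1$. Since pure states saturate (B), there is no slack to absorb such a loss, so the claimed boundary structure must be abandoned, not merely "verified carefully". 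For $\alpha \in [\frac12, 1)$ the inequality does follow by an elementary argument you could substitute: by duality $\widetilde{Q}_\alpha^{1/(2\alpha)} = \sup\{|\Tr[Y\rho^{1/2}\sigma^{s}]| : \norm{Y}_{2\alpha/(2\alpha-1)} \le 1\}$, split $\rho^{1/2}\sigma^{s} = \rho^{(1-\alpha)/2}\,\rho^{\alpha/2}\sigma^{(1-\alpha)/2}\,\sigma^{(1-\alpha)^2/(2\alpha)}$, and apply four-factor H\"older with exponents $\frac{2\alpha}{2\alpha-1}, \frac{2}{1-\alpha}, 2, \frac{2\alpha}{(1-\alpha)^2}$ (which sum to one), yielding exactly $\widetilde{Q}_\alpha \le \bar{Q}_\alpha^{\alpha}$ --- the direct generalization of the proof of $F^2 \le F_{pg}$ in the cited source. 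For $\alpha \in (0,\frac12)$ this duality step is unavailable, the target index $2\alpha < 1$ is a quasinorm, and the "quasinorm version of Stein interpolation" you invoke is not an off-the-shelf tool; this regime is genuinely used by the paper (item (iv) of Theorem~\ref{theorem: QuantumInfo correlators}, and the step $(i)\Rightarrow(iii)$ applies the left inequality of \eqref{eq: Divergences} at index $1-\alpha \le \frac12$), so in your proposal it remains unproven and needs a different idea or the citation.
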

We know that for density matrices, the trace distance $T(\rho,\sigma)$ and the fidelity $F(\rho,\sigma)$ are related by the Fuchs-van de Graaf inequalities \cite{watrous_theory_2018, nielsen_quantum_2010}.
\begin{equation}
1-F(\rho,\sigma)\leq T(\rho,\sigma) \leq \sqrt{1-F(\rho, \sigma)^2},
\end{equation}
However, in the present situation, $\sigma$ is of the form $\sigma = O_i^\dagger O_j\rho O_j^\dagger O_i$ and thus not normalized to $\Tr[\sigma]=1$.
Hence we need the following generalization
\begin{lemma}[Generalized Fuchs-van de Graaf inequalities]\label{lemma: Fuchs-van de Graaf}
    Let $\rho, \sigma \in P(\mathcal{H})$ Then:
\begin{equation}
\frac{1}{2}\left(\Tr[\rho]+\Tr[\sigma]\right)-F(\rho,\sigma)\leq T(\rho,\sigma) \leq \sqrt{\frac{\left(\Tr[\rho]+\Tr[\sigma]\right)^2}{4}-F(\rho, \sigma)^2},
\end{equation}
where $T(\rho,\sigma)=\frac{1}{2}||\rho-\sigma||_1$ is the trace distance.
\end{lemma}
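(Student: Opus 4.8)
The plan is to prove the two inequalities separately, in each case reducing the unnormalized statement to a setting where the computation is elementary: the upper bound to an \emph{exact} pure-state identity, and the lower bound to a classical (commuting) inequality obtained from an optimal measurement. Throughout I would lean on three standard facts, each homogeneous in $\rho$ and $\sigma$ and hence surviving the passage to unnormalized $\rho,\sigma\in P(\mathcal{H})$ by rescaling: Uhlmann's theorem $F(\rho,\sigma)=\max_{\ket{\psi},\ket{\phi}}|\sbraket{\psi}{\phi}|$ over purifications with $\norm{\ket{\psi}}^2=\Tr[\rho]$ and $\norm{\ket{\phi}}^2=\Tr[\sigma]$; contractivity of the trace norm under trace-preserving positive maps (in particular the partial trace and any measurement channel), $\norm{\mathcal{E}(X)}_1\le\norm{X}_1$ for Hermitian $X$; and the identity $F(\rho,\sigma)=\min_{\{E_m\}}\sum_m\sqrt{\Tr[E_m\rho]\Tr[E_m\sigma]}$ expressing the fidelity as an \emph{attained} minimum over POVMs.

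For the upper bound I would first settle the pure-state case. Given purifications $\ket{\psi},\ket{\phi}$, the Hermitian operator $\ket{\psi}\sbra{\psi}-\ket{\phi}\sbra{\phi}$ has rank at most two; on its support it has trace $\Tr[\rho]-\Tr[\sigma]$ and determinant $-(\Tr[\rho]\Tr[\sigma]-|\sbraket{\psi}{\phi}|^2)$. Its two eigenvalues $\lambda_\pm$ therefore satisfy $(\lambda_+-\lambda_-)^2=(\lambda_++\lambda_-)^2-4\lambda_+\lambda_-=(\Tr[\rho]+\Tr[\sigma])^2-4|\sbraket{\psi}{\phi}|^2$, and since their product is negative they have opposite signs, so
\begin{equation}
T\big(\ket{\psi}\sbra{\psi},\ket{\phi}\sbra{\phi}\big)=\tfrac12(\lambda_+-\lambda_-)=\sqrt{\tfrac14(\Tr[\rho]+\Tr[\sigma])^2-|\sbraket{\psi}{\phi}|^2}.
\end{equation}
Choosing the purifications optimal in Uhlmann's theorem so that $|\sbraket{\psi}{\phi}|=F(\rho,\sigma)$, and using that the partial trace contracts the trace norm, $T(\rho,\sigma)\le T(\ket{\psi}\sbra{\psi},\ket{\phi}\sbra{\phi})$, yields the claimed upper bound.

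For the lower bound I would reduce to a classical inequality. For nonnegative vectors $p,q$ one has $\tfrac12\big(\sum_m p_m+\sum_m q_m\big)-\sum_m\sqrt{p_mq_m}=\tfrac12\sum_m(\sqrt{p_m}-\sqrt{q_m})^2\le\tfrac12\sum_m|\sqrt{p_m}-\sqrt{q_m}|(\sqrt{p_m}+\sqrt{q_m})=\tfrac12\sum_m|p_m-q_m|$, i.e. $\tfrac12(\sum_m p_m+\sum_m q_m)-F(p,q)\le T(p,q)$ for the classical fidelity $F(p,q)=\sum_m\sqrt{p_mq_m}$ and distance $T(p,q)=\tfrac12\sum_m|p_m-q_m|$. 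Now take the measurement $\{E_m\}$ that attains the minimum in the fidelity identity, and set $p_m=\Tr[E_m\rho]$, $q_m=\Tr[E_m\sigma]$, so that $F(p,q)=F(\rho,\sigma)$, $\sum_m p_m=\Tr[\rho]$ and $\sum_m q_m=\Tr[\sigma]$. Since the associated measurement channel is trace-preserving and positive, $T(p,q)\le T(\rho,\sigma)$, and the classical inequality then gives $\tfrac12(\Tr[\rho]+\Tr[\sigma])-F(\rho,\sigma)\le T(\rho,\sigma)$, the desired lower bound.

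The main obstacle I anticipate is bookkeeping rather than conceptual: all of the quantum-information tools above are usually stated for density matrices, and I must check that each continues to hold for unnormalized $\rho,\sigma$ with $\Tr[\rho]\neq\Tr[\sigma]$. This is ultimately guaranteed by homogeneity—Uhlmann's theorem, the fidelity, and the trace norm are homogeneous of the appropriate degree, so the unnormalized statements follow from the normalized ones by rescaling $\rho\mapsto\rho/\Tr[\rho]$ and $\sigma\mapsto\sigma/\Tr[\sigma]$—but the rescaling must be tracked carefully because the two arguments carry different normalizations, which is exactly what replaces the usual constant $1$ by the symmetric combination $\tfrac12(\Tr[\rho]+\Tr[\sigma])$. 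A secondary point to verify is that the optimal-measurement identity for the fidelity is an attained minimum rather than a mere infimum, so that a genuine measurement with $F(p,q)=F(\rho,\sigma)$ exists; this is standard in finite dimension.
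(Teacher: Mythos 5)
Your proposal is correct. The upper bound follows the paper's own route: Uhlmann purifications with $|\sbraket{\psi}{\phi}|=F(\rho,\sigma)$, contractivity of the trace norm under partial trace, and an explicit evaluation of $\norm{\ket{\psi}\bra{\psi}-\ket{\phi}\bra{\phi}}_1$ on the rank-two support --- you derive the trace/determinant formula by hand where the paper cites the corresponding identity from Watrous, but the argument is the same, and your observation that the negative determinant forces opposite-sign eigenvalues is exactly what makes $\norm{\cdot}_1=|\lambda_+-\lambda_-|$ hold.

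The lower bound is where you genuinely diverge. The paper uses the operator (Powers--St\o rmer-type) inequality $\norm{\rho-\sigma}_1\geq\norm{\sqrt{\rho}-\sqrt{\sigma}}_2^2$ together with $\Tr[\sqrt{\rho}\sqrt{\sigma}]\leq\norm{\sqrt{\rho}\sqrt{\sigma}}_1=F(\rho,\sigma)$, which gives the bound in two lines once that operator inequality is granted. You instead push all the noncommutativity into the Fuchs--Caves characterization $F(\rho,\sigma)=\min_{\{E_m\}}\sum_m\sqrt{\Tr[E_m\rho]\Tr[E_m\sigma]}$ and then run the purely classical estimate $\tfrac12\sum_m(\sqrt{p_m}-\sqrt{q_m})^2\leq\tfrac12\sum_m|p_m-q_m|$ on the optimal measurement statistics, closing with monotonicity of the trace distance under the measurement channel. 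Your classical step is precisely the commutative shadow of the operator inequality the paper invokes, so the two proofs are morally dual: the paper's is shorter given the cited lemma, while yours trades that for the (attained) measured-fidelity identity, whose extension to unnormalized arguments you correctly justify by homogeneity --- the same homogeneity bookkeeping that turns the usual constant $1$ into $\tfrac12(\Tr[\rho]+\Tr[\sigma])$. The only loose ends are cosmetic: the degenerate cases where $\ket{\psi}$ and $\ket{\phi}$ are parallel (zero determinant) or where one of the traces vanishes should be noted as trivially consistent with the claimed formula, but neither affects the validity of the argument.
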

Note that if $\rho,\sigma$ are density matrices, the inequalities reduce to the usual Fuchs-van de Graaf inequalities.
\begin{proof} 
The proof is essentially the same proof as the one for the regular Fuchs-van de Graaf inequality, but keeping track of $\Tr[\sigma]$ and $\Tr[\rho]$. 
The left inequality is based on the fact that $||\rho-\sigma||_1\geq ||\sqrt{\rho}-\sqrt{\sigma}||^2_2$ for $\rho, \sigma \in P(\mathcal{H})$ (see Lemma 3.34 in \cite{watrous_theory_2018}).
Then we have that:
\begin{equation}
    ||\rho-\sigma||_1\geq ||\sqrt{\rho}-\sqrt{\sigma}||_2^2=\Tr[\rho+\sigma-2\sqrt{\rho}\sqrt{\sigma}] =\Tr[\rho]+\Tr[\sigma]-2\Tr[\sqrt{\rho}\sqrt{\sigma}]\geq  1+\Tr[\sigma]-2F(\rho,\sigma),
\end{equation}
where we used that $\Tr[\sqrt{\rho}\sqrt{\sigma}]\leq \Tr[|\sqrt{\rho}\sqrt{\sigma}|]=F(\rho,\sigma)$
For the right inequality, let $\mathcal{X}$ be a Hilbert space with $\dim(\mathcal{X})\geq \dim(\mathcal{H})$.
Since $\rho,\sigma$ are positive semidefinite, due to Uhlmann's theorem (Theorem 3.22 in \cite{watrous_theory_2018}) there exist purifications $\ket{\Psi},\ket{\phi}\in \mathcal{H}\otimes \mathcal{X}$ with $\Tr_\mathcal{X}[\ket{\Psi}\bra{\Psi}]=\rho$, $\Tr_\mathcal{X}[\ket{\phi}\bra{\phi}]=\sigma$ and $F(\rho,\sigma) = |\bra{\Psi}\ket{\phi}|$.
Let $\sket{\widetilde{\Psi}}$, $\sket{\widetilde{\phi}}$ be the unit vectors such that $\ket{\Psi}=\sqrt{\Tr[\sigma]}\sket{\widetilde{\Psi}}$ and $\ket{\phi}=\sqrt{\Tr[\rho]}\sket{\widetilde{\phi}}$.
Then using monotonicity of the trace norm under partial trace we have:
\begin{align}
    &||\rho-\sigma||_1\leq ||\Tr[\rho]\sket{\widetilde{\phi}}\sbra{\widetilde{\phi}}-\Tr[\sigma]\sket{\widetilde{\Psi}}\sbra{\widetilde{\Psi}}||_1=\sqrt{(\Tr[\rho]+\Tr[\sigma])^2-4\Tr[\sigma]\Tr[\rho]|\sbraket{\Psi}{\phi}|^2} \nonumber\\
    &=\sqrt{(\Tr[\rho]+\Tr[\sigma])^2-4F(\rho,\sigma)^2}
\end{align}
Here the first equality is just a restatement of 1.184 from \cite{watrous_theory_2018}.
\end{proof}
\textbf{Proof of Theorem \ref{theorem: QuantumInfo correlators}:}
We can now apply these Lemmas to prove Theorem \ref{theorem: QuantumInfo correlators}.
\begin{proof}
$(i) \iff (ii)$ follows from the generalized Fuchs-van-de Graaf inequalities in Lemma \ref{lemma: Fuchs-van de Graaf}.

$(i) \Rightarrow (iii):$ By symmetry of the fidelity, it follows that $\liminf_{n\to \infty}F(\sigma_n,\rho_n)>0$.
Let $1>\alpha \geq \frac{1}{2}$.
Then $1-\alpha \leq \frac{1}{2}$.
Since $F(\rho_n,\sigma_n)=\exp\left(-\frac{1}{2}\widetilde{D}_{\frac{1}{2}}(\rho_n,\sigma_n)\right)$ we can conclude that $\widetilde{D}_{\frac{1}{2}}(\rho_n,\sigma_n)$ remains bounded from above and it follows by monotonicity of the sandwiched Rényi divergences that $\limsup_{n\to \infty}\widetilde{D}_{1-\alpha}(\sigma_n||\rho_n)\leq \limsup_{n\to \infty}\widetilde{D}_{\frac{1}{2}}(\sigma_n||\rho_n)<\infty$.
Then, from the left inequality of Eq.~\eqref{eq: Divergences} it follows:
\begin{align*}
    &\widetilde{D}_{1-\alpha}(\sigma_n||\rho_n)\geq (1-\alpha) \bar{D}_{1-\alpha}(\sigma_n||\rho_n)+(1-(1-\alpha))(\log\Tr[\rho_n] -\log \Tr[\sigma_n])\\
    &\geq(1-\alpha) \bar{D}_{1-\alpha}(\sigma_n||\rho_n)+\alpha\left( \log C_\rho- \log C_\sigma\right).
\end{align*}
Thus it follows that $\limsup_{n\to \infty} \bar{D}_{1-\alpha}(\sigma_n||\rho_n) <\infty$. \\
$(iii)\Rightarrow (iv)$ By using Eq.~\eqref{eq: Petz switch rho sigma} we find:
\begin{align}
&\bar{D}_\alpha(\rho_n||\sigma_n)=\frac{\alpha}{1-\alpha}\bar{D}_{1-\alpha}(\sigma_n||\rho_n) +\frac{1}{1-\alpha}\left(\log\Tr[\sigma_n]-\log\Tr[\rho_n]\right) \nonumber\\
 &   \leq \frac{\alpha}{1-\alpha}\bar{D}_{1-\alpha}(\sigma_n||\rho_n)+\frac{1}{1-\alpha}\left(\log \frac{C_\sigma}{C_\rho}\right).
\end{align}
Hence $\limsup_{n\to \infty}\bar{D}_\alpha(\rho_n||\sigma_n)<\infty$.
Using the right inequality of Eq.~\eqref{eq: Divergences}:
\begin{equation}
    \limsup_{n\to\infty}\widetilde{D}_\alpha(\rho_n,\sigma_n)\leq \limsup_{n\to\infty}\bar{D}_\alpha(\rho_n,\sigma_n)
\end{equation}
$(iv) \Rightarrow (i):$ Assume $\limsup_{n\to \infty}\bar{D}_\alpha(\rho_n,\sigma_n) <\infty$ for some $\alpha \in [\frac{1}{2},0)$, it then follows again from the right inequality of Eq.~\eqref{eq: Divergences} that $\limsup_{n\to \infty}\widetilde{D}_\alpha(\rho_n,\sigma_n) <\infty$. 
Using monotonicity of $\widetilde{D}_\alpha$ in $\alpha$, it follows that $\widetilde{D}_\frac{1}{2}(\rho_n,\sigma_n)$ is bounded from above as $n\to \infty$ and hence $F(\rho_n,\sigma_n)=\exp\left(-\frac{1}{2}\widetilde{D}_{\frac{1}{2}}(\rho_n,\sigma_n)\right)$ remains bounded from below as $n\to \infty$, i.e., $\liminf_{n\to \infty}F(\rho_n,\sigma_n)>0$.
\end{proof}

\section{Convex combinations of SW-SSB states}\label{sec: convex combinations}
For ordinary symmetry breaking the correlation function is linear in the density matrix and thus convex combinations of symmetry broken states also show symmetry breaking.
Since all choices of correlation functions for SW-SSB are nonlinear, the analogous statement for SW-SSB is not immediately obvious.
In the following, we consider families of density matrices $\rho_{N}$ corresponding to increasing system sizes.
Further, we take sequences of sites $i_{N_L},j_{N_L}$ such that $\mathrm{dist}(i_{N_L},j_{N_L}) \xrightarrow[]{N\to \infty}\infty$.
\begin{corollary}
    Let $\{\rho_{1,N}\}_N, \{\rho_{2,N}\}_N$ be families of density matrices exhibiting SW-SSB in terms of either definition (Rényi-2 or fidelity).
    Then the convex combination $\tau_N = \lambda \rho_{1,N}+(1-\lambda)\rho_{2,N}$ also has SW-SSB (according to the same definition) for every $\lambda \in [0,1]$.  
\end{corollary}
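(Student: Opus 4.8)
The plan is to split the statement into the two inequivalent definitions of SW-SSB, after first disposing of the ingredient common to both, namely the vanishing of the ordinary correlator $C^{w\emptyset}$ of Eq.~\eqref{eq: no R1 LRO}. Since $C^{w\emptyset}(i,j)[\rho]=\Tr[\rho\,O_i^\dagger O_j]$ is \emph{linear} in the density matrix, one has $C^{w\emptyset}(i,j)[\tau_N]=\lambda\,C^{w\emptyset}(i,j)[\rho_{1,N}]+(1-\lambda)\,C^{w\emptyset}(i,j)[\rho_{2,N}]\to 0$ whenever the two summands do, so the weak symmetry stays unbroken for $\tau_N$. All the work lies in the \emph{nonlinear} correlator, and this is where the main obstacle sits: unlike ordinary SSB, whose order parameter is linear in $\rho$ so that convex combinations trivially inherit the order, the SW-SSB correlators are nonlinear, so cross terms mixing $\rho_{1,N}$ and $\rho_{2,N}$ appear and could in principle conspire against positivity.

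For the fidelity definition I would invoke the joint concavity of the Uhlmann fidelity $F(\rho,\sigma)$ in its two arguments~\cite{nielsen_quantum_2010,watrous_theory_2018}. Writing $F_O(i,j)[\rho]=F\big(\rho,\sigma_{ij}(\rho)\big)$ with $\sigma_{ij}(\rho)=O_i^\dagger O_j\,\rho\,O_j^\dagger O_i$, the map $\rho\mapsto\sigma_{ij}(\rho)$ is itself linear, so $\tau_N$ and $\sigma_{ij}(\tau_N)$ are the \emph{same} convex combinations of the pairs $\big(\rho_{a,N},\sigma_{ij}(\rho_{a,N})\big)$. Joint concavity then gives $F_O(i,j)[\tau_N]\ge\lambda\,F_O(i,j)[\rho_{1,N}]+(1-\lambda)\,F_O(i,j)[\rho_{2,N}]$, and taking $\mathrm{dist}(i,j)\to\infty$ yields a strictly positive limit for every $\lambda\in[0,1]$. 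This settles the fidelity case in essentially one line.

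For the Rényi-2 definition the correlator is a ratio of quadratic forms, $C^{sw}(i,j)[\rho]=N_{ij}[\rho]/D[\rho]$ with $N_{ij}[\rho]=\Tr[O_i\rho O_i^\dagger O_j\rho O_j^\dagger]$ and $D[\rho]=\Tr[\rho^2]$, so I would expand both by bilinearity: $N_{ij}[\tau_N]=\sum_{a,b}\lambda_a\lambda_b N_{ab}$ with $N_{ab}=\Tr[(O_i\rho_{a,N}O_i^\dagger)(O_j\rho_{b,N}O_j^\dagger)]$ and likewise for $D_{ab}=\Tr[\rho_{a,N}\rho_{b,N}]$, where $\lambda_1=\lambda$, $\lambda_2=1-\lambda$. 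The decisive observation is that each $O_i\rho_{a,N}O_i^\dagger\ge 0$ is positive semidefinite, so every $N_{ab}\ge 0$ as the trace of a product of two positive operators; the cross terms can thus only \emph{raise} the numerator, giving $N_{ij}[\tau_N]\ge\lambda^2 N_{11}+(1-\lambda)^2 N_{22}$. For the denominator I would apply Cauchy--Schwarz for the Hilbert--Schmidt inner product, $D_{12}=\opbraket{\rho_{1,N}}{\rho_{2,N}}\le\sqrt{D_{11}D_{22}}$, so that $D[\tau_N]\le\big(\lambda\sqrt{D_{11}}+(1-\lambda)\sqrt{D_{22}}\big)^2$. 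Substituting $N_{aa}=c_a\,D_{aa}$ with $c_a=C^{sw}(i,j)[\rho_{a,N}]$ and setting $u=\lambda\sqrt{D_{11}}$, $v=(1-\lambda)\sqrt{D_{22}}$ collapses the bound to $C^{sw}(i,j)[\tau_N]\ge (c_1 u^2+c_2 v^2)/(u+v)^2\ge\tfrac12\min(c_1,c_2)$, using $u^2+v^2\ge\tfrac12(u+v)^2$.

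The main obstacle is exactly the nonlinearity above, sharpened by the fact that for the states of interest both $N_{ij}$ and $D$ individually vanish as the system grows (e.g.\ $\Tr[\rho^2]\sim 1/\dim V_\theta\to 0$ for the MMIS). A crude ``numerator bounded below, denominator bounded above'' estimate therefore fails, and it is essential that the normalizing scales $D_{aa}$ cancel in the ratio, which is precisely what the substitution to $u,v$ accomplishes. Passing to the limit $\mathrm{dist}(i,j)\to\infty$ in the final bound gives $\liminf C^{sw}[\tau_N]\ge\tfrac12\min(c_1,c_2)>0$, which together with the linear argument for $C^{w\emptyset}$ completes the Rényi-2 case and hence the corollary.
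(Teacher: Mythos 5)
Your proof is correct, but both halves take a genuinely different route from the paper's. For the fidelity, the paper does not use joint concavity at all: it passes to the trace-distance formulation via its equivalence theorem (Thm.~\ref{theorem: QuantumInfo correlators}), applies the triangle inequality for $\norm{\cdot}_1$ to the convex combination, and then translates back. Your joint-concavity argument is more direct and yields the quantitative bound $F_O[\tau_N]\ge\lambda F_O[\rho_{1,N}]+(1-\lambda)F_O[\rho_{2,N}]$; the one thing to make explicit is that you need joint concavity for \emph{unnormalized} positive operators (since $\sigma_{ij}(\rho)$ has trace $\ne 1$), which does follow from the strong concavity $F(\sum_ip_i\rho_i,\sum_iq_i\sigma_i)\ge\sum_i\sqrt{p_iq_i}\,F(\rho_i,\sigma_i)$ combined with the homogeneity $F(a\rho,b\sigma)=\sqrt{ab}\,F(\rho,\sigma)$, but deserves a line. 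For the R\'enyi-2 correlator, both arguments share the two key observations (nonnegativity of the cross terms in the numerator, Cauchy--Schwarz for the Hilbert--Schmidt cross term in the denominator), but the paper then orders the purities w.l.o.g., bounds the whole denominator by the larger purity, and keeps only one diagonal term in the numerator, ending with a bound of the form $\lambda^2\,C^{sw}[\rho_{a,N}]$ that degenerates at the endpoints of $[0,1]$ (where the statement is of course trivial). You instead retain both diagonal terms and cancel the scales via $u=\lambda\sqrt{D_{11}}$, $v=(1-\lambda)\sqrt{D_{22}}$, obtaining the uniform bound $C^{sw}[\tau_N]\ge\tfrac12\min(c_1,c_2)$ valid for all $\lambda$; this is slightly sharper and more symmetric. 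You also handle the $C^{w\emptyset}$ condition explicitly by linearity, which the paper leaves implicit.
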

\begin{proof}
    We start by assuming that $\{\rho_{1,N}\}_N, \{\rho_{2,N}\}_N$ have SW-SSB in terms of the fidelity correlator.
    By using the equivalent statement in terms of the trace distance from Thm. \ref{theorem: QuantumInfo correlators} we find:
    \begin{align}
        &\liminf_{N \to \infty}\norm{\tau -O_{i_{N_L}}^\dagger O_{j_{N_L}} \tau_N O_{j_{N_L}}^\dagger O_{i_{N_L}} }_1\\
        &\leq\liminf_{N \to \infty} \frac{1}{2}\left(\lambda\norm{\rho_{1,N}-O_{i_{N_L}}^\dagger O_{j_{N_L}} \rho_{1,N} O_{j_{N_L}}^\dagger O_{i_{N_L}}}_1+(1-\lambda)\norm{\rho_{2,N}-O_{i_{N_L}}^\dagger O_{j_{N_L}} \rho_{2,N} O_{j_{N_L}}^\dagger O_{i_{N_L}}}_1\right) \nonumber \\
        &<\frac{1}{2}\liminf_{N \to \infty}\left(\lambda \Tr[\rho_{1,N}]+\lambda \Tr[O_{i_{N_L}}^\dagger O_{j_{N_L}} \rho_{1,N} O_{j_{N_L}}^\dagger O_{i_{N_L}}] +(1-\lambda) \Tr[\rho_{2,N}]+(1-\lambda) \Tr[O_{i_{N_L}}^\dagger O_{j_{N_L}} \rho_{2,N} O_{j_{N_L}}^\dagger O_{i_{N_L}}]\right)\nonumber \\
        &=\frac{1}{2}\liminf_{N \to \infty}\left(\Tr[\tau_N]+\Tr[O_{i_{N_L}}^\dagger O_{j_{N_L}} \tau_{N} O_{j_{N_L}}^\dagger O_{i_{N_L}}]\right)
    \end{align}
Further using the equivalences in Thm.~\ref{theorem: QuantumInfo correlators}, we note that $\{\tau_N\}_N$ exhibits SW-SSB in terms of the fidelity correlator. 
To show this statement for the Rényi-2 correlator, let us assume that $\{\rho_{1,N}\}_N, \{\rho_{2,N}\}_N$ have SW-SSB in terms of the Rényi-2 correlator. 
We fix $N\in \mathbb{N}$ and assume w.l.o.g. that $\Tr[\rho_{1,N}^2]\leq \Tr[\rho_{2,N}^2]$.
From Cauchy-Schwarz inequality it then follows that $\Tr[\rho_{1,N}\rho_{2,N}]\leq \norm{\rho_{1,N}}_2 \norm{\rho_{2,N}}_2 = \sqrt{\Tr[\rho_{1,N}^2]\Tr[\rho_{2,N}^2]}\leq \Tr[\rho_{2,N}^2]$.
It follows that
\begin{align}
    &\frac{\Tr[\tau_N O_{i_{N_L}}^\dagger O_{j_{N_L}} \tau_N O_{j_{N_L}}^\dagger O_{i_{N_L}}]}{\Tr[\tau_N^2]} =\Bigg(
    \lambda^2  \Tr[\rho_{1,N} O_{i_{N_L}}^\dagger O_{j_{N_L}} \rho_{1,N} O_{j_{N_L}}^\dagger O_{i_{N_L}}]+    \lambda(1-\lambda)  \Tr[\rho_{1,N} O_{i_{N_L}}^\dagger O_{j_{N_L}} \rho_{2,N} O_{j_{N_L}}^\dagger O_{i_{N_L}}] \nonumber \\
    &+\lambda(1-\lambda)  \Tr[\rho_{2,N} O_{i_{N_L}}^\dagger O_{j_{N_L}} \rho_{1,N} O_{j_{N_L}}^\dagger O_{i_{N_L}}]+(1-\lambda)^2  \Tr[\rho_{2,N} O_{i_{N_L}}^\dagger O_{j_{N_L}} \rho_{2,N} O_{j_{N_L}}^\dagger O_{i_{N_L}}]\Bigg)\nonumber \\
&\times \Bigg(\lambda^2\Tr[\rho_{1,N}^2]+2\lambda(1-\lambda)\Tr[\rho_{1,N}\rho_{2,N}]+
    (1-\lambda )^2\Tr[\rho_{2,N}^2 ]\Bigg)^{-1}
\geq \lambda ^2\frac{\Tr[\rho_{2,N} O_{i_{N_L}}^\dagger O_{j_{N_L}} \rho_{2,N} O_{j_{N_L}}^\dagger O_{i_{N_L}}]}{\Tr[\rho_{2,N}^2]}
\end{align}
Here we used the fact that the numerator is composed of traces of a products of positive semidefinite matrices, which are nonnegative. 
Taking the limit $N \to \infty$, the statement follows.

\end{proof}

\section{{Derivations of effective Hamiltonians}}
\label{app:effH}
In this section we will derive the expressions for the various effective Hamiltonians given in the main text. 
\subsection{\texorpdfstring{$\mathbb{Z}_2$}{Lg} symmetric circuit with postselection}\label{subsec:Z2postselec}
We start by deriving the expression in Eq.~\eqref{eq: effective Hamiltonian Z2 postselection} for the effective Hamiltonian of probabilistic postselection.
The quantum operation we consider is given by
\begin{align}
    &\mathcal{E}_{dt}[\rho] = (1-p\ dtN)\rho + \frac{p\ dt\ N}{N}\sum_i \left[(1-s)\left(\pi^+_i\rho\pi^+_i+\pi^-_i\rho\pi^-_i\right) +s\pi^+_i\rho\pi^+_i\right],
    \label{eq: postselection operation appendix}
\end{align}
where $\pi_i^\pm =\frac{1\pm Z_i}{2}$. The action of $\mathcal{E}_{dt}$ on the vectorized density matrix $\opket{\rho}$ is given by
\begin{align}
    &\mathcal{E}_{dt}\opket{\rho} = (1-p\ dt\ N)\opket{\rho}+ p\ dt\sum_i \left[\pi^+_i\otimes\pi^{+*}_i+(1-s)\pi^-_i\otimes \pi^{-*}_i\right]\opket{\rho} \nonumber \\
    &= (1-p\ dt\ N)\opket{\rho}+ \frac{p\ dt }{4}\sum_i \left[(1+Z_i^f)(1+Z_i^b)+(1-s)(1-Z_i^f)(1-Z_i^b)\right]\opket{\rho} \nonumber \\
    &= (1-p\ dt\ N)\opket{\rho}+ \frac{p\ dt }{4}\sum_i \left[2-s+(2-s)Z^f_iZ^b_i+s(Z^f_i+Z^b_i)\right]\opket{\rho} \nonumber\\
    &= \opket{\rho}- \frac{p\ dt }{4}\sum_i \left[2+s+(s-2)Z^f_iZ^b_i-s(Z^f_i+Z^b_i)\right]\opket{\rho} \nonumber\\
    &= \bigg[1- dt\underbrace{\left(\frac{p (s-2)}{4}\sum_i \left(\frac{s+2}{s-2}+Z^f_iZ^b_i\right)-\frac{ps}{4}\sum_i(Z^f_i+Z^b_i)\right)}_{P_{\mathbb{Z}_2}^{\mathrm{meas}}}\bigg] \opket{\rho}.
\end{align}
This is the form of $P_{\mathbb{Z}_2}$ given in Eq.~\eqref{eq: effective Hamiltonian Z2 postselection}.
$P_{\mathbb{Z}_2}$ commutes with $Z_i^fZ_i^b$ for every $i$ and thus has a $\mathbb{Z}_2^N$-symmetry.
The initial state is $\rho_0=\ket{\uparrow \ldots \uparrow}\bra{\uparrow \ldots \uparrow}$, which has a positive parity $Z_i^fZ_i^b=+1$ across each $f-b$ rung (see Fig.~\ref{fig: doubled Hspace}), and due to the $\mathbb{Z}_2^N$-symmetry, the state remains in that subspace of states with $Z_i^fZ_i^b=+1$ for all sites $i$, and for all times.
Hence we can, as in the derivation of Eq.~\eqref{eq: P for Z2 base model}, restrict the analysis to the reduced Hilbert space $\widetilde{\mathcal{H}}=\mathrm{span}\{\bigotimes_i\opket{\widetilde{p}_i}, \; p_i \in \{\uparrow,\downarrow\}\}$, where we have defined
\begin{equation}
|\widetilde{\uparrow}\rangle_i=\opket{{\centering \uparrow}}_{i,f}\otimes\opket{{\centering \uparrow}}_{i,b},\;\;\;|\widetilde{\downarrow} \rangle_i=\opket{{\centering \downarrow}}_{i,f}\otimes\opket{{\centering \downarrow}}_{i,b}.
\end{equation}
On this subspace $Z^f_i+Z_i^b$ acts as $2\widetilde{Z}_i$ and thus $P_{\mathbb{Z}_2}$ acts as
\begin{equation}
    P_{\mathbb{Z}_2}^{\mathrm{meas}}=\frac{psN}{2}+\frac{ps}{2}\sum_i\widetilde{Z}_i.
\end{equation}
Combining this with the effective Hamiltonian $P_{\mathbb{Z}_2}$ of the Brownian dynamics  (i.e., Eq.~\eqref{eq: P for Z2 base model}) we find that $P_{\mathbb{Z}_2}^{\mathrm{tot}}=P_{\mathbb{Z}_2}+P_{\mathbb{Z}_2}^{\mathrm{meas}}$ is given by:
\begin{align}
     P_{\mathbb{Z_2}}^{\mathrm{tot}}=2J\sum_{\langle i,j \rangle}[1-\widetilde{X}_j\widetilde{X}_i]+\frac{ps}{2}\sum_i\widetilde{Z}_i+\frac{psN}{2},
\end{align}
which is the Hamiltonian of the transverse field Ising model as given in Eq.~\eqref{eq: TFI}.
\subsection{\texorpdfstring{$S_3$}{Lg} symmetric circuit with postselection}
We will start by deriving the effective Hamiltonian for the Brownian circuit without postselection.
We will then add the quantum operation Eq.~\eqref{eq: postselection operation S3} and implement the probabilistic postselection protocol.
Starting from the Potts Hamiltonian with random Brownian couplings on a lattice from Eq.~\eqref{eq: Potts}
\begin{equation}
H(t)=\sum_iU_i(t)\left(\tau_i+\tau^\dagger_i\right)+\sum_{\langle i,j\rangle}J_{ij}(t)\left(\sigma_i^\dagger\sigma_j+\sigma_i\sigma_j^\dagger\right), 
\label{eq:Pottsapp}
\end{equation}
and using the formula for the effective Hamiltonian $P_{S_3}$ of a Brownian circuit from Eq.~\eqref{eq: Brownian formula} we find:
\begin{align}
    P_{S_3}&=U\sum_i \left(\tau_i^f+\tau_i^{f\dagger}-\tau_i^b-\tau_i^{\dagger b}\right)^2+J\sum_{\langle i,j \rangle}\left(\sigma_{i}^{f\dagger}\sigma_{j}^{ f}+ \sigma_{i}^{f}\sigma_{j}^{f\dagger}-\sigma_{i}^{\dagger b}\sigma_{j}^{b}- \sigma_{i}^{b}\sigma_{j}^{\dagger b}\right)^2 \nonumber \\
    &= U\sum_i\left[\left( \tau_i^f +\tau_i^{f\dagger} +\tau_i^b +\tau_i^{b\dagger}+4\right) -2\left(\tau_i^f +\tau_i^{f\dagger}\right)\left(\tau_i^b +\tau_i^{b\dagger}\right)\right] \nonumber \\
    &+ J\sum_{\langle i,j \rangle}\left[\left(\sigma_{i}^{f\dagger}\sigma_{j}^{ f}+ \sigma_{i}^{f}\sigma_{j}^{f\dagger} + \sigma_{i}^{b\dagger}\sigma_{j}^{ b}+ \sigma_{i}^{b}\sigma_{j}^{b\dagger} +4\right)-2\left(\sigma_{i}^{f\dagger}\sigma_{j}^{ f}+ \sigma_{i}^{f}\sigma_{j}^{f\dagger}\right) \left(\sigma_{i}^{b\dagger}\sigma_{j}^{b}+ \sigma_{i}^{b}\sigma_{j}^{b\dagger}\right)\right] \nonumber \\
    &=H^f_{\mathrm{Potts}}(U,J)+H^b_{\mathrm{Potts}}(U,J)-2U\sum_i\left(\tau_i^f +\tau_i^{f\dagger}\right)\left(\tau_i^b +\tau_i^{b\dagger}\right)-2J\sum_{\langle i,j \rangle}\left(\sigma_{i}^{f\dagger}\sigma_{j}^{ f}+ \sigma_{i}^{f}\sigma_{j}^{f\dagger}\right) \left(\sigma_{i}^{b\dagger}\sigma_{j}^{b}+ \sigma_{i}^{b}\sigma_{j}^{b\dagger}\right) \nonumber \\
    &+4N(U+J)-4J,\label{eq:PS3}
\end{align}
where $H^{f/b}_{\mathrm{Potts}}(U,J)$ are Potts Hamiltonians of the form Eq.~(\ref{eq:Pottsapp}) with fixed coupling constants $U,J$ acting only on the froward/backward copy.
Here, we have repeatedly used the fact that $\tau^\dagger_i = \tau_i^2$, $\tau_i^3=1$, $\sigma^\dagger_i = \sigma_i^2$, and $\sigma_i^3=1$.
Let us now consider the action of the quantum operation
\begin{align}
    &\mathcal{E}_{dt}[\rho] = (1-p\ dt\ N)\rho + \frac{p\ dt\ N}{N} \sum_i \left[(1-s)\left(\pi^0_i\rho\pi^0_i+\pi^1_i\rho\pi^1_i\right) +s\pi^0_i\rho\pi^0_i\right]
\end{align}
on the vectorized density matrix $\opket{\rho}$.
Here, $\pi_i^1=Q_i^2$ and $\pi_0=1-Q_i^2$, where $Q_i$ is defined in Eq.~(\ref{eq:Qidefn}).
We find, similar to Eq.~(\ref{eq: postselection operation appendix}):
\begin{align}
    \mathcal{E}_{dt}\opket{\rho} &= (1-p\ dt\ N)\opket{\rho} + p\ dt\sum_i \left[\pi^0_i\otimes\pi^0_i+(1-s)\pi^1_i\otimes\pi^1_i\right]\opket{\rho} \nonumber \\
    &=(1-p\ dt\ N)\opket{\rho} + p\ dt \sum_i \left[1-(Q_i^{f})^2-(Q_i^{b})^2+(2-s) (Q_i^{f} Q_i^{b})^2\right]\opket{\rho} \nonumber \\
    &=\bigg(1-dt \ \underbrace{p\sum_i\left[(Q_i^{f})^2 + (Q_i^{b})^2 +(s-2)(Q_i^{f}Q_i^{b})^2\right]}_{P_{S_3}^{\mathrm{meas}}} \bigg)\opket{\rho}.
\end{align}
$P_{S_3}^{\mathrm{meas}}$ can be written in terms of the $\tau_i$ operators by substituting $Q_i=\frac{i}{\sqrt{3}}\left(\tau_i^\dagger -\tau_i\right)$ and using that $\tau_i^2=\tau_i^\dagger$ and $\tau_i^3=\mathbb{1}$:
\begin{align}
  P_{S_3}^{\mathrm{meas}} &=p\sum_i\left[(Q_i^{f})^2 + (Q_i^{b})^2 +(s-2)(Q_i^{f}Q_i^{b})^2\right] \nonumber \\
  &= p\sum_i \Bigg[ \frac{-1}{3}\left(\tau_i^{f}+\tau_i^{f \dagger} +\tau_i^{b}+\tau_i^{b \dagger}-4\right)
  +\frac{s-2}{9}\left(\tau_i^{f}+\tau_i^{f \dagger}-2\right)\left(\tau_i^{b}+\tau_i^{b \dagger}-2\right)\Bigg]
\label{eq:PS3measexpr}
\end{align}
The total effective Hamiltonian of Eq.~\eqref{eq: Potts-2copy} is then (up to an overall constant)  $P_{S_3} + P_{S_3}^{\mathrm{meas}}$, where $P_{S_3}$ and $P_{S_3}^{\mathrm{meas}}$ are given by Eqs.~(\ref{eq:PS3}) and (\ref{eq:PS3measexpr}), respectively. 
\subsection{\texorpdfstring{$\mathbb{Z}_2$}{Lg} symmetric Lindbladian model}
Let us now derive the effective Hamiltonian Eq.~\eqref{eq: total spin hamiltonian}.
We start with the Hamiltonian part.
We extend the effective Hamiltonian $\mathbb{Z}_2$ for the Brownian from Eq.~\eqref{eq: P for Z2 base model} from a nearest neighbor to one with an all-to-all coupling:
\begin{align}
    P_{\mathbb{Z}_2} &= 2J\sum_{i<j}(1-\widetilde{X}_i\widetilde{X}_j)
    =J\sum_{i,j}(1-\widetilde{X}_i\widetilde{X}_j)-J\sum_i(1-\widetilde{X}_i\widetilde{X}_i)\nonumber \\
    &=J N^2-\left(\sum_i\widetilde{X}_i\right)^2=J N^2-4\widetilde{S}^x\widetilde{S}^x.
\end{align}
where we have defined the total spin operators as $\widetilde{S}^{x,y,z}=\sum_i\widetilde{S}_i^{x,y,z}$ and $\widetilde{S}_i^x=\frac{1}{2}\widetilde{X}_i$ (and similarly for $\widetilde{S}_i^{y,z}$).
The channel as discussed in Sec.~\ref{subsec:Lindbladexample} is given by:
\begin{align}
    &\mathcal{E}_{dt}[\rho]=(1-p\frac{dt\ N(N-1)}{2})\rho +p\ dt\frac{\ N(N-1)}{2} \frac{1}{\frac{\ N(N-1)}{2}}\sum_{i<j}[\pi^+_i\pi^+_j\rho\pi^+_j\pi^+_i+\pi^+_i\pi^-_j\rho\pi^-_j\pi^+_i
    +\pi^-_i\pi^+_j\rho\pi^+_j\pi^-_i \nonumber\\
    &+X_iX_j\pi^-_i\pi^-_j\rho\pi^-_j\pi^-_iX_jX_i],  \label{: channel adaptive}
\end{align}
where $\pi^{\pm}_{i}=\frac{1\pm Z_i}{2}$.
Without the $X$ operators, the above channel would be one where a measurement of the $Z_i$ and $Z_j$ operators are performed for a random pair of $i\neq j$ (with a probability of $\frac{ p\ dt\ N(N-1)}{2}$).
By including the $X$ operators, we incorporate classical feedback by flipping both spins only in the case where both measurement outcomes correspond to $\downarrow$. 
We can then directly compute the effective Hamiltonian in the doubled Hilbert space to be:
\begin{align}
    P_{\mathbb{Z}_2}^{\mathrm{meas}} &= -\lim_{dt\to 0}\frac{\mathcal{E}_{dt}-1}{dt}=
    p\frac{ N(N-1)}{2} -p \sum_{i<j}[\pi^+_i\pi^+_j\otimes\pi^+_i\pi^+_j \nonumber \\
    &+\pi^+_i\pi^-_j\otimes\pi^+_i\pi^-_j
    +\pi^-_i\pi^+_j\otimes\pi^-_i\pi^+_j +X_iX_j\pi^-_i\pi^-_j\otimes X_iX_j\pi^-_i\pi^-_j].
\end{align}
The last term simplifies to
\begin{align}
    X_i^fX_j^f\pi^{-\ f}_i\pi^{-\ f}_j X_i^bX^b_j\pi^{-\ b}_i\pi^{-\ b}_j=S_i^{+ \ f}S_j^{+ \ f}S_i^{+ \ b}S_j^{+ \ b}.
\end{align}
As in the $Z_2$ case in Sec.~\ref{subsec:Z2postselec}, the effective Hamiltonian commutes with the operators $Z_i^fZ_i^b$ for every $i$, and since the initial state is an eigenstate of $Z_i^fZ_i^b$ for every $i$, the dynamics will be confined to the reduced Hilbert space $\widetilde{\mathcal{H}}=\mathrm{span}\{\bigotimes_i\opket{\widetilde{p}_i}, \; p_i \in \{\uparrow,\downarrow\}\}$ for all times.
We have defined $\opket{\widetilde{\uparrow}}_i=\ket{\uparrow}_{i,f}\otimes\ket{\uparrow}_{i,b}$ and $\opket{\widetilde{\downarrow}} _i=\ket{\downarrow}_{i,f}\otimes\ket{\downarrow}_{i,b}$.
Here, $S_i^{+ \ f}S_i^{+ \ b}=\widetilde{S}_i^+$ and ${\pi}_i^{\pm\ f}{\pi}_i^{\pm\ b}=\widetilde{\pi}_i^\pm$.
Taking everything together we find:
\begin{align}
    P_{\mathbb{Z}_2}^{\mathrm{meas}} &= p\frac{ N(N-1)}{2} -p \sum_{i<j}\left[\widetilde{\pi}_i^+\widetilde{\pi}_j^++\widetilde{\pi}_i^-\widetilde{\pi}_j^++\widetilde{\pi}_i^+\widetilde{\pi}_j^-+\widetilde{S}_i^+\widetilde{S}_j^+\right] \nonumber \\
    &=p\frac{ N(N-1)}{2} -\frac{p}{4} \sum_{i<j}\left[3+\widetilde{Z}_i+\widetilde{Z}_j-\widetilde{Z}_i\widetilde{Z}_j+4\widetilde{S}_i^+\widetilde{S}_j^+\right] \nonumber \\
    &=p\frac{ N(N-1)}{2} -\frac{p}{8} \sum_{i,j}\left[3+\widetilde{Z}_i+\widetilde{Z}_j-\widetilde{Z}_i\widetilde{Z}_j+4\widetilde{S}_i^+\widetilde{S}_j^+\right]+\frac{p}{8} \sum_{i}\left[2+2\widetilde{Z}_i\right] \nonumber \\
    &=-\frac{p}{2}\left(N\widetilde{S}^z-\widetilde{S}^z\widetilde{S}^z+\widetilde{S}^+\widetilde{S}^+-\frac{N^2}{4}\right) -\frac{pN}{4}+\frac{p}{2}\widetilde{S}^z, \label{eq: Lindblad measurement Hamiltonian}
\end{align}
where we have defined $\widetilde{S}^z = \sum_j{\widetilde{Z}_j}$.
From the second to the third line we replaced the sum over pairs $\sum_{i<j}$ by a sum $\sum_{i,j}$ at the expense of a factor of $\frac{1}{2}$ to avoid double counting and subtracted the diagonal where $i=j$.
Since the dynamics are Lindbladian, we can find the jump operators. 
Given a quantum channel of the form $\mathcal{E}_{dt}[\rho]=(1-dt\cdot \gamma)\rho +\gamma\cdot dt\;\mathcal{E}[\rho]$, where $\mathcal{E}[\rho]=\sum_iK_i\rho K_i^\dagger$ is another quantum channel, we can express this as:
\begin{align}
    \mathcal{E}_{dt}[\rho]=\exp\left(-dt \;\gamma\left(1-\sum_iK_i \rho K_i^\dagger\right)\right)=\exp\left(-dt \left(-\sum_i \gamma\left[K_i \rho K_i^\dagger -\frac{1}{2}\{K_i^\dagger K_i,\rho\}\right]\right)\right) +\mathcal{O}(dt^2), \label{eq: Kraus to Lindblad}
\end{align}
where we used that $1=\sum_iK_i^\dagger K_i$ since $\mathcal{E}$ is trace preserving. 
Comparing Eq.~\eqref{eq: Kraus to Lindblad} with Eq.~(\ref{eq: Lindbladian}), we see that the Kraus operators act as Lindblad jump operators.
Hence by using Eq.~\eqref{: channel adaptive}, the jump operators are $\pi^+_i\pi_j^+$, $\pi^+_i\pi_j^-$, $\pi^-_i\pi_j^+$ and $X_iX_j\pi^-_i\pi_j^-$. We also have $\gamma = p\frac{N(N-1)}{2}$.

In this appendix, we derive results on the SW-SSB of MMIS for on-site symmetries that form a group $G$ and have exponentially growing scalar symmetry sectors.
We also provide the proof for our formula Eq.~\eqref{eq: formula R2 maximally mixed} for the Rényi-2 correlators of MMIS.
\section{Fidelity correlator for \texorpdfstring{$\mathbb{Z}_2$}{Lg} symmetric circuits} \label{app:fidelity}
\subsection{Fidelity as series of Rényi correlators}
We will sketch an analytical argument for why we expect the fidelity correlator to behave similarly to the Rényi correlators. 
We consider the $\mathbb{Z}_2$ symmetric circuits and define $\rho(t\to \infty) = \lim_{t \to \infty}\mathbb{E}[\rho(t)]$, where $\mathbb{E}[\ldots]$ denotes the average over the circuit randomness.
We also pick the initial state $\ket{\uparrow \ldots \uparrow}$.
Let $\rho^\text{eq}=\frac{\rho(t\to\infty)}{\Tr[\rho(t\to \infty)]}$.
The fidelity is then given by:
\begin{align}
    F_X(i,j)[\rho^\text{eq}]=\Tr[\sqrt{\sqrt{\rho^\text{eq}}X_iX_j\rho^\text{eq}X_jX_i\sqrt{\rho^\text{eq}}}]
    =\frac{\Tr[\sqrt{\sqrt{\rho(t\to \infty))}X_iX_j\rho(t\to \infty)X_jX_i\sqrt{\rho(t\to \infty)}}]}{\Tr[\rho(t\to \infty)]}. \label{eq: fidelity renormalized}
\end{align}
We now express the fidelity in terms of quantities that can be computed in terms of the Ising ground state, which has the form of the vectorized density matrix $\opket{\rho(t \rightarrow \infty)}$.
We will use the fact that $\opket{\rho(t\to \infty)}$ lies in the subspace $\widetilde{\mathcal{H}}$ of the doubled Hilbert space $\mathcal{H}_f\otimes \mathcal{H}_b$, as discussed in Sec.~\ref{subsec:abelianZ2}.
To do so, we first focus on the denominator of Eq.~(\ref{eq: fidelity renormalized}) $\Tr[\rho(t\to \infty)]$ by evaluating it in the $Z$-basis as
\begin{align}
    \Tr[\rho(t\to\infty)] = \sum_{i}\rho(t\to\infty)_{\alpha\alpha}=\lim_{t\to \infty}\sum_\alpha\bra{\alpha_f}\otimes\bra{\alpha_b}e^{-tP_{\mathbb{Z_2}}^{\mathrm{tot}}}\opket{\rho_0}=\lim_{t\to \infty}\sum_{\widetilde{\alpha}\in\{\widetilde{\uparrow},\widetilde{\downarrow}\}^N} \opbra{\widetilde{\alpha_1}\ldots \widetilde{\alpha_N} }e^{-t P_{\mathbb{Z_2}}^{\mathrm{tot}}}\opket{\rho_0},
\end{align}
where in the last equality we have expressed the sum in terms of the basis states of the reduced Hilbert space $\widetilde{\mathcal{H}}$, introduced above Eq.~\eqref{eq: P for Z2 base model} $\opket{\widetilde{\alpha}}=\ket{\alpha_f}\otimes \ket{\alpha_b}$.
Since the initial state is $\opket{\rho_0}=\opket{\widetilde{\uparrow}\ldots \widetilde{\uparrow}}$, we can obtain  $\opket{\widetilde{\alpha_1}\ldots \widetilde{\alpha_N} }$ from $\opket{\rho_0}$ by applying a spin-flip to all sites where the local state is $\opket{\widetilde{\downarrow}}$, i.e., $\opket{\widetilde{\alpha_1}\ldots \widetilde{\alpha_N} }=\prod_{\widetilde{\alpha_j}=\widetilde{\downarrow}}\widetilde{X}_{j}\opket{\rho_0}$.
Since the initial state lies in the sector of  fixed positive parity, and the dynamics are strongly $\mathbb{Z}_2$ symmetric, the overlap of $e^{-tP_{\mathbb{Z_2}}^{\mathrm{tot}}}\opket{\rho_0}$ with all states containing an odd number of $\widetilde{\downarrow}$ vanishes. 
$\prod_{\widetilde{\alpha_j}=\widetilde{\downarrow}}\widetilde{X}_{j}$ contains an even number of $\widetilde{X}$ operators, the product of which in turn commute with $P_{\mathbb{Z_2}}^{\mathrm{tot}}$. 
We therefore find that
\begin{align}
    &\rho_{\alpha\alpha} = \lim_{t \to \infty}\opbra{\widetilde{\alpha_1}\ldots \widetilde{\alpha_N} }e^{-tP_{\mathbb{Z_2}}^{\mathrm{tot}}}\opket{\rho_0} =\lim_{t \to \infty}\opbra{\rho_0}\prod_{\widetilde{\alpha_j}=\widetilde{\downarrow}}\widetilde{X}_{j}e^{-tP_{\mathbb{Z_2}}^{\mathrm{tot}}}\opket{\rho_0} \nonumber\\
    =&\lim_{t \to \infty}\opbra{\rho_0} e^{-\frac{t}{2}P_{\mathbb{Z_2}}^{\mathrm{tot}}}\prod_{\widetilde{\alpha_j}=\widetilde{\downarrow}}\widetilde{X}_{j}e^{-\frac{t}{2}P_{\mathbb{Z_2}}^{\mathrm{tot}}}\opket{\rho_0}=\opbra{\rho(t\to\infty)}\prod_{\widetilde{\alpha_j}=\widetilde{\downarrow}}\widetilde{X}_{j}\opket{\rho(t\to\infty)}=:C^{\mathrm{Ising}}_{\{s_1,\ldots s_n\}}. \label{eq: definition of C Ising}
\end{align}
Since $\opket{\rho(t\to \infty)}$ is the unnormalized groundstate of the Transverse Field Ising Model (TFIM), $C^{\mathrm{Ising}}_{\{s_1,\ldots s_n\}}$ is an $n$-point correlation function of the TFIM ground state where $s_1, \ldots s_n$ label the sites where a spin was flipped by applying the $\widetilde{X}_j$ operator.
Hence $\Tr[\rho(t\to \infty)]$ corresponds to a sum of correlation functions of the TFIM.
We now turn to the expression of the numerator in Eq.~(\ref{eq: fidelity renormalized}).
Written in terms of density matrices, $\widetilde{\mathcal{H}}$ corresponds to matrices that are diagonal in the $Z$-basis.
Hence $X_iX_j\rho(t\to \infty)X_jX_i$ is also diagonal in this basis, and we can write
\begin{align}
    \Tr\left[\sqrt{\sqrt{\rho(t\to\infty)} X_iX_j \rho(t\to\infty) X_jX_i\sqrt{\rho(t\to\infty)}}\right]=\sum_\alpha\sqrt{\rho(t\to\infty)_{\alpha\alpha}} \sqrt{\left[X_iX_j\rho(t\to\infty)X_j X_i\right]_{\alpha\alpha}}. \label{eq: expression square root}
\end{align}
As seen from Eq.~\eqref{eq: definition of C Ising}, $\rho(t\to\infty)_{\alpha\alpha}=C_{s}^{\mathrm{Ising}}$ for an appropriate set of sites $s=\{s_1, \ldots ,s_n\}$.
Likewise
\begin{align}
    \left[X_iX_j\rho(t\to\infty)X_j X_i\right]_{\alpha\alpha}=\lim_{t \to \infty}\opbra{\widetilde{\alpha_1}\ldots \widetilde{\alpha_N} }\widetilde{X}_i\widetilde{X}_je^{-tP_{\mathbb{Z_2}}^{\mathrm{tot}}}\opket{\rho_0}=C^{\mathrm{Ising}}_{s \widetilde{\cup}\{i,j\}},
\end{align}
where $s \widetilde{\cup}  \{i\}$ ($j$ respectively) means $s \cup  \{i\}$  if $i \not \in s$ and $s\setminus \{i\}$ if $i \in s$ (since in this case the two $\widetilde{X}_i$ operators multiply to the identity). 
Combining Eqs.~\eqref{eq: fidelity renormalized}-\eqref{eq: expression square root}, the fidelity correlator from Eq.~\eqref{eq: fidelity renormalized} is found to be:
\begin{align}
    F_X(i,j)=\frac{\sum_{s \in \mathcal{P}([N])}\sqrt{C^{\mathrm{Ising}}_{s} C^{\mathrm{Ising}}_{s \widetilde{\cup}  \{i,j\}}}}{\sum_{s \in \mathcal{P}([N])}C^{\mathrm{Ising}}_{s}}, \label{eq: fidelity as npoint functions}
\end{align}
where $\mathcal{P}([N])$ denotes all subsets $s$ of $[N]=\{1,2,\ldots N\}$.
The importance in the above expression lies in the fact that the fidelity can be fully expressed in terms of correlation functions in the groundstate of the transverse field Ising model.
Hence, we expect the same onset of SW-SSB from the fidelity $F_X$ and from the Rényi-2 correlator. 
\subsection{Simulation data}
In this section we provide additional data for the fidelity correlator of the $\mathbb{Z}_2$ symmetric circuit both in the postselection model and in the adaptive model studied in Secs.~\ref{subsec:abelianZ2} and \ref{subsec:Lindbladexample} respectively.
For models with local interactions, the Rényi-2 correlator can be evaluated efficiently using tensor network methods in 1-d or analytically in the case of the postselection model in Sec.~\ref{subsec:abelianZ2} by mapping it to the transverse field Ising model.
For the fidelity correlator, these methods are not available. 
The fidelity correlator for the average steady-state density matrix is defined in Eq.~\eqref{eq: fidelity renormalized}. 
To arrive at the plot, the transverse field Ising model was diagonalized and the fidelity was calculated using Eq.~\eqref{eq: expression square root}, using the fact that the vectorized density matrix $\opket{\rho(t \rightarrow \infty)}$ is the ground state of the transverse field Ising model Eq.~\eqref{eq: TFI}.
For the postselection model, the data depicted in Fig.~\ref{fig: Z2 fidelity}b was further used to train the neural network that was able to estimate the fidelity for larger system sizes in Sec.~\ref{subsec:abelianZ2}.
For the Lindbladian model involving the feedback and measurement protocol, the fidelity plotted in Fig.~\ref{fig: Z2 fidelity} shows no signs of a phase transition as the measurement rate is increased.
This is consistent with the behavior of the Rényi-2 correlator as shown in Fig.~\ref{fig: adaptive Rényi} in the main text.

\begin{figure*}
    \centering
    \includegraphics[width=1\linewidth]{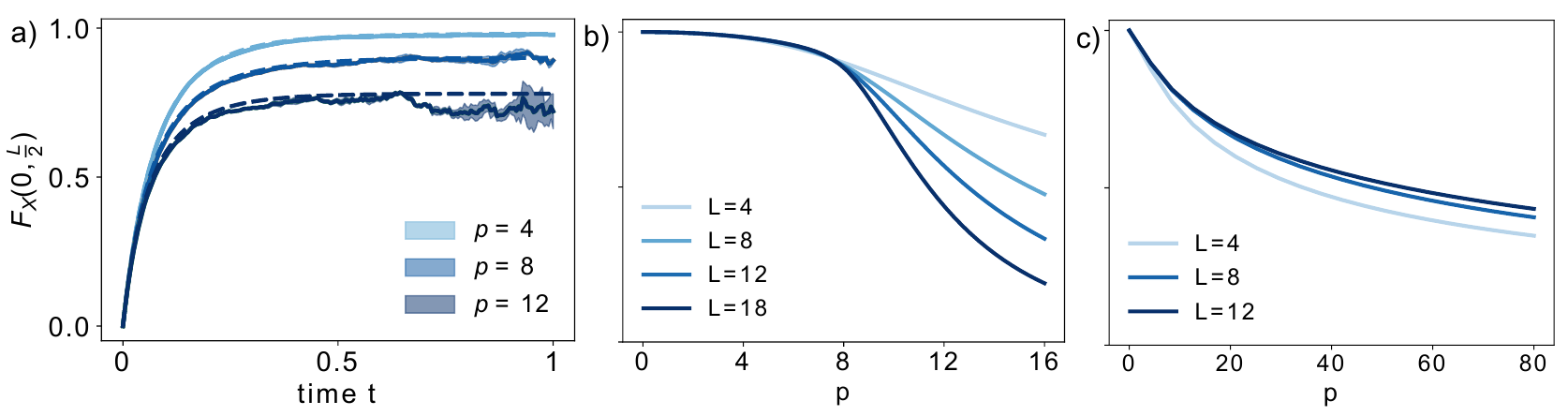}
    \caption{\textbf{Fidelity for the $\mathbb{Z}_2$ circuit:} \textbf{a):} Simulated dynamics for the fidelity in the $\mathbb{Z}_2$-circuit with postselection for $J=U=1$, $N=4$ and $s=0.5$. \textbf{b)} Steady-state value of the fidelity from exact diagonalization of the transverse field Ising model with the same parameters. \textbf{c)} Fidelity evaluated in the steady state of the adaptive Lindblad dynamics with $J=U=1$.}
    \label{fig: Z2 fidelity}
\end{figure*}

\twocolumngrid

\end{document}